\let \Sqcup\relax
\address{%
$^{1}$ \quad Universidade Federal de Minas Gerais, Brazil; \\
$^{2}$ \quad \'{E}cole Polytechnique, France; \\ 
$^{3}$ \quad CNRS, France; \\ 
$^{4}$ \quad AIST, Japan; \\ 
$^{5}$ \quad INRIA, France;}
\abstract{
In the inference attacks studied in Quantitative Information Flow (QIF), the attacker typically tries to interfere with the system in the attempt to increase its leakage of secret information. The defender, on the other hand, typically tries to decrease leakage by introducing some controlled noise. This noise introduction can be modeled as a type of protocol composition, i.e., a probabilistic choice among different protocols, and its effect on the amount of leakage depends heavily on whether or not this choice is visible to the attacker. In this work, we consider operators for modeling visible and hidden choice in protocol composition, and we study their algebraic properties. We then formalize the interplay between defender and attacker in a game-theoretic framework adapted to the specific issues of QIF, where the payoff is information leakage. We consider various kinds of leakage games, depending on whether players act simultaneously or sequentially, and on whether or not the choices of the defender are visible to the attacker. In the case of sequential games, the choice of the second player is generally a function of the choice of the first player, and his/her probabilistic choice can be either over the possible functions (mixed strategy) or it can be on the result of the function (behavioral strategy). We show that when the attacker moves first in a sequential game with a hidden choice, then behavioral strategies are more advantageous for the defender than mixed strategies. This contrasts with the standard game theory, where the two types of strategies are equivalent. Finally, we establish a hierarchy of these games in terms of their information leakage and provide methods for finding optimal strategies (at the points of equilibrium) for both attacker and defender in the various cases.
}
\begin{document}
\newcommand{\review}[1]{#1}
\newif\ifcommentson\commentsonfalse
\def\mywidth{.9}
\def\mywidthRep{.8}
\ifcommentson
\newcommand{\commentCP}[1]{\begin{center} \parbox{\mywidth\textwidth}{\textbf{\textcolor{black}{Comment C.}} \textcolor{red}{#1 }}\end{center}}
\newcommand{\commentKC}[1]{\begin{center} \parbox{\mywidth\textwidth}{\textbf{\textcolor{black}{Comment K.}} \textcolor{red}{#1} }\end{center}}
\newcommand{\commentYK}[1]{\begin{center} \parbox{\mywidth\textwidth}{\textbf{\textcolor{black}{Comment Y.}} \textcolor{red}{#1} }\end{center}}
\newcommand{\commentMA}[1]{\begin{center} \parbox{\mywidth\textwidth}{\textbf{\textcolor{black}{Comment M.}} \textcolor{red}{#1} }\end{center}}
\newcommand{\replyCP}[1]{\begin{center} \parbox{\mywidthRep\textwidth}{\textbf{Reply C.} \textcolor{blue}{#1} }\end{center}}
\newcommand{\replyKC}[1]{\begin{center} \parbox{\mywidthRep\textwidth}{\textbf{Reply K.} \textcolor{blue}{#1} }\end{center}}
\newcommand{\replyYK}[1]{\begin{center} \parbox{\mywidthRep\textwidth}{\textbf{Reply Y.} \textcolor{blue}{#1} }\end{center}}
\newcommand{\replyMA}[1]{\begin{center} \parbox{\mywidthRep\textwidth}{\textbf{Reply M.} \textcolor{blue}{#1} }\end{center}}
\newcommand{\commentC}[1]{\marginpar{\footnotesize \color{red} {\bf C:} \textsf{\scriptsize #1}}}
\newcommand{\commentK}[1]{\marginpar{\footnotesize \color{red} {\bf K:} \textsf{\scriptsize #1}}}
\newcommand{\commentY}[1]{\marginpar{\footnotesize \color{red} {\bf Y:} \textsf{\scriptsize #1}}}
\newcommand{\commentM}[1]{\marginpar{\footnotesize \color{red} {\bf M:} \textsf{\scriptsize #1}}}
\newcommand{\replyC}[1]{\marginpar{\footnotesize \color{red} {\bf C:} \textsf{\scriptsize #1}}}
\newcommand{\replyK}[1]{\marginpar{\footnotesize \color{red} {\bf K:} \textsf{\scriptsize #1}}}
\newcommand{\replyY}[1]{\marginpar{\footnotesize \color{red} {\bf Y:} \textsf{\scriptsize #1}}}
\newcommand{\replyM}[1]{\marginpar{\footnotesize \color{red} {\bf M:} \textsf{\scriptsize #1}}}
\newcommand{\colorB}[1]{\textcolor{blue}{#1}}
\else
\newcommand{\commentCP}[1]{}
\newcommand{\commentKC}[1]{}
\newcommand{\commentYK}[1]{}
\newcommand{\commentMA}[1]{}
\newcommand{\replyCP}[1]{}
\newcommand{\replyKC}[1]{}
\newcommand{\replyYK}[1]{}
\newcommand{\replyMA}[1]{}
\newcommand{\commentC}[1]{}
\newcommand{\commentK}[1]{}
\newcommand{\commentY}[1]{}
\newcommand{\commentM}[1]{}
\newcommand{\replyC}[1]{}
\newcommand{\replyK}[1]{}
\newcommand{\replyY}[1]{}
\newcommand{\replyM}[1]{}
\newcommand{\colorB}[1]{#1}
\fi

\newcommand\Cite[1]{[\textbf{\color{red}{#1}}]} 


\newcommand{\cals}{\mathcal{S}}
\newcommand{\calx}{\mathcal{X}}
\newcommand{\caly}{\mathcal{Y}}
\newcommand{\calc}{\mathcal{C}}
\newcommand{\calg}{\mathcal{G}}
\newcommand{\cala}{\mathcal{A}}
\newcommand{\cald}{\mathcal{D}}
\newcommand{\calw}{\mathcal{W}}
\newcommand{\calad}{\mathcal{A}\,{\rightarrow}\,\mathcal{D}}
\newcommand{\calda}{\mathcal{D}\,{\rightarrow}\,\mathcal{A}}
\newcommand{\cali}{\mathcal{I}}
\newcommand{\calj}{\mathcal{J}}
\newcommand{\reals}{\mathbb{R}}
\newcommand{\distr}{\mathbb{D}}
\newcommand{\dxy}{\distr\mathcal{\calx\times\caly}}
\newcommand{\infoset}{\textit{K}}
\newcommand{\infosetd}{\infoset_{\mathsf{d}}}
\newcommand{\infoseta}{\infoset_{\mathsf{a}}}
\newcommand{\calsa}{\mathcal{S}_{\mathsf{a}}}
\newcommand{\calsd}{\mathcal{S}_{\mathsf{d}}}

\newcommand{\psd}{\textit{s}_{\sf d}}
\newcommand{\psa}{\textit{s}_{\sf a}}
\newcommand{\msd}{{\sigma_{\sf d}}}
\newcommand{\msa}{{\sigma_{\sf a}}}
\newcommand{\fsd}{{\phi_{\sf d}}}
\newcommand{\fsa}{{\phi_{\sf a}}}
\newcommand{\payd}{\textit{u}_{\sf d}}
\newcommand{\paya}{\textit{u}_{\sf a}}
\newcommand{\pay}{\textit{u}}
\newcommand{\Payd}{\textit{U}_{\sf d}}
\newcommand{\Paya}{\textit{U}_{\sf a}}
\newcommand{\Pay}{\textit{U}}

\newcommand{\eqdef}{\ensuremath{\stackrel{\mathrm{def}}{=}}}
\newcommand{\argmax}{\operatornamewithlimits{argmax}}
\newcommand{\argmin}{\operatornamewithlimits{argmin}}
\newcommand{\supp}[1]{{\sf supp}(#1)}
\newcommand{\expect}{\operatornamewithlimits{\mathbb{E}}}
\newcommand{\expectDouble}[2]{\operatornamewithlimits{\displaystyle\mathbb{E}}_{\substack{#1\\ #2}}}

\renewcommand{\equiv}{\approx}

\newcommand{\vecC}{{\bm C}}
\newcommand{\ca}{c_{\sf a}}
\newcommand{\cd}{c_{\sf d}}
\newcommand{\Ca}{C_{\sf a}}
\newcommand{\Cd}{C_{\sf d}}
\newcommand{\sg}{\gamma}
\newcommand{\Sg}{\Gamma}

\newcommand{\vg}{V_{g}} 
\newcommand{\priorvg}[1]{\vg\left[#1\right]} 
\newcommand{\postvg}[2]{\vg\left[#1,#2\right]} 
\newcommand{\vf}{\mathbb{V}} 
\newcommand{\priorvf}[1]{\vf\left[#1\right]} 
\newcommand{\postvf}[2]{\vf\left[#1,#2\right]} 

\newcommand{\samplefrom}{\leftarrow} 


\newcommand{\operate}{\operatornamewithlimits{\mathbb{F}}}

\newcommand{\add}{\operatorname{+}}
\newcommand{\bigadd}{\operatorname{\sum}}
\newcommand{\hchoice}[1]{\;{{}_{#1}{\mathlarger{\mathlarger{\oplus}}}}\;}
\newcommand{\HChoice}[2]{{\osum_{#1 \samplefrom #2}}}
\newcommand{\HChoiceDouble}[4]{{\osum_{\substack{#1 \samplefrom #2\\ #3 \samplefrom #4}}}}
\newcommand{\hchoiceop}{\osum} 

\newcommand{\conc}{\diamond} 
\newcommand{\bigconc}{\meddiamond}
\newcommand{\vchoice}[1]{\;{{}_{#1}{\mathlarger{\mathlarger{\sqcupdot}}}}\;}
\newcommand{\VChoice}[2]{{\bigsqcupdot_{#1 \samplefrom #2}}}
\newcommand{\VChoiceDouble}[4]{{\bigsqcupdot_{\substack{#1 \samplefrom #2\\ #3 \samplefrom #4}}}}
\newcommand{\vchoiceop}{\bigsqcupdot} 

\newcommand{\Sqcup}{\operatornamewithlimits{\sqcup}}

\newcommand{\randarrow}{\mathbin{\stackrel{\$}{\leftarrow}}}
\newcommand{\proba}[1]{\mbox{\bf{Pr}}\left[#1\right]}
\def\dist#1{\mathbb{D}{#1}}

\newcommand{\qm}[1]{``#1''}
\newcommand{\true}{T}
\newcommand{\false}{F}


\section{Introduction}
\label{sec:introdction}
A fundamental problem in computer security is the leakage of sensitive information due to  
\emph{correlation} of secret values with 
\emph{observables}---i.e., any information accessible to the attacker, 
such as, for instance, the system's outputs or execution time.
The typical defense consists in reducing this correlation, 
which can be done in, essentially, two ways.
The first, applicable when the correspondence secret-observable is deterministic, 
consists in coarsening the equivalence classes of 
secrets that give rise to the same observables. 
This can be 
achieved with post-processing, i.e., sequentially composing the original system 
with a program that removes information from observables.
For example, a typical attack on encrypted web traffic consists in the analysis of the packets' 
length, and a typical defense consists in padding extra bits so to diminish the length variety 
\cite{sun:02:SandP}. 

The second kind of defense, on which we focus in this work, 
consists in adding controlled noise to the observables produced by
the system. 
This can be usually seen as a composition of different protocols via 
probabilistic choice.
\begin{Example}[Differential privacy]
Consider a counting query $f$, namely a function that, applied to a dataset $x$, returns the number of individuals in $x$ that satisfy a given property. A way to implement differential privacy \cite{Dwork:06:TCC} is to add geometrical noise to the result of $f$, so to obtain a probability distribution $P$ on integers of the form $P(z) = c \, e^{|z-f(x)|}$, where $c$ is a normalization factor. The resulting mechanism can  be interpreted as a probabilistic choice on protocols of the form $f(x), f(x) + 1, f(x) + 2, \ldots , f(x) - 1, f(x)-2, \ldots$, where the probability assigned to 
 $f(x) + n$ and to $f(x)-n$ decreases exponentially with $n$.
\end{Example}
\begin{Example}[Dining cryptographers]
Consider two agents running the dining cryptographers protocol \cite{Chaum:88:JC}, which consists in tossing a fair binary coin and then declaring 
the exclusive or $\oplus$ of their secret value $x$ and the result of the coin.  The protocol can be thought as the fair probabilistic choice  of two protocols, one consisting simply of declaring $x$, and the other declaring $x\oplus 1$. 
\end{Example}

Most of the work in the literature of quantitative information flow (QIF) considers passive attacks,
in which the attacker only observes the system. 
Notable exceptions are the works \cite{Boreale:15:LMCS,Mardziel:14:SP,Alvim:17:GameSec}, 
which consider attackers 
who interact with and influence the system, possibly in an adaptive way, with
the purpose of maximizing the leakage of information. 

\begin{Example}[CRIME attack]\label{exe:Crime}
Compression Ratio Info-leak Made Easy (CRIME) \cite{Rizzo:12:Ekoparty} is a
security exploit against secret web cookies over connections using the HTTPS and SPDY protocols and  data compression. 
The idea is that the attacker can inject some content $a$ in the communication of the secret $x$ from the target site to the server. The server then compresses and  encrypts the data, including both $a$ and $x$, and sends back the result. By observing the length of the result, the attacker can then infer information about $x$. 
To mitigate the leakage, one possible defense would consist in transmitting, along with $x$, also an encryption method $f$ selected randomly from a set $F$. Again, the resulting protocol can be seen as a composition, using probabilistic choice, of the protocols in the set $F$.  
\end{Example}

\begin{Example}[Timing side-channels]\label{ex:Timing}
Consider a password-checker, or any similar system in which the user authenticates himself
by entering a secret that is checked by the system. An adversary does not know the real secret,
of course, but a timing side-channel could reveal the \emph{part} (eg. which bit) of the secret in which the adversary's
input fails. By repeating the process with different inputs, the adversary might be able to
fully retrieve the secret. A possible counter measure is to make the side channel noisy,
by randomizing the order in which the secret's bit are checked against the user input.
This example is studied in detaul in Section~\ref{sec:password-example}.
\end{Example}

In all examples above the main use of the probabilistic choice is to  
obfuscate the relation between secrets and observables, 
thus reducing their correlation---and, hence, the information leakage. 
To achieve this goal, it is essential that the attacker never comes to 
know the result of the choice. 
In the CRIME example, however, if 
$f$ and  $a$ are chosen independently, then (in general) it is still better 
to choose $f$ probabilistically, even if the attacker will come to know, afterwards, 
the choice of $f$. 
In fact, this is true also for the attacker: his best strategies (in general) are to choose $a$
according to some probability distribution. 
Indeed, suppose that $F=\{f_1,f_2\}$ are the defender's choices and 
$A=\{a_1,a_2\}$ are the attacker's, and that 
$f_1(\cdot,a_1)$ leaks more than $f_1(\cdot,a_2)$, while $f_2(\cdot,a_1)$ 
leaks less than $f_2(\cdot,a_2)$. 
This is a scenario like \emph{the matching pennies} in game theory: 
if one player selects an action deterministically, the other player 
may exploit this choice and  get an advantage. 
For each player the optimal strategy is to play probabilistically, 
using a distribution that maximizes his own gain for all possible actions 
of the attacker. 
In zero-sum games, in which the gain of one player coincides 
with the loss of the other, the optimal pair of distributions always exists, 
and it is called \emph{saddle point}. 
It also coincides with the \emph{Nash equilibrium}, which is defined as the 
point in which neither of the two players gets any advantage in changing 
unilaterally his strategy. 

Motivated by these examples, this paper investigates the two kinds of choice, visible 
and hidden (to the attacker), in a game-theoretic setting. 
Looking at them as language operators, we study their algebraic properties, which will
help reason about their behavior in games. 
We consider zero-sum games, in which the gain (for the attacker) is represented by the leakage. 
While for visible choice it is appropriate to use the ``classic'' game-theoretic
framework, for hidden  choice we need to adopt the more general framework of the
\emph{information leakage games} proposed in \cite{Alvim:17:GameSec}. 
This happens because, in contrast with standard game theory, 
in games with hidden choice the payoff of a mixed strategy is 
a convex function of the distribution on the defender's pure actions, 
rather than simply the expected value of their utilities.
We will consider both simultaneous games---in which each player chooses independently---and sequential games--- in which one player chooses his action first. We aim at comparing all these situations, and at identifying the precise advantage of the hidden choice over the visible one. 

To measure leakage we use the well-known information-theoretic model. 
A central notion in this model is that of \emph{entropy}, but here
we  use its converse, \emph{vulnerability}, which 
represents  the magnitude of the threat. 
In order to derive results as general as possible, we  adopt 
the very comprehensive notion of vulnerability as any convex and continuous 
function, as  used in \cite{Boreale:15:LMCS} and 
\cite{Alvim:16:CSF}. 
This notion has been shown \cite{Alvim:16:CSF} 
to be, in a precise sense, the most general 
information measures w.r.t. a set of fundamental 
information-theoretic axioms.
Our results, hence, apply to all information measures that respect such
fundamental principles, including the widely adopted measures of \emph{Bayes vulnerability} (aka min-vulnerability, aka (the converse of) 
Bayes risk)~\cite{Smith:09:FOSSACS,Chatzikokolakis:08:JCS}, 
\emph{Shannon entropy}~\cite{Shannon:48:Bell}, 
\emph{guessing entropy}~\cite{Massey:94:IT}, and 
\emph{$g$-vulnerability}~\cite{Alvim:12:CSF}.
\\

The main contributions of this paper are:
\begin{itemize}
\item We present a general framework for reasoning about information leakage
in a game-theoretic setting, extending the notion of information leakage games proposed 
in~\cite{Alvim:17:GameSec} to both simultaneous and sequential games,
with either hidden or visible choice.

\item We present a rigorous compositional way, using visible and hidden choice operators, for representing attacker's and defender's actions in information leakage games.
In particular, we study the algebraic properties of visible and hidden choice on channels,
and compare the two kinds of choice with respect to the capability of reducing 
leakage, in  presence of an adaptive attacker.

\item 
We provide a taxonomy of the various scenarios (simultaneous and sequential) 
showing when randomization is necessary, for either attacker or defender, 
to achieve optimality.
Although it is well-known in information flow that the defender's best strategy 
is usually randomized, only recently it has been shown that
when defender and attacker act simultaneously, the attacker's optimal strategy 
also requires randomization~\cite{Alvim:17:GameSec}.

\item We compare the vulnerability of the leakage games for these various scenarios 
and establish a hierarchy 
of leakage games based on the order between the value of the leakage in the Nash equilibrium.
Furthermore, we show that when the attacker moves first in a sequential game with hidden choice, 
the behavioral strategies (where the defender chooses his probabilistic distribution after he has seen the choice of the attacker)  
are more advantageous for the defender than the mixed strategies (where the defender chooses the probabilistic distribution over his possible functional dependency on the choice of the attacker). This contrast 
with the standard game theory, where the two types of strategies are equivalent. 
Another difference is that in our attacker-first sequential games there may not exist Nash equilibria with deterministic strategies for the defender
(although the defender has full visibility of the attacker's moves).
\item We use our framework in a detailed case study of a password-checking protocol. 
A naive program, which checks the password bit by bit and stops when it finds a 
mismatch, is clearly very insecure, because it reveals at each attempt 
(via a timing side-channel), the maximum correct prefix. 
On the other hand, if we continue checking until the end of the string (time padding), 
the program becomes very inefficient. 
We show that, by using probabilistic choice instead, we can obtain a good trade-off 
between security and efficiency. 
\end{itemize}

\paragraph*{Plan of the paper.}
The remaining of the paper is organized as follows.
In Section~\ref{sec:preliminaries} we review some basic notions of game theory and 
quantitative information flow. 
In Section~\ref{sec:running-example} we introduce our running example. 
In Section~\ref{sec:operators} we define the visible and hidden choice operators 
and demonstrate their algebraic properties. 
In Section~\ref{sec:games-setup}, the core of the paper, we examine
various scenarios for leakage games. 
In Section~\ref{sec:comparing-games} we compare the vulnerability of the various leakage games,
and establish a hierarchy among those games.
In Section~\ref{sec:password-example} we show an application of our framework 
to a password checker. 
In Section~\ref{sec:related-work} we discuss related work and,
finally, in Section~\ref{sec:conclusion} we conclude.

A preliminary version of this paper appeared in~\cite{Alvim:18:POST}. 
One difference with respect to~\cite{Alvim:18:POST} is that in the present paper we consider both behavioral and mixed strategies in the sequential games, while in~\cite{Alvim:18:POST} we only considered the latter. We also show that the two kinds of strategies are not equivalent in our context (Example~\ref{eg:differ:mixed:behavioral}: the optimal strategy profile yields a different payoff depending on whether the defender adopts mixed strategies or behavioral ones). In light of this difference, we provide new results that concern behavioral  strategies, and in particular: 
\begin{itemize}
\item
Theorem~\ref{theo:deterministic-strategies-II}, which concerns the defender's behavioral  strategies in the defender-first game with visible choice (Game II) ,
\item
the second half of Theorem~\ref{Theo:deterministic solution Game VI}, which deals with  the adversary's behavioral strategies in the attacker-first game with hidden choice (Game VI).
\end{itemize}
Furthermore, in this paper we define formally all concepts, and provide all the proofs.
In particular, we provide a precise formulation of the comparison among games with visible/hidden choices (Propositions~\ref{prop:order:I-IV and III-VI} and~\ref{prop:mixed-behavioral}, Corollaries~\ref{theo:order:I-IV},~\ref{theo:order:III-VI}, and~\ref{cor:mixed-behavioral}) in Section~\ref{sec:comparing-games}.
Finally, in Section~\ref{sec:password-example} we provide a new result, expressed by  Theorem~\ref{prop:uniform-prior-delta},  regarding the optimal strategies for the defender in the presence of a uniform prior on passwords.

\section{Preliminaries}
\label{sec:preliminaries}
In this section we review some basic notions from game theory
and quantitative information flow.
We use the following notation:
Given a set $\cali$, we denote by $\distr\cali$ the 
\emph{set of all probability distributions} over $\cali$.
Given $\mu\in \distr\cali$, its \emph{support}
$\supp{\mu} \eqdef \{ i \in \cali : \mu(i)>0 \}$ 
is the set of its elements with positive probability.
We use $i{\samplefrom}\mu$ to indicate that a value 
$i{\in}\cali$ is sampled from a distribution $\mu$ on $\cali$.
A set $\cals\subseteq\reals^n$ is \emph{convex} if $ts_0 + (1-t)s_1 \in\cals$ for all
$s_0, s_1\in\cals$ and $t\in[0,1]$.
For such a set, a function $f:\cals\rightarrow\reals$ is \emph{convex} if 
$f(ts_0 + (1-t)s_1) \le tf(s_0) + (1-t)f(s_1)$
for all $s_0, s_1\in\cals,t\in[0,1]$, and
\emph{concave} if $-f$ is convex.

\subsection{Basic concepts from game theory}
\label{subsec:game-theory}
\subsubsection{Two-player games}

\emph{Two-player games} are a model for reasoning about 
the behavior of two players.
In a game, each player has at its disposal a set of \emph{actions} 
that he can perform, and he obtains some gain or loss depending on
the actions chosen by both  players.
Gains and  losses are defined using a real-valued \emph{payoff
function}.
Each player is assumed to be \emph{rational}, i.e.,  
his choice is driven by the attempt to maximize his own expected payoff.
We also assume that 
the set of possible actions and the payoff functions of both players 
are \emph{common knowledge}.

In this paper we only consider \emph{finite games}, in which the 
set of actions available to the players are finite, and which
are also \emph{zero-sum games}, so the payoff of one player is the loss of the other.
Next we introduce an important distinction between \emph{simultaneous} 
and \emph{sequential} games. 
In the following, we will call the two players \emph{defender} 
and \emph{attacker}.

\subsubsection{Simultaneous games}

In a  simultaneous game, each player chooses his action without knowing 
the action chosen by the other.
The term ``simultaneous'' here does  not  mean that the players' actions 
are chosen at the same time, but only that they are chosen independently.
Formally, such a game  is defined as a tuple\footnote{Following the convention of \emph{security games}, we set the first player to be the defender.} 
$(\cald, \cala, \payd, \paya)$, where
$\cald$ is a nonempty set of \emph{defender's actions}, 
$\cala$ is a nonempty set of \emph{attacker's actions},
$\payd: \cald\times\cala \rightarrow \reals$ is the \emph{defender's payoff function}, and
$\paya: \cald\times\cala \rightarrow \reals$ is the \emph{attacker's payoff function}.

Each player may choose an action deterministically or probabilistically.
A \emph{pure strategy} of the defender (resp. attacker) is a deterministic 
choice of an action, i.e., an element $d\in\cald$ (resp. $a\in\cala$). 
A pair $(d, a)$ is called \emph{pure strategy profile}, and 
$\payd(d, a)$, $\paya(d, a)$ represent the defender's and the attacker's 
payoffs, respectively. 
A \emph{mixed strategy} of the defender (resp. attacker) is a probabilistic 
choice of an action, defined as a probability distribution 
$\delta\in\distr\cald$ (resp. $\alpha\in\distr\cala$).
A pair  $(\delta, \alpha)$ is called  \emph{mixed strategy profile}.
The defender's and the attacker's \emph{expected payoff functions} for
mixed strategies are defined, respectively, as:
\begin{align*}
\Payd(\delta,\alpha)
&\eqdef {\expectDouble{d\leftarrow\delta}{a\leftarrow\alpha} \payd(d, a)}
=\sum_{\substack{d\in\cald\\ a\in\cala}} \delta(d) \alpha(a) \payd(d, a)
\mbox{ and }
\\[1ex]
\Paya(\delta,\alpha) 
&\eqdef {\expectDouble{d\leftarrow\delta}{a\leftarrow\alpha}  \paya(d, a)}
= \sum_{\substack{d\in\cald\\ a\in\cala}} \delta(d) \alpha(a) \paya(d, a)
{.}
\end{align*}

A defender's mixed strategy $\delta\in\distr\cald$ is a \emph{best response} to 
an attacker's mixed strategy $\alpha\in\distr\cala$ if 
$\Payd(\delta, \alpha) = \max_{\delta'\in\distr\cald}\Payd(\delta', \alpha)$.
Symmetrically, $\alpha\in\distr\cala$ is a \emph{best response} to 
$\delta\in\distr\cald$ if 
$\Paya(\delta, \alpha) = \max_{\alpha'\in\distr\cala}\Payd(\delta, \alpha')$.
A \emph{mixed-strategy Nash equilibrium} is a  profile $(\delta^*, \alpha^*)$ such 
that $\delta^*$ is a best response to $\alpha^*$ and vice versa. 
This means that in a Nash equilibrium, no unilateral deviation by any single 
player provides better payoff to that player.
If $\delta^*$ and $\alpha^*$ are point distributions 
concentrated on some $d^*\in\cald$ and $a^*\in\cala$ respectively, 
then $(\delta^*, \alpha^*)$ is a \emph{pure-strategy Nash equilibrium}, 
and will be denoted by $(d^*, a^*)$. 
While not all games have a pure strategy Nash equilibrium, every finite game 
has a mixed strategy Nash equilibrium.

\subsubsection{Sequential games}

In a sequential game players may take turns in choosing their
actions.
In this paper, we only consider the case in which each player 
moves only once, in such a way that one of the players (\emph{the leader}) 
chooses his action first, and commits to it, before the other player 
(\emph{the follower}) makes his choice. 
The follower may have total knowledge of the choice made by the leader, or only partial. 
We refer to the two scenarios by the terms \emph{perfect} and \emph{imperfect information}, 
respectively.
Another distinction is the kind of randomization used by the players, namely whether the follower chooses probabilistically his action after he knows (partially or totally) the move of the leader, or whether he chooses at the beginning of the game a probabilistic distribution on (deterministic) strategies that  depend on the (partial or total) knowledge of the move of the leader. In the first case the strategies are called  \emph{behavioral},  in the second case \emph{mixed}. 

We now give the precise definitions assuming that the leader is the defender.  
The definitions for the case in which the leader is the attacker are analogous. 

A \emph{defender-first sequential game with perfect information} is a 
tuple $(\cald,\, \allowbreak \calda,\, \allowbreak \payd, \allowbreak \paya)$ where $\cald$, $\cala$, $\payd$ and $\paya$ are
defined as in simultaneous games:
The choice of an action $d\in \cald$ represents a  pure strategy of the defender. As for the attacker, his choice $a\in\cala$ depends functionally on the prior choice $d$ of the defender, and for this reason the 
pure strategies of the attacker are functions $\psa:\calda$. 
As for the probabilistic strategies, those of the  defender are defined as in simultaneous games:  
namely they are distributions $\delta\in\distr\cald$. 
On the other hand, the attacker's probabilistic strategies can be defined in two different ways:
In the \emph{behavioral} case,  an attacker's probabilistic strategy is a function $\fsa: \cald \rightarrow \distr(\cala)$. Namely, the attacker 
chooses a distribution on its actions after he sees the move of the defender. 
In the \emph{mixed} case,  an attacker's probabilistic strategy is a probability distribution  $\msa\in\distr(\calda)$. Namely,   the attacker 
chooses \emph{a priori} a distribution on pure strategies.
The defender's and the attacker's \emph{expected payoff functions} for mixed strategies are defined, respectively, as
\begin{align*}
\mbox{Behavioral case}\;\;
&\left\{
\begin{array}{rclcl}
\Payd(\delta,\fsa) 
&\eqdef&{\displaystyle\expect_{d\leftarrow\delta}\,\expect_{a\leftarrow\fsa(d)}
  \payd(d, a)}
&=& \displaystyle\sum_{d\in\cald}\delta(d)\sum_{a\in\cala} \fsa(d) (a) \,\payd(d, a)
 \\[1ex]
\Paya(\delta,\fsa)
&\eqdef&{\displaystyle\expect_{d\leftarrow\delta}\,\expect_{a\leftarrow\fsa(d)}
  \paya(d, a)}
&=& \displaystyle\sum_{d\in\cald}\delta(d)\sum_{a\in\cala} \fsa(d) (a) \,\paya(d, a)
\end{array}
\right.
\\[2ex]
\mbox{Mixed case}\;\;
&\left\{
\begin{array}{rclcl}
\Payd(\delta,\msa) 
&\eqdef&{\displaystyle\expectDouble{d\leftarrow\delta}{\psa\leftarrow\msa}
  \payd(d, \psa(d))}
&=& \displaystyle\sum_{\substack{d\in\cald\\ \psa:\calda}} \delta(d) \msa(\psa) \payd(d, \psa(d))
 \\[1ex]
\Paya(\delta,\msa)
&\eqdef&{\displaystyle\expectDouble{d\leftarrow\delta}{\psa\leftarrow\msa} \paya(d, \psa(d))}
&=& \displaystyle\sum_{\substack{d\in\cald\\ \psa:\calda}} \delta(d) \msa(\psa) \paya(d, \psa(d))
\end{array}
\right.
\end{align*}

The case of imperfect information 
is typically formalized by assuming an \emph{indistinguishability (equivalence) relation} over the actions chosen 
by the leader, representing a scenario in which the follower cannot distinguish between the actions belonging to the same equivalence class. 
The pure strategies of the followers, therefore,  are functions from the set of the equivalence classes on the actions of the leader to his own actions. Formally,  a \emph{defender-first  sequential game with imperfect information } is a
tuple $(\cald,\, \infoseta\rightarrow\cala,\, \payd, \paya)$ where $\cald$, $\cala$, $\payd$ and $\paya$ are
defined as in simultaneous games, and $\infoseta$ is a partition of $\cald$.  
The \emph{expected payoff functions} are defined as before, except that now the argument of 
$\fsa$ and $\psa$ is the equivalence class of $d$. 
Note that in the case in which all defender's actions are indistinguishable from each other at the eyes of the attacker  (\emph{totally imperfect information}), we have  $\infoseta = \{\cald\}$ and the expected payoff
functions coincide with those of the simultaneous games. 
In contrast, in the games in which all defender's actions are distinguishable from the view of the attacker (\emph{perfect information}), we have $\infoseta = \{ \{ d \} \mid d\in\cald \}$.

In the standard game theory, under the assumption of \emph{perfect recall} (i.e., the players never forget what they have learned), behavioral and mixed strategies are equivalent, 
in the sense that  for any behavioral strategy there is a mixed strategy that yields the same payoff, and vice versa. This is true for both cases of perfect and imperfect information, see~\cite{Osborne:94:BOOK}, Chapter 11.4. 
In our leakage games, however, this equivalence does not hold anymore, as it will be shown in Sections~\ref{sec:games-setup} and \ref{sec:comparing-games}.

\subsubsection{Zero-sum games and Minimax Theorem}

A game $(\cald,\, \cala,\, \payd, \paya)$ is \emph{zero-sum} if for any $d\in\cald$ and any $a\in\cala$, the defender's loss is equivalent to the attacker's gain, i.e., $\payd(d, a) = -\paya(d, a)$.
For brevity, in zero-sum games we denote by $u$ the attacker's payoff function $\paya$, and by $U$ the attacker's expected payoff $\Paya$.\footnote{Conventionally in game theory  the payoff $u$ is  set to 
be that of the first player, but we prefer to look at the payoff from the point of view of the attacker to be in line with the definition of payoff as vulnerability.}
Consequently,  the goal of the defender is to minimize $U$, and the goal of the attacker is to maximize it. 

In simultaneous zero-sum games the Nash equilibrium corresponds to the solution of the \emph{minimax} problem (or equivalently,  the \emph{maximin} problem), namely, the strategy profile $(\delta^*, \alpha^*)$ such that 
$U(\delta^*, \alpha^*)=\min_{\delta} \max_{\alpha} U(\delta, \alpha)$. 
The  von Neumann's minimax theorem, in fact, ensures that such solution (which always exists) is stable.

\begin{restatable}[von Neumann's minimax theorem]{Theorem}{res:vonneumann}
\label{theo:vonneumann}
Let $\calx \subset \reals^m$ and $\caly \subset \reals^n$ be compact convex sets,
and $\Pay: \calx\times\caly\rightarrow\reals$ be a continuous function such that
$\Pay(x, y)$ is a convex function in $x\in\calx$ and a concave function in $y\in\caly$.
Then
\[
\min_{x\in\calx} \max_{y\in\caly} \Pay(x, y) = \max_{y\in\caly} \min_{x\in\calx} \Pay(x, y)
.
\]
\end{restatable}

A related property is that, under the conditions of Theorem~\ref{theo:vonneumann}, there exists a \emph{saddle point} $(x^*, y^*)$ s.t., for all $x{\in}\calx$  and $y{\in}\caly$: 
$
 \Pay(x^{*}, y) \leq \Pay(x^*, y^*) \leq \Pay(x, y^*) 
$.

The solution of the minimax problem can be obtained by using convex optimization techniques. 
In case $\Pay(x, y)$ is affine in $x$ and in $y$, we can also use linear optimization. 

In case $\cald$ and $\cala$ contain two elements each, there is a closed form for the solution.  
Let  $\cald =\{d_0,d_1\}$ and $\cala=\{a_0,a_1\}$ respectively. 
Let $u_{ij}$ be the payoff of the defender on $d_i, a_j$.           
Then the Nash equilibrium $(\delta^*,\alpha^*)$ is given by:
\begin{equation}
\label{eq:closedform}
\textstyle
\delta^*(d_0)=\frac{ u_{11} - u_{10}}{  u_{00} - u_{01} - u_{10} +u_{11}} \;\;\;\;
\alpha^*(a_0)=\frac{ u_{11} - u_{01}}{  u_{00} - u_{01} - u_{10} +u_{11}}
\end{equation}
if these values are in $[0,1]$.  
Note that, since there are only two elements, the strategy $\delta^*$ is completely specified by its value in $d_0$, and analogously for $\alpha^*$.

\subsection{Quantitative information flow}
\label{subsec:qif}

Finally, we briefly review the standard framework of
quantitative information flow, which is concerned with
measuring the amount of information leakage in a 
(computational) system.
  
\subsubsection{Secrets and vulnerability}
A \emph{secret} is some piece of sensitive information the 
defender wants to protect, such as a user's password, social 
security number, or current location. 
The attacker usually only has some partial knowledge 
about the value of a secret, represented as a probability
distribution on secrets called a \emph{prior}.
We denote by $\calx$ the set of possible secrets,
and we typically use $\pi$ to denote a prior belonging to 
the set $\dist{\calx}$ of probability distributions over 
$\calx$. 

The  \emph{vulnerability} of a secret is a measure of the payoff that it represents for the attacker. 
In this paper we consider a very general notion of  vulnerability, following~\cite{Alvim:16:CSF}, and we define 
a vulnerability $\vf$ to be any continuous and convex function of type $\dist{\calx} \rightarrow \reals$.
It has been shown in~\cite{Alvim:16:CSF} 
that these functions coincide
with the set of $g$-vulnerabilities, 
and are, in a precise sense, the most general 
information measures w.r.t. a set of fundamental
information-theoretic axioms.~\footnote{\review{More precisely, if posterior vulnerability 
is defined as the expectation of the vulnerability of posterior
distributions, 
the measure respects the fundamental information-theoretic properties of
\emph{data-processing inequality} (i.e., that post-processing can never increase information, but only destroy it), and of
\emph{non-negativity of leakage} (i.e., that by observing the output of a channel an actor cannot, in average, lose information) if, and only if, vulnerability is convex.}}
This notion, hence, subsumes all information measures that respect such
fundamental principles, including the widely adopted measures of \emph{Bayes vulnerability} (aka min-vulnerability, aka (the converse of) 
Bayes risk)~\cite{Smith:09:FOSSACS,Chatzikokolakis:08:JCS}, 
\emph{Shannon entropy}~\cite{Shannon:48:Bell}, 
\emph{guessing entropy}~\cite{Massey:94:IT}, and 
\emph{$g$-vulnerability}~\cite{Alvim:12:CSF}.

\subsubsection{Channels, posterior vulnerability, and leakage}
Computational systems can be modeled as information
theoretic channels.
A 
\emph{channel}
$C : \calx \times \caly \rightarrow \reals$ is a function
in which $\calx$ is a set of \emph{input values}, $\caly$ is a set 
of \emph{output values}, and $C(x,y)$ represents the conditional 
probability of the channel producing output $y \in \caly$ when 
input $x \in \calx$ is provided. 
Every channel $C$ satisfies $0 \leq C(x,y) \leq 1$ for all 
$x\in\calx$ and $y\in\caly$, and $\sum_{y\in\caly} C(x,y) = 1$ for all $x\in\calx$.

A distribution $\pi\in\dist{\calx}$ and a channel $C$ 
with inputs $\calx$ and outputs $\caly$ induce a joint distribution
$p(x,y) = \pi(x)C({x,y})$ on $\calx \times \caly$,
producing joint random variables $X, Y$ with marginal 
probabilities $p(x) = \sum_{y} p(x,y)$ and 
$p(y) = \sum_{x} p(x,y)$, and conditional probabilities 
$p(x{\mid}y) = \nicefrac{p(x,y)}{p(y)}$ if $p(y)\neq 0$. 
For a given $y$ (s.t. $p(y)\neq 0$), the conditional 
probabilities $p(x{\mid}y)$ for each $x \in \calx$ form the 
\emph{posterior distribution $p_{X \mid y}$}.

A channel $C$ in which $\calx$ is a set of secret values 
and $\caly$ is a set of observable values produced
by a system can be used to model computations on secrets.
Assuming the attacker has prior knowledge $\pi$ about
the secret value, knows how a channel $C$ works, and
can observe the channel's outputs, the effect of the channel 
is to update the attacker's knowledge from $\pi$ to a 
collection of posteriors $p_{X \mid y}$, each occurring 
with probability $p(y)$.%

Given a vulnerability $\vf$, a prior $\pi$, and a channel $C$, 
the \emph{posterior vulnerability} $\postvf{\pi}{C}$ is the vulnerability 
of the secret after the attacker has observed the output of the channel $C$.
Formally:
$\postvf{\pi}{C} \eqdef \sum_{y \in \caly} p(y) \priorvf{p_{X \mid y}}$.

Consider, for instance, the example of the password-checker with a timing side-channel
from the introduction (Example~\ref{ex:Timing}, also discussed in detail in
Section~\ref{sec:password-example}). Here the set of secrets $\calx$ consists of all possible
passwords (say, all strings of $n$ bits), and a natural vulnerability function is
Bayes-vulnerability, given by $\vf(\pi)=\max_{x\in\calx} \pi(x)$. This function expresses the adversary's probability
of \emph{guessing correctly} the password in one try; assuming that the passwords are chosen uniformly, i.e.
$\pi$ is uniform, any guess would be correct with
probability $2^{-n}$, giving $\vf(\pi) = 2^{-n}$.
Now, imagine that the timing side-channel reveals that the adversary's input failed
\emph{on the first bit}. The adversary now knows the first bit of the password (say $0$), 
hence the posterior $p_{X \mid y}$ assigns probability $0$ to all passwords with first bit
$1$, and probability $2^{-(n-1)}$ to all passwords with first bit $0$.
This happens for all possible posteriors, giving posterior vulnerability
$\postvf{\pi}{C}  = 2^{-(n-1)}$ ($2$ times greater than the prior $\vf$).

It is known from the literature~\cite{Alvim:16:CSF} 
that  the posterior vulnerability is a convex function of $\pi$. 
Namely, for any channel $C$, any family of distributions $\{\pi_i\}$, and any set of convex coefficients $\{c_i\}$, we have: 
\begin{align*}
\postvf{\sum_i c_i \pi_i}{C}
\leq &\, \sum_{i} c_i \postvf{\pi_i}{C}
\end{align*}

The \emph{(information) leakage} of 
channel $C$ under prior $\pi$ is a comparison between 
the vulnerability of the secret before the system
was run---called \emph{prior vulnerability}---and the 
posterior vulnerability of the secret.
Leakage reflects by how much the observation of 
the system's outputs increases the attacker's 
information about the secret. 
It can be defined either 
\emph{additively} ($\postvf{\pi}{C}-\priorvf{\pi}$), or
\emph{multiplicatively} ($\nicefrac{\postvf{\pi}{C}}{\priorvf{\pi}}$).
In the password-checker example, the additive leakage is
$2^{-(n-1)} - 2^{-n} = 2^{-n}$ and the multiplicative leakage is
$\nicefrac{2^{-(n-1)}}{2^{-n}} = 2$.

\section{An illustrative example}
\label{sec:running-example}
We introduce an example which will serve as running example through the paper. 
Although admittedly contrived, this example is simple and yet produces
different leakage measures for all different combinations of visible/hidden choice and simultaneous/sequential games, 
thus providing a way to compare all different scenarios we are interested in. 

\begin{figure}[tb]
\centering
\begin{minipage}[t]{0.3\linewidth}
\noindent 
\texttt{\underline{Program 0}}\\[1mm]
\noindent \texttt{\textbf{High Input:}} $x \in \{0,1\}$\\
\noindent \texttt{\textbf{Low Input:}} $a \in \{0,1\}$\\
\texttt{\textbf{Output:}} $y\in \{0,1\}$\\[0.5mm]
$y= x \cdot a$\\
\textbf{return} $y$
\end{minipage}
\hspace{0.15\linewidth}
\begin{minipage}[t]{0.3\linewidth}
\noindent 
\texttt{\underline{Program 1}}\\[1mm]
\noindent \texttt{\textbf{High Input:}} $x \in \{0,1\}$\\
\noindent \texttt{\textbf{Low Input:}} $a \in \{0,1\}$\\
\texttt{\textbf{Output:}} $y\in \{0,1\}$\\[0.5mm]
$c \samplefrom{\text{flip coin with bias $\nicefrac{a}{3}$}}$\\
\textbf{if} $c = \textit{heads}$ \{$y = x$\} \\
\textbf{else} \{$y = \overline{x}$\}\\
\textbf{return} $y$
\end{minipage}
\caption{Alternative programs for the running example.}
\label{fig:running-exa}
\end{figure}

Consider that a binary secret must be processed by a program.
As usual, a defender wants to protect the secret value, 
whereas an attacker wants to infer it by observing the 
system's output.
Assume the defender can choose which among two
alternative versions of the program to run.
Both programs take the secret value $x$ as high input,  
and a binary low input $a$ whose value is chosen by the attacker. 
They both return the output in a low variable $y$.~\footnote{We adopt the usual 
convention in QIF of referring to secret variables, inputs and outputs in programs 
as \emph{high}, and to their observable counterparts as \emph{low}.}
\texttt{Program 0} returns the binary product of $x$ and $a$,
whereas \texttt{Program 1} flips a coin with bias $\nicefrac{a}{3}$
(i.e., a coin which returns heads with probability $\nicefrac{a}{3}$)
and returns $x$ if the  result is
heads, and the complement $\overline{x}$ of $x$ otherwise.
The two programs are represented in Figure~\ref{fig:running-exa}.

The combined choices of the defender's and of the
attacker's determine how the system behaves.
Let $\cald{=} \{0,1\}$ represent the set of the defender's
choices---i.e., the index of the program to use---, and
$\cala = \{0,1\}$ represent the set of the attacker's
choices---i.e., the value of the low input $a$. 
We shall refer to the elements of $\cald$ and $\cala$ as \emph{actions}.
For each possible combination of actions 
$d \in \cald$ and $a \in \cala$, we can construct a channel 
$C_{da}$ modeling how the resulting system behaves.
Each channel $C_{da}$ is a function of type 
$\calx \times \caly \rightarrow \reals$, where 
$\calx = \{0,1\}$ is the set of possible high input values 
for the system, and $\caly = \{0,1\}$ is the set of possible 
output values from the system.
Intuitively, each channel provides the probability that the
system (which was fixed by the defender) produces output 
$y\in\caly$ given that the high input is $x \in \calx$
(and that the low input was fixed by the attacker). 
The four possible channels are depicted in Table~\ref{table:four-channels}.

\begin{table}[tb]
\centering
\vspace{-6mm}
\begin{subtable}[t]{0.45\linewidth}
\centering
\caption{The four channels $C_{da}$ for $d,a\in\{0,1\}$}
\begin{tabular}{cc}
$
\begin{array}{|c|c|c|}
\hline
C_{00} & y=0 & y=1 \\ \hline
x=0    & 1 & 0 \\
x=1    & 1 & 0 \\ \hline
\end{array}
$
&
$
\begin{array}{|c|c|c|}
\hline
C_{01} & y=0 & y=1 \\ \hline
x=0    & 1  & 0  \\
x=1    & 0  & 1  \\ \hline
\end{array}
$
\\
 & \\
$
\begin{array}{|c|c|c|}
\hline
C_{10} & y=0 & y=1 \\ \hline
x=0    & 0 & 1 \\
x=1    & 1 & 0 \\ \hline
\end{array}
$
&
$
\begin{array}{|c|c|c|}
\hline
C_{11} & y=0 & y=1 \\ \hline
x=0    & \nicefrac{1}{3} & \nicefrac{2}{3} \\
x=1    & \nicefrac{2}{3} & \nicefrac{1}{3} \\ \hline
\end{array}
$
\end{tabular}
\label{table:four-channels}
\end{subtable}
\qquad
\begin{subtable}[t]{0.45\linewidth}
\vspace{0.9cm}
\caption{Bayes vulnerability of each channel $C_{da}$.}
\centering
\begin{tabular}{c}
\renewcommand{\arraystretch}{1.15}
$
\begin{array}{|c|c|c|}
\hline
\vf & a=0 & a=1 \\ \hline
d=0    & \nicefrac{1}{2} & 1 \\ 
\hline
d=1    & 1 &  \nicefrac{2}{3}\\ \hline
\end{array}
$
\renewcommand{\arraystretch}{1}
\end{tabular}
\label{table:vulnerabilitygame}
\end{subtable}
\caption{Channel matrices and payoff table for running example.}
\label{table:runningexample}
\end{table}

Note that channel $C_{00}$ does not leak any 
information about the input $x$ (i.e., it is 
\emph{non-interferent}), whereas channels $C_{01}$ and 
$C_{10}$ completely reveal $x$.
Channel $C_{11}$ is an intermediate case: it leaks some 
information about $x$, but not all.

We want to investigate how the defender's and the attacker's choices
influence the leakage of the system. 
For that we can just consider the (simpler) notion of posterior vulnerability, 
since in order to make the comparison fair we need to assume that the prior is 
always the same in the various scenarios, and this implies that the 
leakage is in a one-to-one correspondence with the posterior vulnerability 
(this happens for both additive and multiplicative leakage). 

For this example, assume we are interested in 
Bayes vulnerability~\cite{Chatzikokolakis:08:JCS,Smith:09:FOSSACS}, defined as
$\vf(\pi) = \max_{x} \pi(x)$ for every $\pi \in \dist{\calx}$.
Assume for simplicity that 
the prior is the uniform prior $\pi_u$.
In this case we know from \cite{Braun:09:MFPS} that the posterior 
Bayes vulnerability of a channel is the sum of the greatest elements 
of each column, divided by the total number of inputs. 
Table~\ref{table:vulnerabilitygame} provides the
Bayes vulnerability $\vf_{da} \eqdef \postvf{\pi_{u}}{C_{da}}$
of each channel considered above.

Naturally, the attacker aims at maximizing the vulnerability of the system, 
while the defender tries to minimize it. 
The resulting vulnerability will depend on various factors, in particular on
whether the two players make their choice \emph{simultaneously} 
(i.e., without knowing the choice of the opponent) or \emph{sequentially}. 
Clearly, if the choice of a player who moves first is known by an opponent who moves second, 
the opponent will be in advantage. 
In the above example, for instance, if the defender knows the choice $a$ of the attacker, the most convenient choice for him is to set  $d=a$, and 
the vulnerability will be at most $\nicefrac{2}{3}$ . 
Vice versa, if the attacker knows the choice $d$  of the defender, the most convenient choice for him is to set  $a\neq d$. The vulnerability in this case will be $1$.

Things become more complicated when players make choices simultaneously. 
None of the pure choices of $d$ and $a$ are the best for the corresponding player, 
because the vulnerability of the system depends also on the (unknown) choice of the other player. 
Yet there is a strategy leading to the best possible situation for both players (the \emph{Nash equilibrium}), but it is mixed (i.e., probabilistic), in that the players randomize their choices according to some precise distribution. 

Another factor that affects vulnerability is whether or not the defender's choice is known to the attacker at the moment in which he observes the output of the channel. 
Obviously, this corresponds to whether or not the attacker knows what channel he is observing. 
Both cases are plausible: naturally the defender has all the interest in keeping his choice 
(and, hence, the channel used) secret, since then the attack will be less effective 
(i.e., leakage will be smaller). 
On the other hand, the attacker may be able to identify the channel used anyway, for instance 
because the two programs have different running times. 
We will call these two cases \emph{hidden} and \emph{visible} choice, respectively. 

It is possible to model players' strategies, as well as hidden and visible choices, 
as operations on channels. 
This means that we can look at the whole system as if it were a single channel, 
which will turn out to be useful for some proofs of our technical results.
Next section is dedicated to the definition of these operators. 
We will calculate the exact values for our example in Section~\ref{sec:games-setup}.

\section{Choice operators for protocol composition}
\label{sec:operators}
In this section we define 
the operators of visible and hidden choice for protocol 
composition. These operators are formally defined on the channel matrices of the protocols, and, 
since  channels are a particular kind of 
matrix, we use these matrix operations to define
the operations of visible and hidden choice among
channels, and to prove important properties of 
these channel operations.

\subsection{Matrices, and their basic operators}
\label{sec:matrices-basics}

Given two  sets $\calx$ and $\caly$, a \emph{matrix} 
is a total function of type $\calx \times \caly \rightarrow \reals$.
Two matrices 
$M_{1}: \calx_{1} \times \caly_{1} \rightarrow \reals$ and 
$M_{2}: \calx_{2} \times \caly_{2} \rightarrow \reals$ 
are said to be \emph{compatible} if $\calx_{1} = \calx_{2}$.
If it is also the case that $\caly_{1} = \caly_{2}$, we say 
that the matrices \emph{have the same type}.
The \emph{scalar multiplication} $r{\cdot}M$ between a scalar $r$ and
a matrix $M$ is defined as usual, and so is the \emph{summation}
$\left(\sum_{i \in \cali} M_{i}\right)(x,y) = M_{i_{1}}(x,y) \add \ldots \add M_{i_{n}}(x,y)$
of a family $\{M_{i}\}_{i \in \cali}$ of matrices all of a same type.

Given a family $\{M_{i}\}_{i \in \cali}$ of
compatible matrices s.t. each $M_{i}$ has type 
$\calx \times \caly_{i} \rightarrow \reals$, their \emph{concatenation} $\bigconc_{i \in \cali}$ 
is the matrix having all columns of every matrix in the family,
in such a way that every column is tagged with the matrix it
came from.
Formally, 
$\left( \bigconc_{i \in \cali} M_{i} \right)(x,(y,j)) = \,M_{j}(x,y)$, 
if $y \in \caly_{j}$,
and the resulting matrix has type
$\calx \times \left( \bigsqcup_{i \in \cali} \caly_{i} \right) \rightarrow \reals$.~\footnote{%
$\bigsqcup_{i \in \cali} \caly_{i} = \caly_{i_{1}} \sqcup \caly_{i_{2}} \sqcup \ldots \sqcup \caly_{i_{n}}$ denotes the \emph{disjoint union} 
$\{ (y,i) \mid y \in \caly_{i}, i \in \cali \}$
of the sets $\caly_{i_{1}}$, $\caly_{i_{2}}$, $\ldots$, $\caly_{i_{n}}$.
}
When the family $\{M_{i}\}$ has only two elements
we may use the \emph{binary} version $\conc$ of the
concatenation operator.
The following 
depicts the concatenation of two matrices $M_{1}$ and $M_{2}$ in tabular form.
$$
\begin{array}{|c|cc|}
\hline
M_{1} & y_{1} & y_{2} \\ \hline
x_{1} & 1 & 2 \\
x_{2} & 3 & 4 \\ \hline
\end{array}
\conc
\begin{array}{|c|ccc|}
\hline
M_{2} & y_{1} & y_{2} & y_{3} \\ \hline
x_{1} & 5 & 6 & 7 \\ 
x_{2} & 8 & 9 & 10 \\ \hline
\end{array} 
\;=\;
\begin{array}{|c|ccccc|}
\hline
M_{1} \conc M_{2} & (y_{1},1) & (y_{2},1) & (y_{1},2) & (y_{2},2) & (y_{3},2) \\ \hline
x_{1} & 1 & 2 & 5 & 6 & 7 \\ 
x_{2} & 3 & 4 & 8 & 9 & 10 \\ \hline
\end{array}
$$

\subsection{Channels, and their hidden and visible choice operators}
\label{sec:operators-definition}

A channel 
is a \emph{stochastic} matrix, i.e., 
all elements are non-negative, and all rows sum up to 1.
Here we will define two operators specific for channels.
In the following, for any real value $0 \leq p \leq 1$, 
we denote by $\overline{p}$ the value $1 - p$.

\subsubsection{Hidden choice}
The first operator models a hidden probabilistic 
choice among channels.
Consider a family $\left\{ C_{i} \right\}_{i \in \cali}$ 
of channels of a same type.
Let $\mu \in \dist{\cali}$ be a probability
distribution on the elements of the index set $\cali$.
Consider an input $x$ is fed to one of
the channels in $\left\{ C_{i} \right\}_{i \in \cali}$,
where the channel is randomly picked according to $\mu$.
More precisely, an index $i \in \cali$ is sampled with 
probability $\mu(i)$,  then the input
$x$ is fed to channel $C_{i}$,
and the output $y$ produced by the channel is then made visible, 
but not the index $i$ of the channel that was used.
Note that we consider hidden choice only among
channels of a same type:  if the sets of outputs
were not identical, the produced output 
might implicitly reveal which channel was used.

Formally, given a family $\{C_{i}\}_{i \in \cali}$ of 
channels s.t. each $C_{i}$ has same type 
$\calx \times \caly \rightarrow \reals$, the 
\emph{hidden choice operator} $\HChoice{i}{\mu}$ 
is defined as 
$
\HChoice{i}{\mu} C_{i} = \sum_{i \in \cali} \mu(i) \,C_{i}
$.

\begin{restatable}[Type of hidden choice]{prop}{restypehiddenchoice}
\label{prop:type-hidden-choice}
Given a family $\{C_{i}\}_{i \in \cali}$ of 
channels of  type $\calx \times \caly \rightarrow \reals$, 
and a distribution $\mu$ on $\cali$, 
the hidden choice 
$\HChoice{i}{\mu} C_{i}$ is 
a channel of type
$\calx \times \caly \rightarrow \reals$.
\end{restatable}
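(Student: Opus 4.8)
The plan is to verify directly from the definition that $\HChoice{i}{\mu} C_{i} = \sum_{i \in \cali} \mu(i)\, C_i$ satisfies the two defining properties of a channel (stochastic matrix): that every entry lies in $[0,1]$ (equivalently, non-negativity of entries together with rows summing to at most one), and that each row sums exactly to $1$. Since all the $C_i$ have the same type $\calx \times \caly \rightarrow \reals$, the summation $\sum_{i \in \cali} \mu(i)\, C_i$ is well-defined as a matrix of that same type, so the type claim is immediate once the stochasticity is established. This is a routine but necessary bookkeeping argument, so I would keep it short.

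First I would fix an input $x \in \calx$ and an output $y \in \caly$ and write the entry of the combined matrix as $\left(\HChoice{i}{\mu} C_i\right)(x,y) = \sum_{i \in \cali} \mu(i)\, C_i(x,y)$. Non-negativity is then clear: each $\mu(i) \geq 0$ because $\mu \in \dist{\cali}$ is a probability distribution, and each $C_i(x,y) \geq 0$ because every $C_i$ is a channel, so the entry is a non-negative combination of non-negative terms. For the upper bound, I would note that $C_i(x,y) \leq 1$ gives $\sum_i \mu(i) C_i(x,y) \leq \sum_i \mu(i) = 1$, so every entry lies in $[0,1]$.

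Next I would check the row-sum condition, which is the heart of the argument. Fixing $x \in \calx$ and summing over all outputs,
\[
\sum_{y \in \caly} \left(\HChoice{i}{\mu} C_i\right)(x,y)
= \sum_{y \in \caly} \sum_{i \in \cali} \mu(i)\, C_i(x,y)
= \sum_{i \in \cali} \mu(i) \sum_{y \in \caly} C_i(x,y)
= \sum_{i \in \cali} \mu(i) \cdot 1
= 1,
\]
where the interchange of the two finite sums is justified because $\cali$ is finite (the paper considers finite index sets), the inner sum $\sum_{y} C_i(x,y) = 1$ is the row-stochasticity of each channel $C_i$, and $\sum_{i} \mu(i) = 1$ is the fact that $\mu$ is a probability distribution. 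This establishes that every row sums to exactly $1$, so the combined matrix is stochastic and hence a channel of type $\calx \times \caly \rightarrow \reals$, as claimed.

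I do not expect a genuine obstacle here: the proof is essentially a double-counting interchange of finite sums combined with the defining normalizations of $\mu$ and of the $C_i$. The only point requiring a moment of care is making explicit that the output set $\caly$ is common to all the $C_i$ (this is exactly the ``same type'' hypothesis), since otherwise the pointwise sum $\sum_i \mu(i) C_i$ would not even be a well-formed matrix and the output itself could leak the index $i$; the hypothesis of the proposition rules this out, so the argument goes through cleanly.
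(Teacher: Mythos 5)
Your proposal is correct and follows essentially the same route as the paper's own proof: both verify that the entries of $\sum_{i}\mu(i)C_{i}$ lie in $[0,1]$ using the convexity of the coefficients $\mu(i)$, and both establish row-stochasticity by the same interchange of the finite sums over $\caly$ and $\cali$ followed by the normalizations $\sum_{y}C_{i}(x,y)=1$ and $\sum_{i}\mu(i)=1$. The only cosmetic difference is that you spell out the upper bound $\sum_{i}\mu(i)C_{i}(x,y)\le 1$ explicitly, where the paper simply invokes convexity of the combination.
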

See Appendix~\ref{sec:proofs} for the proof.

In the particular case in which the family $\{C_{i}\}$ 
has only two elements $C_{i_{1}}$ and $C_{i_{2}}$, the
distribution $\mu$ on indexes is completely determined 
by a real value $0 \leq p \leq 1$ s.t. 
$\mu(i_{1}) = p$ and $\mu(i_{2}) = \overline{p}$.
In this case we may use the \emph{binary} version $\hchoice{p}$
of the hidden choice operator:
$
C_{i_{1}}{\hchoice{p}}C_{i_{2}} = p\,C_{i_{1}}{+}\overline{p}\,C_{i_{2}}
$.
%
The example 
below depicts the hidden choice
between channels $C_{1}$ and $C_{2}$, with probability
$p{=}\nicefrac{1}{3}$.
$$
\begin{array}{|c|cc|}
\hline
C_{1} & y_{1} & y_{2} \\ \hline
x_{1} & \nicefrac{1}{2} & \nicefrac{1}{2} \\
x_{2} & \nicefrac{1}{3} & \nicefrac{2}{3} \\ \hline
\end{array}
\hchoice{\nicefrac{1}{3}}
\begin{array}{|c|cc|}
\hline
C_{2} & y_{1} & y_{2} \\ \hline
x_{1} & \nicefrac{1}{3} & \nicefrac{2}{3} \\
x_{2} & \nicefrac{1}{2} & \nicefrac{1}{2} \\ \hline
\end{array}
\; = \;
\begin{array}{|c|cc|}
\hline
C_{1} \hchoice{\nicefrac{1}{3}} C_{2} & y_{1} & y_{2} \\ \hline
x_{1} & \nicefrac{7}{18} & \nicefrac{11}{18} \\ 
x_{2} & \nicefrac{4}{9} & \nicefrac{5}{9} \\ \hline
\end{array}
$$

\subsubsection{Visible choice}
The second operator models a visible probabilistic 
choice among channels.
Consider a family $\left\{ C_{i} \right\}_{i \in \cali}$ 
of compatible channels.
Let $\mu \in \dist{\cali}$ be a probability
distribution on the elements of the index set $\cali$.
Consider an input $x$ is fed to one of
the channels in $\left\{ C_{i} \right\}_{i \in \cali}$,
where the channel is randomly picked according to $\mu$.
More precisely, an index $i \in \cali$ is sampled with 
probability $\mu(i)$,  then the input
$x$ is fed to channel $C_{i}$, and the output $y$ produced by the channel is then made visible, 
along with the index $i$ of the channel that was used.
Note that visible choice makes sense only 
between compatible channels,
but it is not required that the output set of each channel 
be the same.

Formally, given  $\{C_{i}\}_{i \in \cali}$ of
compatible channels s.t. each $C_{i}$ has type 
$\calx \times \caly_{i} \rightarrow \reals$, 
and a distribution $\mu$ on $\cali$, the 
\emph{visible choice operator} $\VChoice{i}{\mu}$ 
is defined as 
$
\VChoice{i}{\mu} C_{i} = \bigconc_{i \in \cali} \;\mu(i)\, C_{i}
$.

\begin{restatable}[Type of visible choice]{prop}{restypevisiblechoice}
\label{prop:type-visible-choice}
Given a family $\{C_{i}\}_{i \in \cali}$ of 
compatible channels s.t. each $C_{i}$ has type 
$\calx \times \caly_{i} \rightarrow \reals$,
and a distribution $\mu$ on $\cali$, 
the result of the visible choice 
$\VChoice{i}{\mu} C_{i}$ is a channel
of type 
$\calx \times \left( \bigsqcup_{i \in \cali} \caly_{i} \right) \rightarrow \reals$.
\end{restatable}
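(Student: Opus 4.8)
The plan is to verify the two defining properties of a channel directly from the definitions, since the claim is essentially a structural bookkeeping result. Recall that $\VChoice{i}{\mu} C_{i} = \bigconc_{i \in \cali} \mu(i)\, C_{i}$, so the proof reduces to checking that (i) the result has the asserted type, (ii) all of its entries are non-negative, and (iii) each of its rows sums to $1$. The type is immediate: scalar multiplication $\mu(i) \cdot C_{i}$ preserves the type $\calx \times \caly_{i} \rightarrow \reals$ of each summand, and the concatenation operator, by its definition, produces a matrix of type $\calx \times \left( \bigsqcup_{i \in \cali} \caly_{i} \right) \rightarrow \reals$. So (i) requires only citing the typing rule for $\bigconc$ together with the fact that the channels are compatible (all share input set $\calx$), which is exactly the hypothesis needed for concatenation to be well-defined.

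For non-negativity, I would argue entrywise. By the definition of concatenation, a generic entry of the result is $\left( \VChoice{i}{\mu} C_{i} \right)(x,(y,j)) = \mu(j)\, C_{j}(x,y)$ for $y \in \caly_{j}$. Since each $C_{j}$ is a channel we have $C_{j}(x,y) \geq 0$, and since $\mu$ is a probability distribution we have $\mu(j) \geq 0$; the product of two non-negative reals is non-negative, giving (ii). (An upper bound of $1$ on each entry is not needed for the channel property, but follows the same way since $\mu(j) \le 1$ and $C_j(x,y) \le 1$.)

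The one genuinely computational step is (iii), verifying that rows sum to $1$, and here the only care required is in organizing the sum over the disjoint-union output set. Fixing an arbitrary input $x \in \calx$, I would split the single sum over columns $(y,j) \in \bigsqcup_{i \in \cali} \caly_{i}$ into the iterated sum over $j \in \cali$ and then $y \in \caly_{j}$:
\[
\sum_{(y,j)} \left( \VChoice{i}{\mu} C_{i} \right)(x,(y,j))
= \sum_{j \in \cali} \sum_{y \in \caly_{j}} \mu(j)\, C_{j}(x,y)
= \sum_{j \in \cali} \mu(j) \sum_{y \in \caly_{j}} C_{j}(x,y).
\]
The inner sum equals $1$ because each $C_{j}$ is a channel, so the expression collapses to $\sum_{j \in \cali} \mu(j) = 1$, using that $\mu \in \dist{\cali}$. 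This establishes (iii) and completes the argument.

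I do not expect a serious obstacle: the result is a routine verification rather than a theorem with a hard core, and it parallels the proof of Proposition~\ref{prop:type-hidden-choice} for hidden choice. The only place demanding attention is making the re-indexing over the disjoint union precise, i.e.\ justifying the interchange from a flat sum over tagged columns $(y,j)$ to the iterated sum over $j$ and $y \in \caly_{j}$; this is valid because the disjoint union partitions the column index set by its tag $j$, so no column is counted twice and none is omitted. Everything else is a direct appeal to the channel axioms for the $C_{i}$ and to the fact that $\mu$ is a distribution.
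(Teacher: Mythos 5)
Your proof is correct and follows essentially the same route as the paper's: the type is read off from the definitions of scalar multiplication and concatenation, non-negativity is checked entrywise via the identity $\left(\VChoice{i}{\mu} C_{i}\right)(x,(y,j)) = \mu(j)\,C_{j}(x,y)$, and the row sums are computed by splitting the sum over the disjoint union into an iterated sum over $j$ and $y \in \caly_{j}$. No gaps; your extra remark justifying the re-indexing over the tagged column set is a minor refinement the paper leaves implicit.
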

See Appendix~\ref{sec:proofs} for the proof.

In the particular case the family $\{C_{i}\}$ 
has only two elements $C_{i_{1}}$ and $C_{i_{2}}$, the
distribution $\mu$ on indexes is completely determined 
by a real value $0 \leq p \leq 1$ s.t. 
$\mu(i_{1}) = p$ and $\mu(i_{2}) = \overline{p}$.
In this case we may use the \emph{binary} version $\vchoice{p}$
of the visible choice operator:
$
C_{i_{1}} \vchoice{p} C_{i_{2}} = p\, C_{i_{1}} \conc \overline{p}\, C_{i_{2}}
$.
%
The following 
depicts the visible choice 
betwee channels $C_{1}$ and $C_{3}$, with probability
$p{=}\nicefrac{1}{3}$.
$$
\begin{array}{|c|cc|}
\hline
C_{1} & y_{1} & y_{2} \\ \hline
x_{1} & \nicefrac{1}{2} & \nicefrac{1}{2} \\
x_{2} & \nicefrac{1}{3} & \nicefrac{2}{3} \\ \hline
\end{array}
\vchoice{\nicefrac{1}{3}}
\begin{array}{|c|cc|}
\hline
C_{3} & y_{1} & y_{3} \\ \hline
x_{1} & \nicefrac{1}{3} & \nicefrac{2}{3} \\
x_{2} & \nicefrac{1}{2} & \nicefrac{1}{2} \\ \hline
\end{array}
\; = \;
\begin{array}{|c|cccc|}
\hline
C_{1} \vchoice{\nicefrac{1}{3}} C_{3} & (y_{1},1) & (y_{2},1) & (y_{1},3) & (y_{3},3) \\ \hline
x_{1} & \nicefrac{1}{6} & \nicefrac{1}{6} & \nicefrac{2}{9} & \nicefrac{4}{9} \\ 
x_{2} & \nicefrac{1}{9} & \nicefrac{2}{9} & \nicefrac{1}{3} & \nicefrac{1}{3} \\ \hline
\end{array}
$$

\subsection{Properties of hidden and visible choice operators}
\label{sec:operators-properties}

We now prove algebraic properties of channel operators.
These properties will be useful when we model a (more complex) 
protocol as the composition of smaller channels via hidden or visible 
choice.

Whereas the properties of hidden choice 
hold generally with equality, those of 
visible choice are subtler.
For instance, visible choice is not idempotent,
since in general $C{\vchoice{p}}C \neq C$.
(In fact if $C$ has type $\calx \times \caly \rightarrow \reals$,
$C{\vchoice{p}}C$ has type $\calx \times (\caly \sqcup \caly) \rightarrow \reals$.)
However, idempotency and other properties involving visible
choice hold if we replace the notion of 
equality with the more relaxed notion of \qm{equivalence} 
between channels.
Intuitively, two channels are equivalent if they
have the same input space and yield 
the same value of vulnerability for every 
prior and every vulnerability function.

\begin{Definition}[Equivalence of channels]
\label{def:equivalence-channels}
Two compatible channels $C_{1}$ and $C_{2}$
with domain $\calx$
are \emph{equivalent}, denoted by $C_{1} \equiv C_{2}$,
if for every prior $\pi \in \dist{\calx}$ and every 
posterior vulnerability $\vf$ we have
$
\postvf{\pi}{C_{1}} = \postvf{\pi}{C_{2}}
$.
\end{Definition}

Two equivalent channels are indistinguishable from the point of view of information leakage, and in most cases we can just identify them. 
Indeed, nowadays there is a tendency to use \emph{abstract channels}~\cite{McIver:14:POST,Alvim:16:CSF}, which capture exactly 
the important behavior with respect to any  form of leakage. 
In this paper, however, we cannot use abstract channels because the hidden choice operator needs a concrete representation in order to be defined
unambiguously. 

The first properties we prove regard idempotency of operators, which
can be used do simplify the representation of some protocols.

\begin{restatable}[Idempotency]{prop}{residempotency}
\label{prop:idempotency}
Given a family $\{C_{i}\}_{i \in \cali}$ of 
channels s.t. $C_{i} = C$ for all $i \in \cali$,
and a distribution $\mu$ on $\cali$, then:
(a) ${\HChoice{i}{\mu}C_{i} = C}$; and
(b) ${\VChoice{i}{\mu}C_{i} \equiv C}$.
\end{restatable}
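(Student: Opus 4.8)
The plan is to treat the two parts separately, since (a) holds with genuine equality while (b) only holds up to the equivalence of Definition~\ref{def:equivalence-channels}.

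For part (a), I would simply unfold the definition of hidden choice. Since $C_i = C$ for every $i \in \cali$, we have $\HChoice{i}{\mu}C_i = \sum_{i \in \cali}\mu(i)\,C_i = \left(\sum_{i\in\cali}\mu(i)\right)C = C$, using that $\mu$ is a probability distribution and hence $\sum_{i\in\cali}\mu(i)=1$. This is an entrywise equality of matrices, so nothing further is required; by Proposition~\ref{prop:type-hidden-choice} both sides are channels of the same type $\calx\times\caly\rightarrow\reals$, so the statement is well-typed.

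For part (b), equality cannot hold because visible choice tags each output with the index of the channel used, so $\VChoice{i}{\mu}C_i$ has output space $\bigsqcup_{i\in\cali}\caly$ rather than $\caly$. Instead I would verify the equivalence directly from the definition: writing $D \eqdef \VChoice{i}{\mu}C_i$, I must show $\postvf{\pi}{D} = \postvf{\pi}{C}$ for an arbitrary prior $\pi\in\dist{\calx}$ and an arbitrary vulnerability $\vf$. The entries of $D$ are $D(x,(y,i)) = \mu(i)\,C(x,y)$. The key observation is that the revealed index $i$ carries no information about the secret: computing the joint distribution $p(x,(y,i)) = \pi(x)\,\mu(i)\,C(x,y)$ and the output marginal $p((y,i)) = \mu(i)\sum_{x}\pi(x)C(x,y)$, the factor $\mu(i)$ cancels in the posterior, so that the posterior $p_{X\mid(y,i)}$ coincides with the posterior $p_{X\mid y}$ induced by $C$ alone, for every $i$ in the support of $\mu$.

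Having established that the posteriors agree, the final step is to regroup the sum over outputs. Writing $p_{C}(y) \eqdef \sum_{x}\pi(x)C(x,y)$ and collecting the terms of $\postvf{\pi}{D} = \sum_{(y,i)} p((y,i))\,\priorvf{p_{X\mid(y,i)}}$ by $y$, the inner $i$-sum contributes $\sum_{i}\mu(i)=1$, leaving exactly $\sum_{y} p_{C}(y)\,\priorvf{p_{X\mid y}} = \postvf{\pi}{C}$. Since $\pi$ and $\vf$ were arbitrary, this yields $D\equiv C$. The only genuine subtlety is the bookkeeping in part (b): one must check that the disjoint-union output space does not actually refine the attacker's knowledge, which is precisely the cancellation of $\mu(i)$ in the posterior. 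Everything else is routine, and this same cancellation argument is the one I would expect to reuse when proving the visible-choice analogues of the remaining algebraic laws.
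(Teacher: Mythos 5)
Your part (a) is exactly the paper's argument: unfold the definition of hidden choice and use $\sum_{i}\mu(i)=1$; nothing to add there. For part (b) you take a genuinely different route. The paper unfolds the definition to get $\bigconc_{i}\mu(i)\,C$ and then disposes of the equivalence in one line by appealing to the characterization lemma of channel equivalence (Lemma~\ref{lemma:equivalent-channels-conditions}, which reduces equivalence to expressing columns of each channel as convex combinations of columns of the other's zero-column extension). You instead verify Definition~\ref{def:equivalence-channels} directly: you compute the joint and the output marginal of $\VChoice{i}{\mu}C_{i}$, observe that the factor $\mu(i)$ cancels so that the posterior $p_{X\mid(y,i)}$ equals $p_{X\mid y}$ for every $i$ in the support of $\mu$, and then regroup the outer sum using $\sum_{i}\mu(i)=1$ to conclude $\postvf{\pi}{\VChoice{i}{\mu}C_{i}}=\postvf{\pi}{C}$ for arbitrary $\pi$ and $\vf$. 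Both arguments are sound, and your bookkeeping (zero-probability outputs, the support restriction) is in order. What the paper's route buys is brevity and uniformity, since the same lemma is invoked for all the visible-choice laws in Propositions~\ref{prop:reorganization-operators}, \ref{prop:properties-binary-visible-choice} and \ref{prop:distributivity}; what your route buys is self-containedness --- it relies only on the definition of posterior vulnerability, not on the (nontrivial, imported) characterization of channel equivalence, and it makes explicit the information-theoretic reason the statement is true, namely that the revealed index is independent of the secret. As you note, the same cancellation argument would indeed carry over to the other visible-choice equivalences, so your approach scales comparably.
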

See Appendix~\ref{sec:proofs} for the proof.

The following properties regard the reorganization of operators, 
and they will be essential in some technical results in which we invert 
the order in which hidden and visible choice are applied in a protocol.

\begin{restatable}[\qm{Reorganization of operators}]{prop}{resreorganizationoperators}
\label{prop:reorganization-operators}
Given a family $\{C_{i j}\}_{i \in \cali, j \in \calj}$ of 
channels indexed by sets $\cali$ and $\calj$,
a distribution $\mu$ on $\cali$, and
a distribution $\eta$ on $\calj$:

\begin{enumerate}
\renewcommand{\labelenumi}{(\alph{enumi})}
\item 
${\HChoice{i}{\mu} \;\HChoice{j}{\eta} C_{i j} = \HChoiceDouble{i}{\mu}{j}{\eta} C_{i j}}$, if all $C_{i}$'s have the same type;\\

\item 
${\VChoice{i}{\mu} \; \VChoice{j}{\eta} C_{i j} \equiv \VChoiceDouble{i}{\mu}{j}{\eta} C_{i j}}$, if all $C_{i}$'s are compatible; and\\

\item \label{item:c}
${\HChoice{i}{\mu} \; \VChoice{j}{\eta} C_{i j} \equiv \VChoice{j}{\eta}\; \HChoice{i}{\mu} C_{i j}}$, 
if, for each $i$, all $C_{i j}$'s have same type $\calx \times \caly_{j}\rightarrow\reals$.
\end{enumerate}
\end{restatable}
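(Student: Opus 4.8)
The plan is to treat the three parts separately, handling (a) and (c) by direct computation at the level of matrix entries, and isolating the one genuinely subtle point in (b). For part (a), I would simply unfold the definition of hidden choice as a weighted matrix sum: writing $\HChoice{j}{\eta} C_{ij} = \sum_{j \in \calj} \eta(j)\, C_{ij}$ and applying the outer hidden choice yields $\sum_{i \in \cali}\mu(i)\sum_{j\in\calj}\eta(j)\,C_{ij}$. By linearity of scalar multiplication and finite matrix summation this equals $\sum_{(i,j)}\mu(i)\eta(j)\,C_{ij}$, which is exactly $\HChoiceDouble{i}{\mu}{j}{\eta}C_{ij}$ under the product distribution. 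The typing hypothesis (all $C_{ij}$ of the same type) guarantees every intermediate sum is defined, so the identity holds with genuine equality.

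For part (c) I would likewise compute both sides entrywise. On the left, the inner visible choice $\VChoice{j}{\eta}C_{ij} = \bigconc_{j}\eta(j)\,C_{ij}$ has output space $\bigsqcup_j \caly_j$, which by the typing hypothesis is the same for every $i$, so the outer hidden choice over $i$ is well-defined and its entry on column $(y,j)$ is $\sum_i \mu(i)\,\eta(j)\,C_{ij}(x,y)$. On the right, the inner hidden choice produces $\sum_i \mu(i) C_{ij}$ for each $j$, and the visible choice then places $\eta(j)\sum_i\mu(i)\,C_{ij}(x,y)$ on the very same column $(y,j)$. Since both output spaces are literally $\bigsqcup_j\caly_j$ and the entries coincide, the two matrices are in fact identical, so equivalence follows a fortiori; the only thing to check is that summation distributes over concatenation, which is immediate from the entrywise definition of $\bigconc$.

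Part (b) is where the real subtlety lies, and I expect it to be the main obstacle. Unfolding the two sides, the iterated $\VChoice{i}{\mu}\VChoice{j}{\eta}C_{ij}$ produces a channel whose columns are indexed by the nested disjoint union $\bigsqcup_i(\bigsqcup_j \caly_j)$, i.e.\ by labels of the shape $((y,j),i)$, whereas the flattened $\VChoiceDouble{i}{\mu}{j}{\eta}C_{ij}$ indexes its columns by $\bigsqcup_{(i,j)}\caly_j$, i.e.\ by $(y,(i,j))$. The probability mass on corresponding columns is $\mu(i)\eta(j)\,C_{ij}(x,y)$ in both cases, so the two channels differ only by a bijective renaming of the output alphabet; strict matrix equality therefore fails, which is precisely why the statement is phrased with $\equiv$ rather than $=$.

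To close (b) I would establish the small lemma that equivalence is invariant under bijective relabeling of outputs: if $C_2$ is obtained from $C_1$ by a bijection $r$ on the output set, then for every prior $\pi$ the induced joint distribution, the family of posteriors $\{p_{X\mid y}\}$, and their probabilities $p(y)$ are all carried over unchanged, only the names of the observations being altered, so that $\postvf{\pi}{C_1}=\sum_y p(y)\,\vf(p_{X\mid y})=\postvf{\pi}{C_2}$ for every vulnerability $\vf$. Applying the evident bijection between the labels $((y,j),i)$ and $(y,(i,j))$ then yields $\VChoice{i}{\mu}\VChoice{j}{\eta}C_{ij}\equiv\VChoiceDouble{i}{\mu}{j}{\eta}C_{ij}$, completing the proof.
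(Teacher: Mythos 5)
Your proof is correct, and part (a) coincides with the paper's own derivation. Where you diverge is in how the two $\equiv$ steps are justified. The paper proves (b) and (c) by the same unfold--reorganize--refold template as (a), and discharges each $\equiv$ by appealing to a characterization lemma imported from the literature (two channels are equivalent iff every column of each is a convex combination of columns of the other's zero-column extension). You instead observe that in (c) the two sides are \emph{literally the same matrix} once entries are computed on the common output set $\bigsqcup_{j}\caly_{j}$, so no equivalence machinery is needed there at all --- a sharper conclusion than the paper records, since it only claims $\equiv$ --- and that in (b) the two sides differ exactly by the bijection $((y,j),i)\mapsto(y,(i,j))$ on column labels, which you handle with a small self-contained lemma that posterior vulnerability is invariant under bijective relabeling of outputs. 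Your route is more elementary: it needs nothing beyond the definition of $\postvf{\pi}{C}$ as $\sum_y p(y)\priorvf{p_{X\mid y}}$, whereas the paper leans on a nontrivial cited result. What the paper's choice buys is uniformity: the same characterization lemma also covers equivalences that are \emph{not} mere relabelings, such as idempotency $\VChoice{i}{\mu}C_{i}\equiv C$, where the two channels do not even have the same number of columns, so one tool serves all the visible-choice properties at once. Either justification is sound for the present proposition.
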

See Appendix~\ref{sec:proofs} for the proof.

Finally, analogous properties of the binary operators are shown in Appendix~\ref{sec:operators-properties-binary}.

\subsection{Properties of vulnerability w.r.t. channel operators}
\label{sec:convexity-vulnerability}

We now derive some relevant properties of vulnerability w.r.t. our 
channel operators, which will be later used to obtain the Nash equilibria 
in information leakage games with different choice operations.

The first result states that posterior vulnerability is 
convex w.r.t. hidden choice (this result was already presented in \cite{Alvim:17:GameSec}), and 
linear w.r.t. to visible choice.

\begin{restatable}[Convexity/linearity of posterior vulnerability w.r.t. choices]{Theorem}{resconvexVq}
\label{theo:convex-V-q}
Let $\{C_{i}\}_{i \in \cali}$ be a family of channels,
and $\mu$ be a distribution on $\cali$.
Then, for every distribution $\pi$ on $\calx$, and every
vulnerability $\vf$:
\begin{enumerate}
\item \label{enumerate:vg:h:convex}
posterior vulnerability is convex w.r.t. to hidden choice:
$
\postvf{\pi}{\HChoice{i}{\mu} C_{i}} \leq \sum_{i \in \cali} \mu(i) \,\postvf{\pi}{C_{i}}
$
if all $C_{i}$'s have the same type.

\item \label{enumerate:vg:v:linear}
posterior vulnerability is linear w.r.t. to visible choice:
$
\postvf{\pi}{\VChoice{i}{\mu} C_{i}} = \sum_{i \in \cali} \mu(i) \,\postvf{\pi}{C_{i}}
$
if all $C_{i}$'s are compatible.
\end{enumerate}
\end{restatable}

\begin{proof}
\begin{enumerate}

\item Let us call $\calx \times \caly \rightarrow \reals$ the 
type of each channel $C_{i}$ in the family $\{C_{i}\}$.
Then:
\begin{align*}
\postvf{\pi}{\HChoice{i}{\mu} C_{i}} 
=&\, \postvf{\pi}{\bigadd_{i} \mu(i)C_{i}} & \text{(by definition of hidden choice)} \\ 
=&\, \sum_{y \in \caly} p(y) \cdot \priorvf{\frac{\pi(\cdot) \sum_{i}\mu(i)C_{i}(\cdot,y) }{p(y)}}  & \text{(by definition of posterior $\vf$)}\\ 
=&\, \sum_{y \in \caly} p(y) \cdot \priorvf{\sum_{i}\mu(i)\frac{\pi(\cdot) C_{i}(\cdot,y) }{p(y)}}  & \text{}\\ 
\leq&\, \sum_{y \in \caly} p(y) \cdot \sum_{i} \mu(i) \priorvf{\frac{\pi(\cdot) C_{i}(\cdot,y) }{p(y)}}  & \text{(by convexity of $\vf$)}\\ 
=&\, \sum_{i} \mu(i) \sum_{y \in \caly} p(y) \priorvf{\frac{\pi(\cdot) C_{i}(\cdot,y) }{p(y)}}  & \text{}\\
=&\, \sum_{i} \mu(i) \postvf{\pi}{C_{i}}  & \text{}
\end{align*}
where $p(y) = \sum_{x\in\calx} \pi(x) \sum_{i} \mu(i)C_{i}(x,y)$.
\\

\item Let us call $\calx \times \caly_{i} \rightarrow \reals$ the 
type of each channel $C_{i}$ in the family $\{C_{i}\}$.
Then:
\begin{align*}
\postvf{\pi}{\VChoice{i}{\mu} C_{i}}
=&\, \postvf{\pi}{\bigconc_{i} \mu(i)C_{i}} & \text{(by definition of visible choice)} \\ 
=&\, \sum_{y \in \caly} p(y) \cdot \priorvf{\frac{\pi(\cdot) \bigconc_{i}\mu(i)C_{i}(\cdot,y) }{p(y)}}  & \text{(by definition of posterior $\vf$)}\\ 
=&\, \sum_{y \in \caly} p(y) \cdot \priorvf{\bigconc_{i}\mu(i)\frac{\pi(\cdot) C_{i}(\cdot,y) }{p(y)}}  & \text{}\\ 
=&\, \sum_{y \in \caly} p(y) \cdot \sum_{i} \mu(i) \priorvf{\frac{\pi(\cdot) C_{i}(\cdot,y) }{p(y)}}  & \text{(see (*) below)}\\ 
=&\, \sum_{i} \mu(i) \sum_{y \in \caly} p(y) \priorvf{\frac{\pi(\cdot) C_{i}(\cdot,y) }{p(y)}}  & \text{}\\
=&\, \sum_{i} \mu(i) \postvf{\pi}{C_{i}}  & \text{}
\end{align*}
where $p(y) = \sum_{x\in\calx} \pi(x) \sum_{i} \mu(i)C_{i}(x,y)$, and 
step (*) holds because in the vulnerability of a concatenation of matrices
every column will contribute to the vulnerability in proportion to its weight
in the concatenation, and hence it is possible to break the vulnerability of a 
concatenated matrix as the weighted sum of the vulnerabilities of its sub-matrices.
\end{enumerate}
\end{proof}

The next result is concerned with posterior vulnerability 
under the composition of channels using both operators.

\begin{restatable}[Convex-linear payoff function]{Corollary}{resconcaveconvexV}
\label{cor:concave-convex-V}
Let $\{C_{i j}\}_{i \in \cali, j \in \calj}$ be a family of channels,
all with domain $\calx$ and with the same type, and let $\pi\in \distr\calx$, and $\vf$ be any vulnerability. 
Define
$\Pay: \distr\cali\times\distr\calj\rightarrow \reals$ as follows:
$
\Pay(\mu, \eta) \eqdef \postvf{\pi}{\HChoice{i}{\mu} \; \VChoice{j}{\eta} \; C_{i j}}
$.
Then $\Pay$ is convex on $\mu$ and linear on $\eta$.
\end{restatable}

\begin{proof}
To see that $U(\mu,\eta)$ is convex on $\mu$, note that:
\begin{align*}
U(\mu,\eta) 
=&\, \postvf{\pi}{\HChoice{i}{\mu} \; \VChoice{j}{\eta} \; C_{i j}} & \text{(by definition)} \\
\leq&\, \sum_{i} \mu(i) \; \postvf{\pi}{\VChoice{j}{\eta} \; C_{i j}} & \text{(by Theorem~\ref{theo:convex-V-q})}
\end{align*}

To see that $U(\mu,\eta)$ is linear on $\eta$, note that:
\begin{align*}
U(\mu,\eta) 
=&\, \postvf{\pi}{\HChoice{i}{\mu} \; \VChoice{j}{\eta} \; C_{i j}} & \text{(by definition)} \\
=&\, \postvf{\pi}{\VChoice{j}{\eta} \; \HChoice{i}{\mu} \; C_{i j}} & \text{(by Prop.~\ref{prop:reorganization-operators})} \\
=&\, \sum_{j} \eta(j) \; \postvf{\pi}{\HChoice{i}{\mu} \; C_{i j}} & \text{(by Theorem.~\ref{theo:convex-V-q})}
\end{align*}
~
\end{proof}

\section{Information leakage games}
\label{sec:games-setup}
In this section we present our framework for  reasoning about information leakage, 
extending the notion of \emph{information leakage games} proposed in \cite{Alvim:17:GameSec} from only simultaneous games with hidden choice to 
both simultaneous and sequential games, with either hidden or visible choice.

In an information leakage game the defender tries to minimize the leakage of information 
from the system, while the attacker tries to maximize it.
In  this basic scenario, their goals are just opposite (zero-sum). 
Both of them can influence the execution and the observable behavior of the system via a 
specific set of actions.
We assume players to be rational (i.e., they are able to figure out what is 
the best strategy to 
maximize their expected payoff), and that the set of actions and the
payoff function are common knowledge. 

Players choose their own strategy, which in general may be 
probabilistic (i.e., behavioral or mixed)
and choose their action by a random draw 
according to that strategy. 
After both players have performed their actions, the system runs 
and produces some output value which is visible to the attacker and
may leak some information about the secret.
The amount of leakage constitutes the attacker's gain, and the defender's loss.

To quantify the  leakage we model the system as an information-theoretic channel  
(cf. Section~\ref{subsec:qif}). We recall that leakage is defined as the difference (additive leakage) 
or the ratio (multiplicative leakage) between posterior and prior vulnerability. 
Since we are only interested in comparing the leakage of different channels
for a given prior, 
\emph{we will define the payoff just as the posterior vulnerability},
as the value of prior vulnerability will be the same for every channel.

\subsection{Defining information leakage games}
\label{subsec:def:leakage-game}

An (\emph{information}) \emph{leakage game}
consists of:
\begin{itemize}
\item[(1)]
two nonempty sets $\cald$, $\cala$ of defender's and attacker's actions,  
respectively,
\item[(2)]
a function $C: \cald \times \cala \rightarrow (\calx \times \caly \rightarrow \reals)$ 
that associates to each pair of actions  $(d,a)\in\cald \times \cala$ a channel 
$C_{da}: \calx \times \caly \rightarrow \reals$, 
\item[(3)]
a prior $\pi \in \dist\calx$ on secrets, and 
\item[(4)]
a vulnerability measure $\vf$, used to define the payoff function $u: \cald \times \cala \rightarrow\reals$ for pure strategies as
$
u(d,a) \eqdef \postvf{\pi}{C_{da}}
$.
We have only one payoff function because the game is zero-sum.
\end{itemize}
\noindent

Like in traditional game theory, the order of actions and the extent by which a player knows the move performed by the opponent play a critical role in deciding strategies and determining the payoff. 
In security, however, knowledge of the opponent's move affects the game in yet another way:
the effectiveness of the attack, i.e., the amount of leakage, depends crucially on whether or not
the attacker knows what channel is being used. 
It is therefore convenient to distinguish two phases in the leakage game: 

\begin{description}
\item[\textbf{Phase 1 - Determination of players' strategies, and the subsequent choice of their actions}:] Each player determines the most convenient strategy (which in general is probabilistic) for himself, and draws his action accordingly. One of the players may commit first to his action, and his choice may or may not be revealed to the follower. In general, knowledge of the leader's action may help the follower choose a more advantageous strategy. 

\item[\textbf{Phase 2 - Observation of the resulting channel's output, and payoff computation}:]  The attacker observes the output of the selected channel $C_{da}$ and performs his attack on the secret. In case he knows the defender's action, he is able to determine the exact channel $C_{da}$ being used (since, of course, the attacker knows his own action), 
and his payoff will be the posterior vulnerability $\postvf{\pi}{C_{da}}$.
However, if the attacker does not know exactly which channel has been used, then his payoff will be smaller.  
\end{description}

Note that the issues raised in Phase 2 are typical of leakage games; 
they do not have a correspondence (to the best of our knowledge) 
in traditional game theory. 
Indeed, in traditional game theory the resulting payoff is 
a deterministic function of all players' actions.
On the other hand, 
the extra level of randomization provided by the channel
is central to security, as 
it reflects
the principle of preventing the attacker from inferring 
the secret by obfuscating the link between secret and observables. 

Following the above discussion, we consider various possible 
scenarios for games, along two lines of classification.
The first classification concerns Phase 1 of the game, in which 
strategies are selected and actions are drawn, and consists
in three possible orders for the two players' actions.

\begin{description}
\item[{\bf Simultaneous:}]
The players choose (draw) their actions in parallel, each without knowing the choice of the other.
\item[{\bf Sequential, defender-first:}]
The defender draws an action, and commits to it,  before the attacker does.
\item[{\bf Sequential, attacker-first:}]
The attacker draws an action, and commits to it,  before the defender does.
\end{description}
Note that these sequential games may present imperfect information
(i.e., the follower may not know the leader's action), 
and that we have to further specify whether we use 
behavioral or mixed strategies.

The second classification concerns Phase 2 of the game, in which 
some leakage occurs in consequence of the attacker's observation of the channel's
output, and consists in two kinds of knowledge the attacker may have at this 
point about the channel that was used.
\begin{description}
\item[{\bf Visible choice:}] 
The attacker knows the defender's action when he observes the output of the channel, and therefore 
he knows which channel is being used.
Visible choice is modeled by the operator~$\vchoiceop$.  
\item[{\bf Hidden choice:}]
The attacker does not know the defender's action when he observes the output of the channel, and therefore in general he does not exactly know which channel is used (although in some special cases he may infer it from the output).  
Hidden choice is modeled by the operator~$\hchoiceop$. 
\end{description}

Note that the distinction between sequential and simultaneous games
is orthogonal to that between visible and hidden choice.
Sequential and simultaneous games model whether or not, respectively,
the follower's choice can be affected by knowledge of the leader's action.
This dichotomy captures how knowledge about the other player's actions 
can \emph{help a player choose his own action}, 
and it concerns how Phase 1 of the game occurs.
On the other hand, visible and hidden choice capture whether or not, respectively, the 
attacker is able to fully determine the channel representing the system, 
\emph{once the defender and attacker's actions have already been fixed}.
This dichotomy reflects the different \emph{amounts of information leaked} by 
the system as viewed by the attacker, 
and it concerns how Phase 2 of the game occurs.
For instance, in a simultaneous game neither player can choose his action based on 
the choice of the other.
However, depending on whether or not the defender's choice is visible, 
the attacker will or will not, respectively, be able to completely recover 
the channel used, which will affect the amount of leakage.

If we consider also the subdivision of sequential games into perfect and imperfect information, 
there are $10$ possible different combinations. 
Some, however, make little sense. 
For instance, the defender-first sequential game with perfect information 
(by the attacker) 
does not combine naturally with hidden choice
$\hchoiceop$, 
since that would mean that the attacker knows the action of the defender 
and chooses his strategy accordingly, but forgets it at the moment of computing the channel and its vulnerability.
(We assume \emph{perfect recall}, i.e., the players never forget what they have learned.)
Yet other combinations are not interesting, such as the attacker-first sequential game with 
(totally) imperfect information (by the defender), since it coincides with the simultaneous-game case. 
Note that the attacker and defender are not symmetric with respect to hiding/revealing their actions $a$ and $d$, since the knowledge of $a$ affects the game only in the usual sense of game theory (in Phase 1), while the knowledge of $d$ also affects the computation of the payoff (in Phase 2). 
Note that the attacker and defender are not symmetric with respect to hiding/revealing their actions $a$ and $d$, since the knowledge of $a$ affects the game only in the usual sense of game theory, while the knowledge of $d$ also affects the computation of the payoff (cf.  \qm{Phase 2} above). 
Other possible combinations would come from the distinction between behavioral and mixed strategies, but, as we will see, they are always equivalent except in one scenario, so for the sake of conciseness we prefer to treat it as 
a case apart. 

Table~\ref{table:games} lists the meaningful and interesting combinations. 
In Game V we assume imperfect information: the attacker does not 
know the action chosen by the defender. 
In all the other sequential games we assume that the follower has perfect 
information.
In the remaining of this section, we discuss each game individually, using the example of Section~\ref{sec:running-example} as a running example. 

\begin{table}[!tb]
\centering
\renewcommand{\arraystretch}{1.35}
\begin{tabular}{c|c|c|c|c|}
\multicolumn{2}{}{} & \multicolumn{3}{c}{Order of action} \\ \cline{3-5}
\multicolumn{2}{}{} & \multicolumn{1}{|c|}{\textbf{simultaneous}} & \textbf{defender first} & \textbf{attacker first} \\ \cline{2-5}
\multirow{2}{*}{\stackanchor{Defender's}{choice}} & \textbf{visible $\vchoiceop$} & Game I & Game II & Game III \\ \cline{2-5}
&  {\textbf{hidden $\hchoiceop$}} & Game IV & Game V & Game VI \\ \cline{2-5}
\end{tabular}
\renewcommand{\arraystretch}{1}
\caption{Kinds of games we consider. Sequential games have perfect information, except for Game V.}
\label{table:games}
\end{table}

\subsubsection{Game I (simultaneous with visible choice)}

This simultaneous game can be represented by a tuple $(\cald,\, \cala,\, \pay)$. 
As in all games with visible choice $\vchoiceop$, 
the expected payoff $\Pay$ of a mixed strategy profile $(\delta,\alpha)$ 
is defined to be the expected value of $u$, as in traditional game theory: 
\begin{align*}
\Pay(\delta,\alpha) 
\eqdef {\expectDouble{d\leftarrow\delta}{a\leftarrow\alpha}\hspace{-0.5ex}  \pay(d, a)}
=\hspace{-0.5ex} \sum_{\substack{d\in\cald\\ a\in\cala}} \delta(d) \,\alpha(a)\, \pay(d, a),
\end{align*}
where we recall that $\pay(d,a) = \postvf{\pi}{C_{da}}$. 

From Theorem~\ref{theo:convex-V-q}(\ref{enumerate:vg:v:linear}) we derive that
$\Pay(\delta,\alpha) 
= \postvf{\pi}{\VChoiceDouble{d}{\delta}{a}{\alpha} C_{da}} 
$,
and hence the whole system can be equivalently regarded as 
the channel $\VChoiceDouble{d}{\delta}{a}{\alpha} C_{da}$. 
Still from Theorem~\ref{theo:convex-V-q}(\ref{enumerate:vg:v:linear}) we 
can derive that $\Pay(\delta,\alpha)$ is linear in $\delta$ and $\alpha$. 
Therefore the Nash equilibrium can be computed using the standard method 
(cf. Section~\ref{subsec:game-theory}). 
\begin{Example}
Consider the example of Section~\ref{sec:running-example} in the setting of Game I, with uniform prior.
The Nash equilibrium $(\delta^*,\alpha^*)$ can be obtained using the closed formula 
from Section \ref{subsec:game-theory},
and it is given by
$
\delta^*(0) =\alpha^*(0) =\nicefrac{(\nicefrac{2}{3}-1)}{(\nicefrac{1}{2}-1-1+\nicefrac{2}{3})}=\nicefrac{2}{5}.
$
The corresponding payoff is
$
\Pay(\delta^*,\alpha^*)= \nicefrac{2}{5}\,\nicefrac{2}{5}\,\nicefrac{1}{2}+\nicefrac{2}{5}\,\nicefrac{3}{5} + \nicefrac{3}{5}\,\nicefrac{2}{5} + \nicefrac{3}{5}\,\nicefrac{3}{5}\,\nicefrac{2}{3}= \nicefrac{4}{5}
$.
\end{Example}

\subsubsection{Game II (defender-first with visible choice)}

This defender-first sequential game can be represented by a tuple $(\cald,\, \calda,\, \pay)$. 
We will first consider  mixed strategies for the follower (which in this case is the attacker), 
namely strategies of type $\distr(\calda)$. 
Hence a (mixed) strategy profile is of the form $(\delta, \msa)$, with $\delta\in\distr\cald$ and 
$\msa\in\distr(\calda)$, and the corresponding payoff is
\begin{align*}
\Pay(\delta,\msa) 
\eqdef {\expectDouble{d\leftarrow\delta}{\psa\leftarrow\msa} \pay(d, \psa(d))} \allowbreak
= \allowbreak \sum_{\substack{d\in\cald\\ \psa:\calda}} \delta(d)\, \msa(\psa)\, \pay(d, \psa(d)),
\end{align*}
where  $\pay(d, \psa(d)) = \postvf{\pi}{C_{d\psa(d)}}$. \\[1mm]

Again, from Theorem~\ref{theo:convex-V-q}(\ref{enumerate:vg:v:linear}) we derive:
$
\textstyle
\Pay(\delta,\msa) 
= \postvf{\pi}{\VChoiceDouble{d}{\delta}{\psa}{\msa}  C_{d \psa(d)}} 
$
and hence the system can be expressed as a channel 
$\textstyle \VChoiceDouble{d}{\delta}{\psa}{\msa} C_{d \psa(d)}$.
From the same theorem we also derive that $\Pay(\delta,\msa)$ is 
linear in $\delta$ and $\msa$, so the mutually optimal strategies can 
be obtained again by solving the minimax problem.
In this case, however, the solution is particularly simple, because 
there are always deterministic optimal strategy profiles. We first consider the case of attacker's strategies of type 
$\distr(\calda)$.

\begin{Theorem}[Pure-strategy Nash equilibrium in Game II -  strategies of type $\distr(\calda)$]
\label{theo:deterministic-strategies-II}
Consider a defender-first sequential game with visible choice, and attacker's strategies of type 
$\distr(\calda)$. Let $d^*\eqdef \argmin_d \max_a \pay(d,a)$ and let $\psa^*:\calda$ be defined as $\psa^*(d)\eqdef  \argmax_{a} \pay(d,a)$
(if there is more than one  $a$ that maximizes $\pay(d,a)$, we select one of them arbitrarily). 
Then   
for every $\delta\in\distr\cald$ and $\msa\in\distr(\calda)$ we have
$
\Pay(d^*,\msa)\leq \pay(d^*,\psa^*(d^*))\leq \Pay(\delta,\psa^*)
$.
\end{Theorem}
\begin{proof}
Let 
 $\delta$ and  $\msa$ be arbitrary elements of $\distr\cald$ and $\distr(\calda)$, respectively. Then:
\begin{align*}
\Pay(d^*,\msa)
=     & \;\; \sum_{\psa:\calda} \msa(\psa)\, \pay(d^*, \psa(d^*))\\
\leq & \;\; \sum_{\psa:\calda} \msa(\psa)\, \pay(d^*, \psa^*(d^*)) &\mbox{(by definition of $\psa^*$)}\\
= & \;\;  \pay(d^*, \psa^*(d^*))&\mbox{(since $\msa$ is a distribution)}\\
= & \;\; \sum_{d\in\cald} \delta(d)\,\pay(d^*, \psa^*(d^*))  &\mbox{(since $\delta$ is a distribution)}\\
\leq & \;\; \sum_{d\in\cald} \delta(d)\,\pay(d, \psa^*(d))    &\mbox{(by definition of $d^*$)}\\
= & \;\; \Pay(\delta,\psa^*)
\end{align*}
\end{proof}

Hence to find the optimal strategy  it is sufficient for the defender to find  the action $d^*$ which 
minimizes $\max_a \pay(d^*,a)$, while the attacker's optimal choice is the pure strategy $\psa^*$ such that 
$\psa^*(d) = \argmax_a \pay(d,a)$,
where $d$ is the (visible) move by the defender.

\begin{Example}
Consider  the example of Section~\ref{sec:running-example} in the setting of Game II, with uniform prior.
If the defender chooses $0$ then the attacker chooses $1$. 
If the defender chooses $1$ then the attacker chooses $0$.
In both cases, the payoff is $1$. The game has therefore two solutions, $(\delta^*_1,\alpha^*_1)$ and $(\delta^*_2,\alpha^*_2)$, 
with $\delta^*_1(0)=1$, $\alpha^*_1(0)=0$ and $\delta^*_2(0)=0$, $\alpha^*_2(1)=1$.
\end{Example}

Consider now the case of behavioral strategies. 
Following the same line of reasoning as before, 
we can see that under the strategy profile $(\delta,\fsa)$ the system can be expressed as the channel 
$${\displaystyle \VChoice{d}{\delta}\,\VChoice{a}{\fsa(d)} C_{d a}}.$$

Also in this case there are deterministic optimal strategy profiles. 
An optimal strategy for the follower (in this case the attacker) consists in looking at the action $d$ chosen by the leader and then 
selecting with probability $1$ the action $a$ that maximizes $\pay(d,a)$.

\begin{Theorem}[Pure-strategy Nash equilibrium in Game II - strategies of type $\cald \rightarrow \distr(\cala)$]
\label{theo:deterministic-strategies-II-2}
Consider a defender-first sequential game with visible choice, and attacker's strategies of type 
$\cald \rightarrow \distr(\cala)$.
Let $d^*\eqdef \argmin_d \max_a \pay(d,a)$ and let $\phi^*_{\sf a}:\cald \rightarrow \distr(\cala)$ be defined as $\phi^*_{\sf a}(d)(a)\eqdef 1$ if $a= \argmax_{a'} \pay(d,a')$ (if there is more than one such $a$, we select one of them arbitrarily), and $\phi^*_{\sf a}(d)(a)\eqdef 0$ otherwise. 
Then   
for every $\delta\in\distr\cald$ and $\fsa: \cald \rightarrow \distr(\cala)$ we have:
$
\Pay(d^*,\fsa(d^*))\leq \Pay(d^*,\phi^*_{\sf a}(d^*))\leq \Pay(\delta,\phi^*_{\sf a})
$.
\end{Theorem}
\begin{proof}
Let $a^*$ be the action selected by $\phi^*_{\sf a}(d^*)$, i.e.,  $\phi^*_{\sf a}(d^*)(a^*)\eqdef 1$. Then $\pay(d^*,a^*)= \max_{a} \pay(d^*,a)$. Let
$\delta$ and  $\fsa$ be arbitrary elements of $\distr\cald$ and 
$\cald \rightarrow \distr(\cala)$, respectively. 
Then:
\begin{align*}
\Pay(d^*,\fsa(d^*))
=     & \;\; \sum_{a\in\cala} \fsa(d^*)(a)\, \pay(d^*, a)\\
\leq & \;\; \sum_{a\in\cala}  \fsa(d^*)(a)\,\pay(d^*, a^*) &\mbox{(since $\pay(d^*,a^*)= \max_{a} \pay(d^*,a)$)}\\
= & \;\;  \pay(d^*, a^*)&\mbox{(since $\fsa(d^*)$ is a distribution)}\\
= & \;\;  \Pay(d^*,\phi^*_{\sf a}(d^*))&\mbox{(by definition of  $a^*$)}\\
= & \;\; \sum_{d\in\cald} \delta(d)\,\pay(d^*, \phi^*_{\sf a}(d^*))  &\mbox{(since $\delta$ is a distribution)}\\
\leq & \;\; \sum_{d\in\cald} \delta(d)\,\pay(d, \phi^*_{\sf a}(d))    &\mbox{(by definition of $d^*$ and of $\phi^*_{\sf a}$)}\\
= & \;\; \Pay(\delta,\phi^*_{\sf a})
\end{align*}
\end{proof}
As a consequence of Theorems~\ref{theo:deterministic-strategies-II} and \ref{theo:deterministic-strategies-II-2} we can show that in the games we consider the 
 payoff of the optimal mixed and behavioral strategy profiles coincide. 
 Note that this result 
could also be derived from the result from standard  game theory which states that, in the cases we
consider, for any behavioral strategy there is a mixed strategy that yields the same payoff, and viceversa~\cite{Osborne:94:BOOK}. However, the proof of ~\cite{Osborne:94:BOOK} relies on Khun's theorem, which is non-constructive (and rather complicated, because it is for more general cases). In our scenario the proof is very simple, as we will see in the following corollary. 
Furthermore, since such a result does not hold for leakage games with hidden choice, we think it will be useful to show the proof formally in order to analyze the difference. 
 
 \begin{Corollary}[Equivalence of optimal strategies of types $\distr(\calda)$ and $\cald \rightarrow \distr(\cala)$ in Game II]
\label{coro:equivalence-II}
Consider a defender-first sequential game with visible choice, and let $d^*$, $\psa^*$ and $\phi^*_{\sf a}$ be defined as in Theorems~\ref{theo:deterministic-strategies-II} and \ref{theo:deterministic-strategies-II-2} respectively. 
Then   
$
\pay(d^*,\psa^*(d^*))= \Pay(d^*,\phi^*_{\sf a}(d^*)).
$
\end{Corollary}
\begin{proof}
The result follows immediately by observing that $\pay(d^*,\psa^*(d^*))=\max_a \pay(d^*,a)= \pay(d^*,a^*)= \Pay(d^*,\phi^*_{\sf a}(d^*))$.
\end{proof}

\subsubsection{Game III (attacker-first with visible choice)}

This game is also a sequential game, but with the attacker as the leader. 
Therefore it can be represented as tuple of the form $(\calad,\,  \cala,\, \pay)$. 
It is the same as Game II, except that the roles of the attacker and the 
defender are inverted. 
In particular, the payoff of a mixed strategy profile 
$(\msd, \alpha)\in \distr(\calad)\times \distr\cala$ is given by
\begin{align*}
\Pay(\msd,\alpha)  
\eqdef {\expectDouble{\psd\leftarrow\msd}{a\leftarrow\alpha}\hspace{-0.5ex}
  \pay(\psd(a),a)} = {\sum_{\substack{\psd:\calad \\ a\in\cala}}} \msd(\psd) 
  \, \alpha(a)\, \pay(\psd(a), a) \allowbreak 
\end{align*}
and by Theorem~\ref{theo:convex-V-q}(\ref{enumerate:vg:v:linear})  the whole system can be equivalently regarded as channel 
${\displaystyle \VChoiceDouble{\psd}{\msd}{a}{\alpha}  C_{\psd(a) a}}$. 
For a behavioral strategy $(\fsd, \alpha)\in (\cala\rightarrow \distr(\cald))\times \distr\cala$, the payoff  is given by
\begin{align*}
\Pay(\fsd,\alpha)  
\eqdef {\expect_{a\leftarrow\alpha}\;\;\expect_{d\leftarrow\fsd(a)}\hspace{-0.5ex}
  \pay(d,a)} = {\sum_{a\in\cala}\alpha(a)}{\sum_{d\in\cald}\fsd(a)(d) }\pay(d, a) \allowbreak 
\end{align*}
and by Theorem~\ref{theo:convex-V-q}(\ref{enumerate:vg:v:linear})  the whole system can be equivalently regarded as channel 
${\displaystyle \VChoice{a}{\alpha} \;\; \VChoice{d}{\fsd(a)}  C_{d a}}$.

Obviously, the same results that we have obtained in previous section for Game II hold also for Game III, with the role of attacker and defender switched. We collect all these results in the following theorem. 

\begin{Theorem}[Pure-strategy Nash equilibria in Game  III and equivalence of $ \distr(\calad)$ and  $(\cala\rightarrow \distr(\cald))$]
\label{theo:deterministic-strategies-III}
Consider a defender-first sequential game with visible choice. 
Let $a^*\eqdef \argmax_a \min_d \pay(d,a)$. Let $\psd^*:\calad$ be defined as $\psd^*(a)\eqdef  \argmin_{d} \pay(d,a)$, and 
let $\phi^*_{\sf d}:\cala \rightarrow \distr(\cald)$ be defined as $\phi^*_{\sf d}(a)(d)\eqdef 1$ if $d= \argmin_{d'} \pay(d',a)$. 
Then:
\begin{enumerate}
\item For every $\alpha\in\distr\cala$ and $\msd\in\distr(\calad)$ we have
$
\Pay(\psd^*,\alpha)\leq \pay(\psd^*(a^*),a^*)\leq \Pay(\msd,a^*)
$.\\[-1ex]
\item For every $\alpha\in\distr\cala$ and $\fsd: \cala \rightarrow \distr(\cald)$ we have:
$
\Pay(\phi^*_{\sf d},\alpha)\leq \Pay(\phi^*_{\sf d}(a^*),a^*)\leq \Pay(\fsd(a^*),a^*)
$.\\[-1ex]
\item $\pay(\psd^*(a^*),a^*)= \Pay(\phi^*_{\sf d}(a^*),a^*)$.
\end{enumerate}
\end{Theorem}
 \begin{proof}
These results can be proved by following the same lines as the proofs of Theorems~\ref{theo:deterministic-strategies-II} and \ref{theo:deterministic-strategies-II-2}, and Corollary~\ref{coro:equivalence-II}.
 \end{proof}
 
\begin{Example}
Consider now the example of Section~\ref{sec:running-example} in the setting of Game III, with uniform prior.
If the attacker chooses $0$ then the defender chooses $0$ and the payoff is $\nicefrac{1}{2}$.  
If the attacker chooses $1$ then the defender  chooses $1$ and the payoff is $\nicefrac{2}{3}$.
The latter case is more convenient for the attacker, hence the solution of the game is the strategy profile $(\delta^*,\alpha^*)$ 
with $\delta^*(0)=0$, $\alpha^*(0)=0$.
\end{Example}

\subsubsection{Game IV (simultaneous with hidden choice)}

The simultaneous game with hidden choice is 
a tuple $(\cald,\cala,\pay)$.
However, \emph{it is not an ordinary game} in the sense that 
\emph{the payoff a mixed strategy profile cannot be defined by averaging
the payoff of the corresponding pure strategies}. 
More precisely, the payoff of a mixed profile is defined by
averaging on the strategy of the attacker, but not on that 
of the defender. 
In fact, when hidden choice is used, there is an additional level 
of uncertainty in the relation between the observables and the secret 
from the point of view of the attacker, since he is not sure about which 
channel is producing those observables. 
A mixed strategy $\delta$ for the defender produces a convex combination 
of  channels  (the channels associated to the pure strategies) with the 
same coefficients, and we know from previous sections that  the vulnerability 
is a convex function of the channel, and in general is not linear. 

In order to define the payoff of a mixed strategy  profile  $(\delta,\alpha)$, 
we need therefore to consider the channel that the attacker perceives given his 
limited knowledge. Let us assume that the action that the attacker draws from 
$\alpha$  is $a$. 
He does not know the action of the defender, but we can assume 
that he knows his strategy (each player can derive the optimal strategy of the 
opponent, under the assumption of common knowledge and rational players).

The channel the attacker will see is 
$\HChoice{d}{\delta} C_{d a}$, obtaining a corresponding payoff of 
$\postvf{\pi}{{\HChoice{d}{\delta}} {C_{d a}}}$.
By averaging on the strategy of the attacker we obtain 
\begin{align*}
\Pay(\delta,\alpha) 
\eqdef  
{\expectDouble{a\leftarrow\alpha}{}\hspace{-1ex} \; \postvf{\pi}{{\HChoice{d}{\delta}} {C_{d a}}}}
= \sum_{a\in\cala} \,\alpha(a)\, \postvf{\pi}{{\HChoice{d}{\delta}} {C_{d a}}}
{.}
\end{align*}
From Theorem~\ref{theo:convex-V-q}(\ref{enumerate:vg:v:linear}) we derive:
$
\Pay(\delta,\alpha) 
= \postvf{\pi}{\VChoice{a}{\alpha}\, {\HChoice{d}{\delta}} {C_{d a}}} 
$
and hence the whole system can be equivalently regarded as channel 
${\VChoice{a}{\alpha}\, {\HChoice{d}{\delta}} {C_{d a}}}$.
Note that, by Proposition~\ref{prop:reorganization-operators}(\ref{item:c}), 
the order of the operators is interchangeable, and the system can be 
equivalently regarded as $ {\HChoice{d}{\delta}}\,{\VChoice{a}{\alpha} {C_{d a}}}$.
This shows the robustness of this model. 

From Corollary~\ref{cor:concave-convex-V} we  derive that $\Pay(\delta,\alpha)$ is convex in $\delta$ and linear in $\eta$, hence we can compute the Nash equilibrium by the minimax method.

\begin{Example}\label{exa:exempio}
Consider now the example of Section~\ref{sec:running-example} in the setting of Game IV.
For $\delta\in\distr\cald$ and $\alpha\in\distr\cala$, let $p=\delta(0)$ and $q=\alpha(0)$. 
The system can be represented by the channel 
$(C_{00}\hchoice{p} C_{10})\vchoice{q}(C_{01}\hchoice{p} C_{11})$
represented below.
\begin{align*}
\begin{array}{|c|c|c|}
\hline
C_{00}  \hchoice{p} C_{10}& y=0 & y=1 \\ \hline
x=0    & p & \overline{p} \\
x=1    & 1 & 0 \\ \hline
\end{array}
\quad
\vchoice{q}
\quad
\begin{array}{|c|c|c|}
\hline
C_{01}\hchoice{p} C_{11}& y=0 & y=1 \\ \hline
x=0    & \nicefrac{1}{3}+ \nicefrac{2}{3}\;p &   \nicefrac{2}{3}- \nicefrac{2}{3}\;p \\
x=1    & \nicefrac{2}{3}- \nicefrac{2}{3}\;p  & \nicefrac{1}{3}+ \nicefrac{2}{3}\,p \\ \hline
\end{array}
\end{align*}
\noindent For uniform $\pi$, we have
$\postvf{\pi}{ C_{00} \hchoice{p} C_{10}}{=}1-\nicefrac{1}{2}\,p$, 
while
$\postvf{\pi}{ C_{10}  \hchoice{p} C_{11}}$ is equal to
$
\nicefrac{2}{3}- \nicefrac{2}{3}\,p$ if $p\leq \nicefrac{1}{4}$,
and equal to 
$\nicefrac{1}{3}+ \nicefrac{2}{3}\,p$ 
if $p> \nicefrac{1}{4}$.
Hence the payoff, expressed in terms of $p$ and $q$, is
$\Pay(p,q) = q(1-\nicefrac{1}{2}\,p) + \overline{q} (\nicefrac{2}{3} - \nicefrac{2}{3}\,p)$
if $p\leq \nicefrac{1}{4}$, and 
$\Pay(p,q) = q(1-\nicefrac{1}{2}\,p) + \overline{q} (\nicefrac{1}{3} + \nicefrac{2}{3}\,p)$ 
if $p> \nicefrac{1}{4}$.
The Nash equilibrium can be computed by imposing that the partial derivatives of 
$\Pay(p,q)$ with respect to $p$ and $q$ are both $0$, which means that we are in a  saddle point.  
We have:
\[
\renewcommand{\arraystretch}{1.7}
\frac{\partial \Pay(p,q)}{\partial q}=\left\{\begin{array}{ll}
		\frac{1}{3}+\frac{1}{6}\,p&, \text{ if } p\leq\frac{1}{4}\\
		\frac{2}{3}-\frac{7}{6}\,p&, \text{ if } p>\frac{1}{4}\\
		\end{array}
		\right.
\qquad
\frac{\partial \Pay(p,q)}{\partial p}=\left\{\begin{array}{ll}
		-\frac{2}{3}+\frac{1}{6}\,q&, \text{ if } p\leq\frac{1}{4}\\
		\frac{2}{3}-\frac{7}{6}\,q&, \text{ if }  p>\frac{1}{4}\\
		\end{array}
		\right.		
\renewcommand{\arraystretch}{1}
\]		
We can see that the equations $\nicefrac{\partial \Pay(p,q)}{\partial q}=0$ and $\nicefrac{\partial \Pay(p,q)}{\partial p}=0$ do not have solutions in $[0,1]$ for $p\leq \nicefrac{1}{4}$, 
while for $p> \nicefrac{1}{4}$ they have solution $p^*=q^*=\nicefrac{4}{7}$.
The pair $(p^*, q^*)$ thus constitutes the Nash equilibrium, and the corresponding payoff is $\Pay(p^*,q^*) = \nicefrac{5}{7}$. 
\end{Example}

\subsubsection{Game V (defender-first with hidden choice)}

This is a defender-first sequential game with imperfect information, hence it can be represented as a tuple of the form  
$(\cald,\, \infoseta\rightarrow\cala,\, \pay)$,
where $\infoseta$ is a partition of $\cald$. 
Since we are assuming perfect recall, and the attacker does not know anything about the 
action chosen by the defender in Phase 2, i.e., at the moment of the attack (except the probability distribution determined by his strategy), we must assume that the attacker does not know anything in Phase 1 either. Hence the indistinguishability relation must be total, i.e.,  $\infoseta=\{\cald\}$. But $\{\cald\}\rightarrow\cala$ is equivalent to $\cala$, hence this kind of game is equivalent to Game IV.
It is also a well known fact in game theory that when in a sequential game 
the follower does not know the leader's move before making his choice, 
the game is equivalent to a simultaneous game.\footnote{However, one could
argue that, since the defender has already committed, the attacker does not
need to perform the action corresponding to the Nash equilibrium, any 
payoff-maximizing solution would be equally good for him.
}

\subsubsection{Game VI (attacker-first with hidden choice)}

This game is also a sequential game with the attacker as the leader, hence it is a tuple of the form
$(\calad,\,  \cala,\, \pay)$. 
It is  similar to Game III, except that the payoff is convex on the strategy of the defender, instead of linear. 
We will see, however,  that this causes quite some deviation from the properties of Game III, and from standard game theory.

The payoff of the mixed strategy profile  $(\msd, \alpha)\in \distr(\calad)\times \distr\cala$  is
\begin{align*}
\Pay(\msd,\alpha)  
\eqdef {\expectDouble{a\leftarrow\alpha}{}\hspace{-1ex} \; \postvf{\pi}{{\HChoice{\psd}{\msd}} {C_{\psd(a) a}}}} 
= {\postvf{\pi}{ \VChoice{a}{\alpha}\;\HChoice{\psd}{\msd}  C_{\psd(a) a} } },
\end{align*}
so the whole system can be equivalently regarded as channel 
$\displaystyle \VChoice{a}{\alpha}\;\HChoice{\psd}{\msd}  C_{\psd(a) a}$.

The first important difference from Game III is that in Game VI   there may not exist optimal strategies, either mixed or behavioral, that are deterministic for the defender.
On the other hand, for the attacker there are always deterministic optimal strategies, and this is true independently from whether the defender uses mixed or behavioral strategies.  

To show the existence of deterministic optimal strategies for the attacker, let us first introduce some standard notation for functions: 
given a variable $x$ and an expression $M$, $\lambda x.M$ represents the function that 
on the argument $x$ gives as result the value of $M$. Given  two sets   $X$ and $Y$ where $Y$ is provided with an ordering $\leq$, the 
\emph{point-wise ordering} on $X\rightarrow Y$ is defined as follows: for  $f,g: X\rightarrow Y$, $f\leq g$ if and only if 
$\forall x \in X. \, f(x)\leq g(x)$.

\begin{Theorem}[Attacker's pure-strategy Nash equilibrium in Game VI]\label{Theo:deterministic solution Game VI}
Consider an attacker-first sequential game with hidden choice.
\begin{description}
\item[1. Mixed strategies -- type $\distr(\calad)$.] Let  
$a^*\eqdef \argmax_a\min_{\msd}\postvf{\pi}{{\HChoice{\psd}{\msd}} {C_{\psd(a) a}}}$, and let 
$\sigma^*_{\sf d}\eqdef  \argmin_{\msd} \lambda a. \postvf{\pi}{{\HChoice{\psd}{\msd}} {C_{\psd(a) a}}}$.
Then, for every $\alpha\in\distr\cala$ and $\msd\in\distr(\calad)$ we have
\[
\Pay(\sigma^*_{\sf d},\alpha)\leq \Pay(\sigma^*_{\sf d},a^*)\leq \Pay(\msd,a^*)
\]
\item[2. Behavioral strategies -- type $\cala\rightarrow \distr(\cald)$.] Let  
$a^*\eqdef \argmax_a\min_{\delta}\postvf{\pi}{{\HChoice{d}{\delta}} {C_{d a}}}$, and let
$\phi^*_{\sf d}\eqdef  \argmin_{\fsd} \lambda a. \postvf{\pi}{{\HChoice{d}{\fsd(a)}} {C_{d a}}}$. (The minimization is with respect to the point-wise ordering.) 
Then, for every $\alpha\in\distr\cala$ and $\fsd:\cala\rightarrow \distr(\cald)$ we have
\[
\Pay(\phi^*_{\sf d},\alpha)\leq \Pay(\phi^*_{\sf d},a^*)\leq \Pay(\fsd,a^*)
\]
\end{description}
\end{Theorem}
\begin{proof}
\begin{enumerate}
\item 
Let $\alpha$ and  $\msd$ be arbitrary elements of $\distr\cala$ and $\distr(\calad)$, respectively. Then:
\begin{align*}
\Pay(\sigma^*_{\sf d},\alpha)
=& \;\; \sum_{a\in\cala} \alpha(a)\, \postvf{\pi}{{\HChoice{\psd}{\sigma^*_{\sf d}}} {C_{\psd(a) a}}}\\
\leq & \;\; \sum_{a\in\cala} \alpha(a)\, \postvf{\pi}{{\HChoice{\psd}{\sigma^*_{\sf d}}} {C_{\psd(a^*) a^*}}}&\mbox{(by definition of $a^*$ and $\sigma^*_{\sf d}$)}\\
=& \;\;  \postvf{\pi}{{\HChoice{\psd}{\sigma^*_{\sf d}}} {C_{\psd(a^*) a^*}}} \;(\; = \;  \Pay(\sigma^*_{\sf d},a^*)) &\mbox{(since  $\alpha$ is a distribution)}\\
\leq & \;\; \postvf{\pi}{{\HChoice{\psd}{\msd}} {C_{\psd(a^*) a^*}}}&\mbox{(by definition of  $\sigma^*_{\sf d}$)}\\
=& \;\; \Pay(\msd,a^*)
\end{align*}
\item
Let $\alpha$ and  $\fsd$ be arbitrary elements of $\distr\cala$ and $\cala\rightarrow \distr(\cald)$, respectively. Then:
\begin{align*}
\Pay(\phi^*_{\sf d},\alpha)
=& \;\; \sum_{a\in\cala} \alpha(a)\, \postvf{\pi}{{\HChoice{d}{\phi^*_{\sf d}(a)}} {C_{d a}}}\\
\leq & \;\; \sum_{a\in\cala} \alpha(a)\, \postvf{\pi}{{\HChoice{d}{\phi^*_{\sf d}(a^*)}} {C_{d a^*}}}&\mbox{(by definition of $a^*$ and $\phi^*_{\sf d}$)}\\
=& \;\;   \postvf{\pi}{{\HChoice{d}{\phi^*_{\sf d}(a^*)}} {C_{d a^*}}} \;(\; = \;  \Pay(\phi^*_{\sf d},a^*)) &\mbox{(since  $\alpha$ is a distribution)}\\
\leq & \;\; \postvf{\pi}{{\HChoice{d}{\fsd(a^*)}} {C_{d a^*}}}&\mbox{(by definition of  $\phi^*_{\sf d}$)}\\
=& \;\; \Pay(\fsd,a^*)
\end{align*}
\end{enumerate}
~
\end{proof}

We show now, with the following example, that the optimal strategies for the defender are necessarily probabilistic.  
\begin{Example}
Consider the channel matrices $C_{ij}$ defined in Section~\ref{sec:running-example},  and define the 
following new channels: $C'_{00} = C'_{11} = C_{01}$ and  $C'_{10} = C'_{01}= C_{10}$. 
Define $D_p$ as the result of the hidden choice, with probability $p$,  between $C'_{00}$ and  $C'_{10}$, i.e., $D_p \eqdef C'_{00}\hchoice{p}C'_{10}$, 
and observe that $D_p[0,0] = D_p[1,1]  = p$, and $D_p[1,0] = D_p[0,1]  = 1-p$.
Furthermore, $D_p= C'_{01}\hchoice{1-p}C'_{11}$. 
The vulnerability of $D_p$, for uniform $\pi$, is $\postvf{\pi}{D_p} = 1-\nicefrac{1}{2}p$ for $p\leq\nicefrac{1}{2}$ and $\postvf{\pi}{D_p} = p$ for $p>\nicefrac{1}{2}$, hence, 
 independently from the choice of the attacker,  the best strategy for the defender is to choose $p=\nicefrac{1}{2}$. 
 Every other value for $p$ gives a strictly higher vulnerability.   
Therefore, the best mixed strategy for the defender is $\sigma^*_{\sf d}$ defined as $\sigma^*_{\sf d}(\lambda_a.0) = \sigma^*_{\sf d}(\lambda_a.1)= \nicefrac{1}{2}$. 
Similarly, the best behavioral strategy for the defender is  $\phi^*_{\sf d}$ defined as $\phi^*_{\sf d}(0) = \phi^*_{\sf d}(1)= \lambda_d. \nicefrac{1}{2}$. 
\end{Example}

The second important difference from Game III is that in Game VI 
behavioral strategies and mixed strategies  are not necessarily equivalent. 
More precisely, there are cases in which the optimal strategy profile yields a different payoff depending on whether 
the defender adopts mixed strategies or behavioral ones. 
The following  is an example in which this difference manifests itself. 

\begin{Example}\label{eg:differ:mixed:behavioral}
Consider again the example of Section~\ref{sec:running-example}, this time in the setting of Game VI, and still with uniform prior $\pi$.
Let us analyze first the case in which the defender uses behavioral strategies. 
\begin{description}
\item[Behavioral strategies -- type $\cala\rightarrow \distr(\cald)$.] 
If the attacker chooses $0$, which corresponds to committing to the system $C_{00}\hchoice{p} C_{10}$, 
then the defender will choose $p=\nicefrac{1}{4}$, which minimizes its vulnerability. 
If he chooses $1$, which corresponds to committing to the system $C_{01}\hchoice{p} C_{11}$, then
the defender will choose $p=1$, which minimizes the vulnerability. 
In both cases, the leakage is $p=\nicefrac{1}{2}$, hence both these strategies are solutions to the minimax. 
Note that in the first case the strategy of the defender is probabilistic, while that of the attacker is  pure in both cases. 
\\[-1ex]
\item[Mixed strategies -- type $\distr(\calad)$.] 
Observe that there are only four possible pure strategies  for the defender, 
corresponding to the four functions $f_{ij}:\calad$ for $i,j\in\{0,1\}$ defined as $f_{ij}(a)\eqdef i$ if $i=j$ and $f_{ij}(a)\eqdef a\oplus i$ if $i\neq j$. 
Consider a distribution $\msd\in \distr(\calad)$ and let $p_{ij}\eqdef  \msd(f_{ij})$. Then we have $p_{ij} \geq 0$ and $\sum_{i,j} p_{ij} = 1$. 
Observe that the  attacker's  choice $a=0$ determines the matrix $C_{00}\hchoice{p}C_{10}$, with $p= p_{00} + p_{10}$, whose vulnerability is 
 $\postvf{\pi}{C_{00}\hchoice{p}C_{10}}= 1-\nicefrac{1}{2}p$. 
 On the other hand, the attacker's choice $a=1$ determines the matrix $C_{01}\hchoice{p'}C_{11}$, with $p'= p_{00} + p_{01}$,  whose vulnerability is 
 $\postvf{\pi}{C_{01}\hchoice{p'}C_{11}}= \nicefrac{2}{3}-\nicefrac{2}{3}p$ for $p'\leq \nicefrac{1}{4}$, and $\postvf{\pi}{C_{01}\hchoice{p'}C_{11}}= \nicefrac{1}{3}+\nicefrac{2}{3}p$ for $p'> \nicefrac{1}{4}$. 
 By geometrical considerations (cf. the red dashed line in Figure~\ref{fig:graphs}) we can see that the optimal solutions for the defender are all those strategies which 
 give $p=\nicefrac{6}{7}$ and $p'= \nicefrac{1}{7}$,  which yield payoff $\nicefrac{4}{7}$.
 
\end{description}
\end{Example}

\begin{figure}[tb]
\centering
\begin{minipage}[c]{0.7\linewidth}
\quad\includegraphics[width=\linewidth]{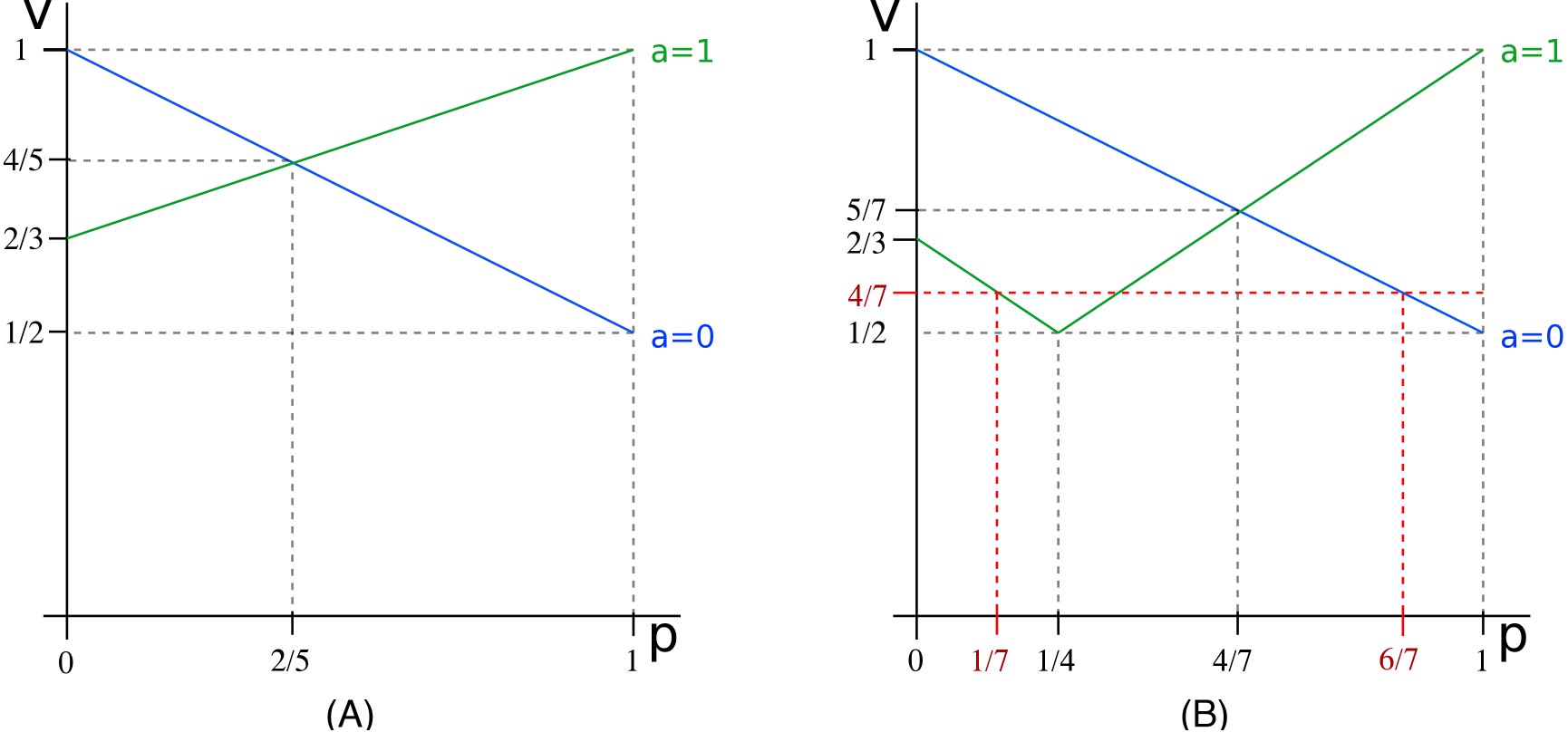}
\end{minipage}
\hfill
\begin{minipage}[c]{0.2\linewidth}
\begin{tabular}[h]{|c|c|}
\hline
Game&$\Pay$\\
\hline
I& $\nicefrac{4}{5}$\\
\hline
II&$1$\\
\hline
III& $\nicefrac{2}{3}$\\
\hline
IV& $\nicefrac{5}{7}$\\
\hline
V& $\nicefrac{5}{7}$\\
\hline
VI$_{\sf m}$& $\nicefrac{4}{7}$\\
\hline
VI$_{\sf b}$& $\nicefrac{1}{2}$\\
\hline
\end{tabular}
\end{minipage}
\caption{Summary of the results for the running example introduced in Section~\ref{sec:running-example}, for uniform prior. Graph (A) is for the case of visible choice: it represents the Bayes vulnerability $\vf{}{}$ of $C_{00}\vchoice{p}C_{10}$ and of $C_{01}\vchoice{p}C_{11}$ (cases $a=0$ and $a=1$ respectively), as a function of $p$. Graph (B) is for the case of hidden choice and it represents the vulnerability of $C_{00}\hchoice{p}C_{10}$ and of $C_{01}\hchoice{p}C_{11}$ as a function of $p$.  
The table on the right gives the payoff in correspondence of the Nash equilibrium for the various games. VI$_{\sf m}$ and VI$_{\sf b}$ represent the attacker-first sequential games with defender strategy of type $\distr(\calad)$ (mixed) and $\cala\rightarrow \distr(\cald)$ (behavioral), respectively.}
\label{fig:graphs}
\end{figure}

The facts that behavioral and mixed strategies are not equivalent is related to the non-existence of deterministic optimal strategies. 
In fact, it is easy to see that  from a behavioral deterministic strategy we can construct a (deterministic) mixed strategy, and vice versa. 

Figure~\ref{fig:graphs} illustrates the graphs of the vulnerability of the various channel compositions, and summarizes the results of this section.

\section{Comparing the leakage games}
\label{sec:comparing-games}
In previous section we have computed the vulnerability  for the running example in the various kind of games introduced in Section~\ref{sec:games-setup}. The values we have  obtained, listed in 
decreasing order, are as follows:
${\rm II:} \, 1;\, \allowbreak 
{\rm I:} \, \nicefrac{4}{5};\, \allowbreak 
{\rm IV:} \, \nicefrac{5}{7};\, \allowbreak 
{\rm V:} \, \nicefrac{5}{7};\, \allowbreak 
{\rm III:} \, \nicefrac{2}{3};\, \allowbreak 
{\rm VI}_{\sf m} {\rm :} \, \nicefrac{4}{7};\, \allowbreak 
{\rm VI}_{\sf b} {\rm :} \, \nicefrac{1}{2} \allowbreak 
$.
This order is not accidental: in this section we will prove that  some of these relations between  games hold
for any vulnerability function, and for any prior. 
These results will allow us to reason about which 
kind of scenarios and compositions are more convenient for the defender, or, vice versa, for the attacker.

\subsection{Simultaneous games vs. sequential games}
The relations between II, I, and III, and between IV-V and VI$_{\sf m}$ are typical in game theory: in any zero-sum sequential game the leader's payoff is
less than or equal to his payoff in the corresponding simultaneous game. In fact, by acting first, the leader commits to an action, and this commitment can be exploited by the attacker to choose the best possible strategy relatively to that action\footnote{The fact that the leader has a disadvantage may seem counterintuitive because in many real games it is the opposite, the player who moves first has an advantage. Such discrepancy is  due to the fact that these games feature preemptive moves, i.e.  moves that, when 
made by one player, make impossible other moves for the other player. The games we are considering in this paper, on the contrary, do not consider preemptive moves.}.
In the following propositions we give the precise formulation of these results in our framework, and we show how they can be derived formally.
\begin{restatable}[Game II $\geqslant$ Game I]{Proposition}{order}
\label{theo:order:II-I}
\begin{align*}
\min_\delta\max_\msa \postvf{\pi}{\VChoiceDouble{d}{\delta}{\psa}{\msa}  C_{d \psa(d)}} 
&\geq
\min_\delta\max_\alpha \postvf{\pi}{\VChoiceDouble{d}{\delta}{a}{\alpha} C_{da}}
\end{align*}
\end{restatable}
\begin{proof}
We prove the first inequality as follows. 
Independently of $\delta$, consider the attacker's strategy 
$\sigma^*_{\sf a}$ 
that assigns probability $1$ to the function $\psa^*$ defined as
$
\psa^*(d) = \argmax_a \postvf{\pi}{C_{d a}} 
$
for any $d\in\cald$.
Then we have that:
\begin{align*}
\min_\delta\max_\msa \postvf{\pi}{\VChoiceDouble{d}{\delta}{\psa}{\msa}  C_{d \psa(d)}} 
\geq&\,\, 
\min_\delta \postvf{\pi}{\VChoiceDouble{d}{\delta}{\psa}{{\sigma^*_{\sf a}}} \, C_{d \psa(d)}} 
& \text{(by maximization on $\msa$)}
\\ =&\,\, \min_\delta \postvf{\pi} {\VChoice{d}{\delta} \, C_{d \psa^*(d)}}
& \text{(by definition of $\sigma^*_{\sf a}$)}
\\ =&\,\, 
\min_\delta \sum_{d} \delta(d) \, \postvf{\pi}{C_{d \psa^*(d)}}
& \text{(by Theorem~\ref{theo:convex-V-q}(\ref{enumerate:vg:v:linear}))}
\\ =&\,\, 
\min_\delta \sum_{d} \delta(d) \max_\alpha \sum_{a} \alpha(a)\, \postvf{\pi}{C_{d \psa^*(d)}}
& \text{(since  $\alpha$ is a distribution)}
\\ \geq&\,\, 
\min_\delta \sum_{d} \delta(d) \max_\alpha \sum_{a} \alpha(a) \,\postvf{\pi}{C_{da}}
& \text{(by definition of $\psa^*$)}
\\ \geq&\,\, 
\min_\delta \max_\alpha \sum_{d} \delta(d) \sum_{a} \alpha(a)\, \postvf{\pi}{C_{da}}
\\ =&\,\, 
\min_\delta\max_\alpha \postvf{\pi}{\VChoiceDouble{d}{\delta}{a}{\alpha} C_{da}} 
& \text{(by Theorem~\ref{theo:convex-V-q}(\ref{enumerate:vg:v:linear}))}
\end{align*}
\end{proof}
Note that the strategy $\sigma^*_{\sf a}$ is optimal  for the attacker, so the first of the above inequalities is actually an equality. 
It is easy to see that the second inequalities is an equality as well, because of the maximization on $\alpha$. 
Therefore, the only  inequalities that may be strict is the third one, and the reason why it may be strict is that 
in the left-hand side $\alpha$ depends on $d$ (and  on $\delta$), while on the right-hand side $\alpha$ 
depends on $\delta$, but not  the actual $d$ (that will be sampled from $\delta$). This corresponds to the fact that 
in the defender-first sequential game the attacker chooses his strategy after he knows the action $d$ chosen by the defender, 
while in the simultaneous game the attacker knows the strategy of the defender (i.e., the distribution $\delta$ he will use to choose probabilistically his actions), but not the actual action $d$ that the defender will choose. 

Analogous considerations can be done for the simultaneous versus the attacker-first case, that we will examine next. 

\begin{restatable}[Game I $\geqslant$ Game III]{Proposition}{order}
\label{theo:order:I-III}
\begin{align*}
\min_\delta\max_\alpha \postvf{\pi}{\VChoiceDouble{d}{\delta}{a}{\alpha} C_{da}}
\,\,\geq\,\,
\max_\alpha  \min_{\msd}{\postvf{\pi}{\VChoiceDouble{\psd}{\msd}{a}{\alpha}  C_{\psd(a) a}} }
\end{align*}
\end{restatable}

\begin{proof}
Independently of $\alpha$, consider the defender's strategy $\sigma^*_{\sf d}$ that assigns probability $1$ to the function $\psd^*$ defined as
$
\psd^*(a) = \argmin_d \postvf{\pi}{C_{d a}} 
$
for any $a\in\cala$.
Then we have that:
\begin{align*}
\min_\delta\max_\alpha \postvf{\pi}{\VChoiceDouble{d}{\delta}{a}{\alpha} C_{da}}
=&\,\, 
\min_\delta\max_\alpha \sum_{d} \delta(d) \sum_{a} \alpha(a) \,\postvf{\pi}{C_{da}}
& \text{(by Theorem~\ref{theo:convex-V-q}(\ref{enumerate:vg:v:linear}))}
\\ =&\,\, 
\max_\alpha\min_\delta \sum_{d} \delta(d) \sum_{a} \alpha(a) \,\postvf{\pi}{C_{da}}
& \text{(by Theorem~\ref{theo:vonneumann})}
\\ \geq&\,\, 
\max_\alpha \sum_{a} \alpha(a) \min_{d}\, \postvf{\pi}{C_{da}}
\\ =&\,\, 
\max_\alpha \sum_{a} \alpha(a) \, \postvf{\pi}{C_{\psd^*(a) a}}
& \text{(by definition of $\psd^*$)}
\\ =&\,\, 
\max_\alpha \sum_{a}\alpha(a) \sum_{{\psd}} \sigma^*_{\sf d}(\psd)\, \postvf{\pi}{C_{\psd(a) a}}
& \text{(by definition of $\sigma^*_{\sf d}$)}
\\ =&\,\, 
\max_\alpha \postvf{\pi}{\VChoiceDouble{\psd}{{\sigma^*_{\sf d}}}{a}{\alpha}  C_{\psd(a) a}}
& \text{(by Theorem~\ref{theo:convex-V-q}(\ref{enumerate:vg:v:linear}))}
\\ \geq&\,\, 
\max_\alpha  \min_{\msd}{\postvf{\pi}{\VChoiceDouble{\psd}{\msd}{a}{\alpha}  C_{\psd(a) a}} }
& \text{(by minimization on $\msd$)}
\end{align*}
\end{proof}
Again,  the strategy $\sigma^*_{\sf d}$ is optimal  for the attacker, so the last of the above inequalities is actually an equality. 
Therefore, the only  inequality  that may be strict is the first one 
and the strictness is due to the fact that in the left-hand side $\delta$ does not depend on  $a$  while in the right-hand side it does. 
Intuitively, this corresponds to the intuition that if the defender knows the action of the attacker then it may be able to choose a better strategy to reduce the leakage.

\begin{restatable}[Game IV $\geqslant$ Game VI$_{\sf m}$]{Proposition}{order}
\label{theo:order:IV-VI}
\begin{align*}
\min_\delta\max_\alpha\postvf{\pi}{\VChoice{a}{\alpha}\, {\HChoice{d}{\delta}} {C_{d a}}} &\geq
\max_\alpha  \min_{\msd} {\postvf{\pi}{ \VChoice{a}{\alpha}\;\HChoice{\psd}{\msd}  C_{\psd(a) a} } }
\end{align*}
\end{restatable}
\begin{proof}
Given $\alpha\in \distr\cala$, let $\delta^*_\alpha\eqdef \min_\delta\sum_{a} \alpha(a) \postvf{\pi}{\HChoice{d}{\delta}C_{da}}$. 
For any $d\in\cald$, let $f_d$ be the constant function defined as $f_d(a)=d$ for any $a\in\cala$, and    
define $\sigma^*_{\sf d}\in\distr(\calad)$ as $\sigma^*_{\sf d}(f_d)\eqdef \delta^*_\alpha(d)$ for any $d\in\cald$.
Let $\Pay(\delta,\alpha) = \postvf{\pi}{\VChoice{a}{\alpha}\, {\HChoice{d}{\delta}} {C_{d a}}}$.
Then $\Pay(\delta,\alpha)$ is convex in $\delta$ and linear in $\alpha$.
Hence we have:
\begin{align*}
\min_\delta\max_\alpha\postvf{\pi}{\VChoice{a}{\alpha}\, {\HChoice{d}{\delta}} {C_{d a}}} 
=&\,\, 
\max_\alpha\min_\delta\postvf{\pi}{\VChoice{a}{\alpha}\, {\HChoice{d}{\delta}} {C_{d a}}}
&\text{(by Theorem~\ref{theo:vonneumann})}
\\ =&\,\, 
\max_\alpha\postvf{\pi}{\VChoice{a}{\alpha}\, {\HChoice{d}{\delta^*_\alpha}} {C_{d a}}}
&\text{(by definition of $\delta^*_\alpha$)}
\\ =&\,\, 
\max_\alpha\postvf{\pi}{\VChoice{a}{\alpha}\, {\HChoice{f_d}{\sigma^*_{\sf d}}} {C_{f_d(a) a}}}
&\text{(by definition of $\sigma^*_{\sf d}$)}
\\ \geq&\,\, 
\max_\alpha  \min_{\msd} {\postvf{\pi}{\VChoice{a}{\alpha}\;\HChoice{\psd}{\msd}  C_{\psd(a) a}} }
& \text{(by minimization on $\msd$)}
\end{align*}
\end{proof}

\subsection{Visible choice vs. hidden choice}
We consider now the case of Games III and IV-V. In the running example the payoff for III is lower than for IV-V, but it is easy to find other cases in which the situation is reversed. For instance, if in the running example we set $C_{11}$ to be the same as $C_{01}$, the payoff for  III  will be ${1}$ (corresponding to the choice  $a=1$ for the attacker), and that for IV-V will be $\nicefrac{2}{3}$ (corresponding to the Nash equilibrium $p^*=q^*=\nicefrac{2}{3}$. So we conclude that Games III and IV-V are incomparable: there is no general ordering between them. 

The relation between Games I and IV comes from the fact that they are both simultaneous  games,  and 
the only difference is the way in which the payoff is defined. 
The same holds for the case of Games III and VI$_{\sf m}$, which are both attacker-first sequential games.   
The essence of the proof  is expressed by the following proposition.
\begin{restatable}[visible choice $\geqslant$ hidden choice]{Proposition}{order}
\label{prop:order:I-IV and III-VI}
For every $a\in \cala$ and every $\delta\in \distr\cald$ we have:\,\,
$
\postvf{\pi}{{\VChoice{d}{\delta}} {C_{d a}}}\,\, \geq\,\, 
\postvf{\pi}{{\HChoice{d}{\delta}} {C_{d a}}}.
$
\end{restatable}
\begin{proof}
\begin{align*}
\postvf{\pi}{{\VChoice{d}{\delta}} {C_{d a}}}
 \,\, =&\,\,  \sum_{d \in \cald} \delta(d) \,\postvf{\pi}{C_{d a}}
& \text{(by Theorem~\ref{theo:convex-V-q}(\ref{enumerate:vg:v:linear}))}
\\ \geq&\,\, \,
\postvf{\pi}{{\HChoice{d}{\delta}} {C_{d a}}}
& \text{(by Theorem~\ref{theo:convex-V-q}(\ref{enumerate:vg:h:convex}))}
\end{align*}
\end{proof}

From the above proposition we can derive immediately the following corollaries:

\begin{restatable}[Game I $\geqslant$ Game IV]{Corollary}{order}
\label{theo:order:I-IV}
\begin{align*}
\min_\delta\max_\alpha \postvf{\pi}{\VChoiceDouble{d}{\delta}{a}{\alpha} C_{da}}
\;\; \geq\;\;
\min_\delta\max_\alpha\postvf{\pi}{\VChoice{a}{\alpha}\, {\HChoice{d}{\delta}} {C_{d a}}}
\end{align*}
\end{restatable}

\begin{restatable}[Game III $\geqslant$ Game VI$_{\sf m}$]{Corollary}{order}
\label{theo:order:III-VI}
\begin{align*}
\max_\alpha  \min_{\msd}{\postvf{\pi}{\VChoiceDouble{\psd}{\msd}{a}{\alpha}  C_{\psd(a) a}} }
\;\; \geq\;\;
\max_\alpha  \min_{\msd} {\postvf{\pi}{\VChoice{a}{\alpha}\;\HChoice{\psd}{\msd}  C_{\psd(a) a}} }
\end{align*}
\end{restatable}

 Finally, we show that the vulnerability for the optimal solution in Game VI$_{\sf m}$  
 is always greater than or equal to that of Game VI$_{\sf b}$, which means 
 that for the defender it is always convenient to use behavioral strategies.  
 We can state actually a more general result: for any mixed strategy, there is always a behavioral strategy that gives the same payoff. 
 \begin{Proposition}\label{prop:mixed-behavioral}
 For any $\alpha\in \distr\cala$ and any $\msd \in\distr(\calad)$ there exists $\fsd :\cala\rightarrow \distr(\cald)$ such that:
\[
\postvf{\pi}{ \VChoice{a}{\alpha}\;\HChoice{d}{\fsd(a)}  C_{d a} } 
\,=\,
\postvf{\pi}{ \VChoice{a}{\alpha}\;\HChoice{\psd}{\msd}  C_{\psd(a) a} } 
\]
\end{Proposition}
\begin{proof}
For $\msd \in\distr(\calad)$, define $\fsd :\cala\rightarrow \distr(\cald)$ as follows:  for any $a\in\cala$ and $d\in\cald$, 
\[\displaystyle \fsd(a)(d) \eqdef \sum_{\psd(a)=d}\msd (\psd)\] and observe that for every $a\in \cala$ we have
$\displaystyle \HChoice{d}{\fsd(a)}  C_{d a}  = \HChoice{\psd}{\msd}  C_{\psd(a) a}$. 
\end{proof}
From this proposition we derive immediately the following corollary:
\begin{Corollary}\label{cor:mixed-behavioral}
\[
\max_\alpha  \min_{\msd} {\postvf{\pi}{\VChoice{a}{\alpha}\;\HChoice{\psd}{\msd}  C_{\psd(a) a}} }
\,\geq \,
\max_\alpha  \min_{\fsd} {\postvf{\pi}{\VChoice{a}{\alpha}\;\HChoice{d}{\fsd(a)}  C_{d a}} }
\]
\end{Corollary}

\begin{figure}[tb]
\centering
\includegraphics[width=0.45\columnwidth]{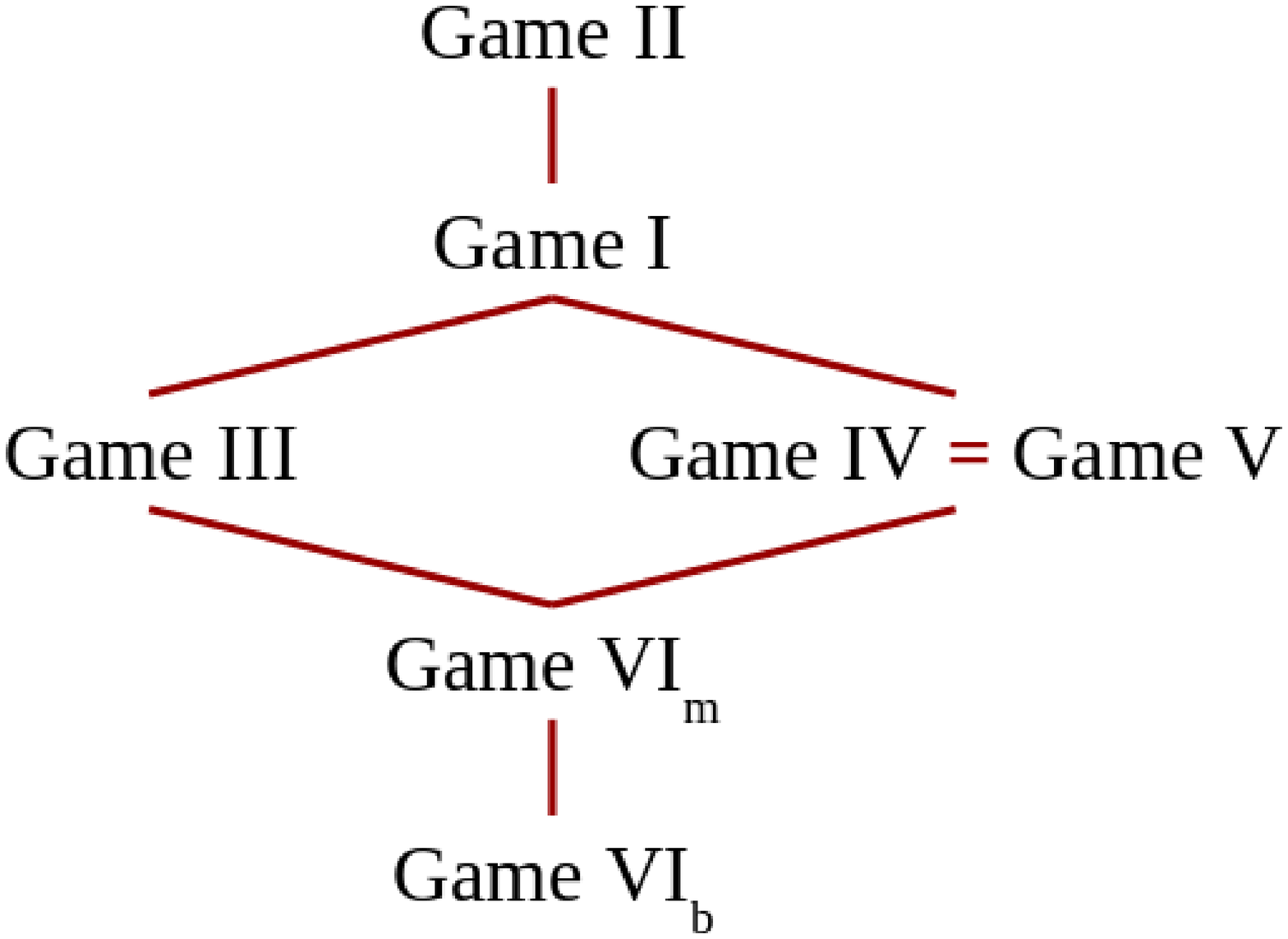}
\caption{Order of games w.r.t. the payoff in the Nash equilibrium. Games higher in the lattice have larger payoff.}
\label{fig:order-games1}
\end{figure}

The lattice in Figure~\ref{fig:order-games1} illustrates the results of this section about the relations between the various games.
These relations  can be used by the defender as guidelines to better protect the system, if he has some control over the rules of the game. Obviously, for the defender the games lower in the ordering are to be preferred to  compose protocols, since they yield a lower vulnerability for the result.

\section{Case study: a safer, faster password-checker}
\label{sec:password-example}
In this section we apply our game-theoretic, compositional
approach to show how a defender can mitigate an attacker's typical timing 
side-channel attack, while avoiding the usual burden imposed on the 
password-checker's efficiency by measures that make time consumption constant.

The following sections are organized as follows:
We first provide a formalization of the trade-off between efficiency and security
in password checkers using our framework of leakage games.
We then illustrate the approach in a simple instance of the program for $3$-bit passwords.
Finally, we provide general results for the $n$-bit case regarding the defender's optimal strategy
in equilibrium.

\subsection{Modeling the trade-off between efficiency and security as a game}

Consider the password-checker \texttt{PWD}\textsubscript{1..n} of 
Algorithm~\ref{alg:pwd-checker}, which performs a bitwise-check of
an $n$-bit low-input $a = a_{1} a_{2} \ldots a_{n}$ provided by the attacker
against an $n$-bit secret password $x = x_{1} x_{2} \ldots x_{n}$.
The bits are compared in increasing order ($1$, $2$, $\ldots$, $n$), 
with the low-input being rejected as soon as it mismatches the secret, 
and accepted otherwise.

\begin{center}
\begin{minipage}{0.5\linewidth}
\IncMargin{1em}
\begin{algorithm}[H]
\SetKwData{Left}{left}\SetKwData{This}{this}\SetKwData{Up}{up}
\SetKwFunction{Union}{Union}\SetKwFunction{FindCompress}{FindCompress}
\SetKwInOut{InputH}{High input}
\SetKwInOut{InputL}{Low input}
\SetKwInOut{Output}{Output}
\SetKw{Return}{return}
\InputH{\,$x \in \{0,1\}^{n}$}
\InputL{\,$a \in \{0,1\}^{n}$}
\Output{\,$y \in \{\true,\false\}$}
\BlankLine
{\it accept} $:= \true$\;
\For{$i := 1$ \KwTo $n$}{
\If(){$a_{i} \neq x_{i}$}{\label{lt}
{\it accept} $:= \false$\;
\textbf{break}\;
}
}
\Return{accept}\;
\caption{Password-checker \texttt{PWD}\textsubscript{1..n}.}
\label{alg:pwd-checker}
\end{algorithm}
\DecMargin{1em}
\end{minipage}
\end{center}

The attacker can choose low-inputs to try to gain information about the password. 
Obviously, in case \texttt{PWD}\textsubscript{1..n} accepts the low-input,
the attacker learns the password value is $a = x$.
Yet, even when the low-input is rejected,
there is some leakage of information:
from the duration of the execution the attacker can estimate how 
many iterations have been performed before the low-input was rejected, 
thus inferring a prefix of the secret password. 

To model this scenario, let $\calx = \{0,1\}^{n}$ be the set of all possible $n$-bit 
passwords, and 
$\caly = \{\false\}\times\{1,2,\ldots,n\} \cup \{\true\} \times \{n\} = \{(\false,1), (\false,2), (\false,3), \ldots, (\false, n), (\true,n)\}$ 
be the set of observables produced by the system.
Each observable is an ordered pair whose first element indicates whether or not 
the password was accepted ($\true$ or $\false$, respectively), and the second
element indicates the duration of the computation ($1$, $2$, $\ldots$, or $n$ iterations).

For instance, consider a scenario with $3$-bit passwords.
Let \texttt{PWD}\textsubscript{123} be a password checker that performs 
the bitwise comparison in increasing order ($1$, $2$, $3$).
Channel $C_{123,101}$ in Table~\ref{table:pwd-channels-123}
models \texttt{PWD}\textsubscript{123}'s behavior when the attacker provides 
low-input $a=101$.
Note that this channel represents the fact that \texttt{PWD}\textsubscript{123}
accepts the low-input when the secret is $x=101$ 
(the channel outputs $(\true,3)$ with probability $1$),
and otherwise rejects the low-input in a certain number of steps 
(e.g., the checker rejects the low-input in $2$ steps when the password is $x=110$, 
so in this case the channel outputs $(\false,2)$ with probability $1$).

\begin{table}[tb]
\centering
\vspace{-4mm}
\begin{subtable}[t]{0.45\linewidth}
\centering
\caption{Channel $C_{123,101}$ modeling the case in which the defender 
compares bits in order (1,2,3), and the attacker picks 
low-input $101$.}
$
\begin{array}{|c|c|c|c|c|}\hline
\multirow{2}{*}{$C_{123,101}$} & y =  & y =  & y =  & y =  \\
 & (\false,1) & (\false,2) & (\false,3) & (\true,3) \\ \hline
x = 000    & 1 & 0 & 0 & 0 \\
x = 001    & 1 & 0 & 0 & 0 \\
x = 010    & 1 & 0 & 0 & 0 \\
x = 011    & 1 & 0 & 0 & 0 \\
x = 100    & 0 & 0 & 1 & 0 \\
x = 101    & 0 & 0 & 0 & 1 \\
x = 110    & 0 & 1 & 0 & 0 \\
x = 111    & 0 & 1 & 0 & 0 \\ \hline
\end{array}
$
\label{table:pwd-channels-123}
\end{subtable}
\qquad
\begin{subtable}[t]{0.45\linewidth}
\centering
\caption{Channel $C_{\text{cons},101}$ modeling the case in which the defender 
runs a constant-time checker, and the attacker picks low-input $101$.}
$
\begin{array}{|c|c|c|}
\hline
\multirow{2}{*}{$C_{\text{cons},101}$} & y =  & y =  \\ 
& (\false,3) & (\true,3) \\ \hline
x = 000    & 1 & 0  \\
x = 001    & 1 & 0  \\
x = 010    & 1 & 0  \\
x = 011    & 1 & 0  \\
x = 100    & 1 & 0  \\
x = 101    & 0 & 1  \\
x = 110    & 1 & 0  \\
x = 111    & 1 & 0  \\ \hline
\end{array}
$
\label{table:pwd-channels-cons}
\end{subtable}
\caption{Channels $C_{da}$ modeling different versions of $3$-bit password
checker for defender's action $d$ and attacker's action $a$.}
\label{table:pwd-channels}
\end{table}

To quantify the password checker's leakage of information, we will adopt
Bayes vulnerability, so the prior Bayes vulnerability 
$\priorvf{\pi}$ will correspond to the probability that the attacker guesses correctly
the password in one try, whereas the posterior Bayes vulnerability 
$\postvf{\pi}{C}$ will correspond to the probability that the attacker guesses correctly
the password in one try, after he observes the output of the channel (i.e.,
after he has measured the time needed for the checker to accept or reject 
the low-input).
For instance, in the $3$-bit password scenario, if the prior distribution 
on all possible $3$-bit passwords is 
$\hat{\pi}=(0.0137, 0.0548, 0.2191, \allowbreak 0.4382, \allowbreak 0.0002, \allowbreak 0.0002, \allowbreak 0.0548, \allowbreak 0.2191)$, 
the corresponding prior Bayes vulnerability is 
$\priorvf{\hat{\pi}} = 0.4382$.
For prior $\hat{\pi}$ above, the posterior Bayes vulnerability of channel 
$C_{123,101}$ is $\postvf{\hat{\pi}}{C_{123,101}} = 0.6577$,
which represents an increase in Bayes vulnerability 
of about $50\%$).

A way to mitigate this timing side-channel is to make the checker's
execution time independent of the secret. 
This can be done by by eliminating the \texttt{break} command within 
the loop in \texttt{PWD}\textsubscript{1..n}, so no matter when the
matching among high and low input happens, the password checker will
always need $n$ iterations to complete.
%
For instance, in the context of our $3$-bit password example, we can let \texttt{PWD}\textsubscript{cons} be a constant-time 
$3$-bit password checker that applies this counter measure.
Channel $C_{\text{cons},101}$ from Table~\ref{table:pwd-channels-cons} 
models \texttt{PWD}\textsubscript{cons}'s behavior when the attacker's 
low-input is $a = 101$.
Note that this channel reveals only whether or not the low-input
matches the secret value, but does not allow the attacker to infer
a prefix of the password.
Indeed, this channel's posterior Bayes vulnerability is 
$\postvf{\hat{\pi}}{C_{123,101}} = 0.4384$, 
which brings the multiplicative Bayes leakage down to an increase of only 
about $0.05\%$.

However, the original program is substantially more efficient than the
modified one. Consider the general case of $n$-bit passwords and assume that
either the password, or the program's low input, are chosen uniformly at
random. Because of this assumption, each bit being checked in the original
program has probability $\nicefrac{1}{2}$ of being rejected. Hence, the
program will finish after $1$ iteration with probability $\nicefrac{1}{2}$,
after $2$ iterations with probability $\nicefrac{1}{4}$, and so on, up to the
$n$-th iteration. After that the program always finishes, so with the
remaining probability $2^{-n}$ the program finishes after $n$ iterations,
giving a total expected time
of
\[
	\sum_{k=1}^{n} k 2^{-k}  + n 2^{-n} = 2(1-2^{-n})
	~.
\]
The above derivation is based on the series $\sum_{k=1}^n k z^k =
z\frac{1-(n+1)z^n+nz^{n+1}}{(1-z)^2}$. Hence the expected running time of the
original program (under the uniform assumption) is constant: always bounded
by $2$, and converging to $2$ as $n$ grows. On the other hand, the running
time of the modified constant-time program is $n$ iterations, an $\nicefrac{n}{2}$-fold
increase.

Seeking some compromise between security and efficiency, assume 
the defender can employ different versions of the password-checker, 
each performing the bitwise  comparison among low-input $a$ and secret 
password $x$ in a different order.
More precisely, there is one version of the checker for every 
possible order in which the index $i$ 
ranges in the control of the loop in Algorithm~\ref{alg:pwd-checker}.

To determine a defender's best choice of which versions of the checker to run,
we model this problem as game.
The attacker's set of actions $\cala$
consists in all possible $2^{n}$ low-inputs to the checker, 
and the defender's set of actions
$\cald$ consists in all $n!$ orders in which the checker can perform the bitwise 
comparison.
There is, then, 
a channel $C_{ad}{:}\calx{\times}\caly{\rightarrow}\reals$ for
each possible combination of $d \in \cald$, $a \in \cala$.
In our framework, the payoff of a mixed strategy profile
$(\delta, \alpha)$ is given by
$
\Pay(\delta,\alpha) \allowbreak = \allowbreak \expectDouble{a\leftarrow\alpha}{}\hspace{-1ex} \; \postvf{\pi}{{\HChoice{d}{\delta}} {C_{d a}}}.
$
For each pure strategy profile $(d,a)$, the payoff of the game will be the posterior 
Bayes vulnerability
of the resulting channel $C_{da}$
(since, if we are measuring the information leakage, the prior vulnerability is the same
for every channel once the prior is fixed).

In the $3$-bit password scenario, the attacker's actions 
$\cala = \{000, 001, 010, 011, 100, 101, 110, 111\}$ are all
possible $3$-bit low-inputs, and the defender's 
$\cald = \{123, 132, 213, 231, 312, 321 \}$ are all 
possible versions of the password checker (each action represents the order in which the
$3$ bits are checked).
Table~\ref{tab:payoff} depicts the corresponding payoffs of all
$48$ possible resulting channels $C_{ad}$ with $d \in \cald$, $a \in \cala$,
when the prior is still $\hat{\pi}=(0.0137, 0.0548, 0.2191, \allowbreak 0.4382, \allowbreak 0.0002, \allowbreak 0.0002, \allowbreak 0.0548, \allowbreak 0.2191)$.
Note that the attacker's and defender's actions substantially affect
the effectiveness of the attack:
vulnerability ranges between 0.4934 and 0.9311 
(and so multiplicative leakage is in the range
between an increase of $12\%$ and one of $112\%$).
Using techniques from~\cite{Alvim:17:GameSec}, we can compute
the best (mixed) strategy for the defender in this game, which turns out 
to be
\begin{align*}
\delta^{*}  = (0.1667, 0.1667, 0.1667, 0.1667, 0.1667, 0.1667).
\end{align*}
\begin{table}[tb]
\centering
$
\begin{array}{c|c||c|c|c|c|c|c|c|c|}
\multicolumn{2}{c}{} & \multicolumn{8}{c}{\textbf{Attacker's action $a$}} \\ \cline{2-10}
	& U(d,a) & 000 & 001 & 010 & 011 & 100 & 101 & 110 & 111 \\ \cline{2-10} \cline{2-10}
\multirow{6}{*}{\rotatebox[origin=c]{90}{\stackanchor{\textbf{Defender's}}{\textbf{action $d$}}}}
	& 123 &  0.7257 & 0.7257 & 0.9311 & 0.9311 & 0.6577 & 0.6577 & 0.7122 & 0.7122 \\ \cline{2-10}
	& 132 &  0.8900 & 0.9311 & 0.8900 & 0.9311 & 0.7122 & 0.7122 & 0.7122 & 0.7122 \\ \cline{2-10}
	& 213 &  0.5068 & 0.5068 & 0.9311 & 0.9311 & 0.4934 & 0.4934 & 0.7668 & 0.7668 \\ \cline{2-10}
	& 231 &  0.5068 & 0.5068 & 0.7668 & 0.9311 & 0.5068 & 0.5068 & 0.7668 & 0.9311 \\ \cline{2-10}
	& 312 &  0.7257 & 0.9311 & 0.7257 & 0.9311 & 0.7122 & 0.8766 & 0.7122 & 0.8766 \\ \cline{2-10}
	& 321 &  0.6712 & 0.7122 & 0.7257 & 0.9311 & 0.6712 & 0.7122 & 0.7257 & 0.9311 \\ \cline{2-10}
\end{array}
$
\caption{Payoff for each pure strategy profile of $3$-bit password scenario.}
\label{tab:payoff}
\end{table}
This strategy is part of an equilibrium and guarantees that for any choice
of the attacker the posterior Bayes vulnerability is at most $0.6573$
(so the multiplicative leakage is bounded by $50\%$, an intermediate value
between the minimum of about $12\%$ and the maximum of about $112\%$).

The running time, on the other hand, of this new password-checker is the same
as that of the original one. Under the assumption that either the
password, or the low-input, are uniformly distributed, each check fails with
probability $\nicefrac{1}{2}$, giving a total expected time of
$2(1-2^{-n})$. Hence this technique substantially decreases the program's information leakage,
without affecting at all its expected running time.

\subsection{On optimal strategies for the defender}

Interestingly, in the $3$-bit password case study from the previous section,
the defender's optimal strategy consists in uniformly sampling among all
available versions of the checker. A uniform distribution seems to be an
adequate candidate for the defender, but is it always the best choice
for any prior and any number of bits?

We first answer this question in the case of a uniform prior $\pi$ for arbitrary
$n$-bit passwords, which already
turns out to be challenging. Under this prior, and exploiting a crucial symmetry of the password
checker (see the proof of Theorem~\ref{prop:uniform-prior-delta}),
we can show that all strategies for the adversary are in fact equivalent, namely
\[
	\Pay(\alpha,\delta) = \Pay(\alpha',\delta)
	\quad\text{for all } \alpha,\alpha',\delta
	~.
\]
For the defender, on the other hand, the situation is far from trivial:
although all \emph{pure} strategies $d$ are still equivalent,
$\Pay(\alpha,\delta)$ does in general depend on $\delta$.
By exploiting another symmetry of the password checker together with the symmetry
of $\vf$, we can show that a uniform strategy is indeed optimal for the
defender, as stated in the following result.

\begin{restatable}{Theorem}{uniformpriordelta}
\label{prop:uniform-prior-delta}
	Consider the password checker program of
	Algorithm~\ref{alg:pwd-checker} for $n$-bit passwords, where the attacker controls
	the low input to the checker, and the defender controls the
	order in which the bits are checked.
	If the
	prior $\pi$ on possible passwords is uniform, and the payoff is given
	by the posterior Bayes-vulnerability
	$
	\Pay(\delta,\alpha) \allowbreak = \allowbreak 
	\expectDouble{a\leftarrow\alpha}{}\hspace{-1ex} \; \postvf{\pi}{{\HChoice{d}{\delta}} {C_{d a}}}
	$,
	then the strategy $(\delta^*,\alpha)$ where $\delta^*$ is \emph{uniform}
	and $\alpha$ is \emph{arbitrary} is
	an equilibrium strategy.
\end{restatable}

Perhaps surprisingly, however, Theorem~\ref{prop:uniform-prior-delta} 
does not generalize to non-uniform priors (or to different vulnerability metrics).
More precisely, when the prior on passwords is not uniform, the defender
may benefit from assigning different probabilities to different versions of the
password checker. 
This subtlety arises from the fact that the defender's goal is not to maximize
the attacker's uncertainty about the selected password checker itself 
(i.e., the defender's action), but it is rather to maximize the attacker's 
uncertainty about the secret value.
The following examples illustrate this (perhaps counter-intuitive) fact.

Consider again a $3$-bit password scenario, similar to 
that of the previous section. 
Assume that the attacker knows only that the first bit of the password is 
surely $0$, so that the prior on secrets is
$$
\pi^{(A)} = (0.25, 0.25, 0.25, 0.25, 0, 0, 0, 0).
$$
The payoff table for this case is presented in Table~\ref{tab:3bit-counterexample-piA},
and a corresponding equilibrium defender's best strategy, computed using techniques
from \cite{Alvim:17:GameSec}, is
$$
\delta^{*(A)} = (0.25, 0.25, 0, 0.25, 0, 0.25).
$$
Note that this equilibrium means that the defender never has to check
the bits in the order (2,1,3) or in the order (3,1,2).
The game's payoff (i.e., posterior Bayes vulnerability) in this case is 
$0.5625$, which is smaller than the payoff of $0.5833$ that would ensue
in case the defender's strategy were uniform. 
This means that, from the point of view of the defender, uniformly
randomizing is not optimal.

Now assume that the attacker knows that some passwords are more likely 
than others, even if all are possible, as reflected in the prior
$$
\pi^{(B)} = (0.25, 0.20, 0.15, 0.10, 0.10, 0.10, 0.05, 0.05).
$$
The payoff table for this case is presented in Table~\ref{tab:3bit-counterexample-piB},
and a corresponding equilibrium defender's best strategy can be computed to be
$$
\delta^{*(B)} = (0.1974, 0.1974, 0.1316, 0.1316,0.1711, 0.1711).
$$
Note that this equilibrium means that every version of the checker
may be selected by the defender, but the probability distribution is 
not uniform.
The game's payoff (i.e., posterior Bayes vulnerability) in this case is 
$0.4553$, which is again smaller than the payoff of $0.4666$ that would ensue
in case the defender's strategy were uniform. 

\begin{table}[tb]
\centering
\begin{subtable}[t]{\linewidth}
\centering
\caption{Under prior $\pi^{(A)}$.}
$
\begin{array}{c|c||c|c|c|c|c|c|c|c|}
\multicolumn{2}{c}{} & \multicolumn{8}{c}{\textbf{Attacker's action $a$}} \\ \cline{2-10}
  & U(d,a) & 000 & 001 & 010 & 011 & 100 & 101 & 110 & 111 \\ \cline{2-10} \cline{2-10}
\multirow{6}{*}{\rotatebox[origin=c]{90}{\stackanchor{\textbf{Defender's}}{\textbf{action $d$}}}}
  & 123 & 0.75 & 0.75 & 0.75 & 0.75 & 0.25 & 0.25 & 0.25 & 0.25  \\ \cline{2-10}
  & 132 & 0.75 & 0.75 & 0.75 & 0.75 & 0.25 & 0.25 & 0.25 & 0.25  \\ \cline{2-10}
  & 213 & 0.75 & 0.75 & 0.75 & 0.75 & 0.50 & 0.50 & 0.50 & 0.50  \\ \cline{2-10}
  & 231 & 0.75 & 0.75 & 0.75 & 0.75 & 0.75 & 0.75 & 0.75 & 0.75  \\ \cline{2-10}
  & 312 & 0.75 & 0.75 & 0.75 & 0.75 & 0.50 & 0.50 & 0.50 & 0.50  \\ \cline{2-10}
  & 321 & 0.75 & 0.75 & 0.75 & 0.75 & 0.75 & 0.75 & 0.75 & 0.75  \\ \cline{2-10}
\end{array}
$
\label{tab:3bit-counterexample-piA}
\end{subtable}
\\[5mm]
\begin{subtable}[t]{\linewidth}
\centering
\caption{Under prior $\pi^{(B)}$.}
$
\begin{array}{c|c||c|c|c|c|c|c|c|c|}
\multicolumn{2}{c}{} & \multicolumn{8}{c}{\textbf{Attacker's action $a$}} \\ \cline{2-10}
  & U(d,a) & 000 & 001 & 010 & 011 & 100 & 101 & 110 & 111 \\ \cline{2-10} \cline{2-10}
\multirow{6}{*}{\rotatebox[origin=c]{90}{\stackanchor{\textbf{Defender's}}{\textbf{action $d$}}}}
  & 123 & 0.70 & 0.70 & 0.60 & 0.60 & 0.50 & 0.50 & 0.45 & 0.4 \\ \cline{2-10}
  & 132 & 0.70 & 0.65 & 0.70 & 0.65 & 0.50 & 0.50 & 0.50 & 0.5 \\ \cline{2-10}
  & 213 & 0.70 & 0.70 & 0.55 & 0.55 & 0.60 & 0.60 & 0.50 & 0.5 \\ \cline{2-10}
  & 231 & 0.70 & 0.70 & 0.55 & 0.55 & 0.70 & 0.70 & 0.55 & 0.5 \\ \cline{2-10}
  & 312 & 0.70 & 0.65 & 0.70 & 0.65 & 0.60 & 0.60 & 0.60 & 0.6 \\ \cline{2-10}
  & 321 & 0.70 & 0.65 & 0.65 & 0.60 & 0.70 & 0.65 & 0.65 & 0.6 \\ \cline{2-10}
\end{array}
$
\label{tab:3bit-counterexample-piB}
\end{subtable}
\caption{Payoff tables, under different priors, for each pure strategy profile of 
$3$-bit password scenario.}
\label{tab:3bit-counterexample}
\end{table}

\section{Related work}
\label{sec:related-work}
Many studies have applied game theory to analyses of security 
and privacy in networks~\cite{Basar:83:TIT,Grossklags:08:WWW,Alpcan:11:TMC},
cryptography~\cite{Katz:08:TCC},
anonymity~\cite{Acquisti:03:FC}, 
location privacy~\cite{Freudiger:09:CCS}, and
intrusion detection~\cite{Zhu:09:GAMENETS},
to cite a few.
See \cite{Manshaei:13:ACMCS} for a survey. 

In the context of quantitative information flow, 
most works consider only passive attackers.
Boreale and Pampaloni~\cite{Boreale:15:LMCS} consider adaptive
attackers, but not adaptive defenders, and show that 
in this case the attacker's optimal strategy can 
be always deterministic.
Mardziel et al.~\cite{Mardziel:14:SP} propose a model for both adaptive 
attackers and defenders, but in none of their extensive case-studies 
the attacker needs a probabilistic strategy to maximize leakage.
In this paper we characterize when randomization is necessary, 
for either attacker or defender, to achieve optimality in our general 
information leakage games.

Security games have been employed to model and analyze payoffs between 
interacting agents, especially between a defender and an attacker.
Korzhyk et al. \cite{Korzhyk:11:JAIR} theoretically analyze security 
games and study the relationships between Stackelberg and Nash Equilibria 
under various forms of imperfect information.
Khouzani and Malacaria \cite{Khouzani:16:CSF} study leakage properties when perfect
secrecy is not achievable due to constraints on the allowable size of the conflating
sets, and provide universally optimal strategies for a wide class of entropy measures,
and for $g$-entropies.
In particular, they prove that designing a channel with minimum
leakage is equivalent to computing Nash equilibria in a corresponding 
two-player zero-sum games of incomplete information for a range of 
entropy measures.
These works, contrarily to ours, do not consider games with hidden 
choice, in which optimal strategies differ from traditional game-theory.

Several security games have modeled leakage when the sensitive
information are the defender's choices themselves, rather than a system's 
high input.
For instance, Alon et al.~\cite{Alon:13:SIAMDM} propose 
zero-sum games in which a defender chooses probabilities of secrets and an attacker chooses and learns some of the defender's secrets.
Then they present how the leakage on the defender's secrets gives influences on the defender's optimal strategy.
More recently, Xu et al.~\cite{Xu:15:IJCAI} show zero-sum games in which the attacker 
obtains partial knowledge on the security resources that the defender protects, and provide 
the defender's optimal strategy under the attacker's such knowledge.
Contrarily to these studies, in this paper we assume that a secret value is drawn from some prior distribution and is not the defender's strategy itself.

Security games have also been used to provide optimal trade-offs between two conflicting desirable properties. 
Khouzani et al.~\cite{Khouzani:15:CSF} study the clash
between security and  usability in the password selection, and present a game-theoretic 
framework for determining an optimal trade-off. They analyze guessing attacks and derive the optimal policies 
for secret picking as Nash/Stackelberg Equilibria.
Yang et al.~\cite{Yang:12:POST} propose a game-theoretic framework to analyze user behavior in anonymity networks. They consider the cost of anonymity in terms of the loss of utility. 
They also consider incentives and their impact on users' cooperation.
Shokri et al.~\cite{Shokri:17:TOPS} present a game-theoretic model for a designer to find the optimal privacy mechanism by taking the adversary's knowledge into account.
More specifically, they show a Stackelberg Bayesian game in which a user first chooses a location obfuscation mechanism to maximize his location privacy and then an adversary tries to estimate the user's location to minimize its error.
In contrast, our work presents a more general framework that is not limited to a particular domain, and focuses on protocol composition as a method to limit the leakage.

Regarding channel operators, sequential and parallel composition of channels
have been studied (e.g.,~\cite{Kawamoto:17:LMCS}), but we are unaware of any
explicit definition and investigation of hidden and visible choice operators.
Although Kawamoto et al.~\cite{KawamotoBL16} implicitly use the hidden choice to model 
a probabilistic system as the weighted sum of systems, they do not derive the set 
of algebraic properties we do for this operator, and for its interaction with the 
visible choice operator.

\section{Conclusion and future work}
\label{sec:conclusion}
In this paper we used protocol composition to model the introduction of noise 
performed by the defender to prevent leakage of sensitive information.
More precisely, we formalized visible and hidden probabilistic choices of different protocols.
We then formalized the interplay between defender and attacker in a game-theoretic framework adapted to the specific issues of QIF, where the payoff is information leakage. 
We considered various kinds of leakage games, depending on whether players act simultaneously or sequentially, and whether the choices of the defender are visible or not to the attacker.
We established a hierarchy of these games, and provided methods for finding the optimal strategies (at the points of equilibrium) in the various cases.
We also proved that in a sequential game with hidden choice, the behavioral strategies are more advantageous for the defender than the mixed strategies.
This contrast with the standard game theory, where the two types of strategies are equivalent.

As future research, we would like to extend leakage games to the 
case of repeated observations, i.e., when the attacker can 
observe  the outcomes of the system in successive runs,  under the 
assumption that both attacker and defender may change the 
channel in each run. 
We would also like to extend our framework to non zero-sum games, in which
the costs of attack and defense are not equivalent, and to analyze differentially-private mechanisms.

\vspace{6pt} 



\acknowledgments{
The authors are thankful to Arman Khouzani and Pedro O. S. Vaz de Melo 
for valuable discussions.
This work was supported by JSPS and Inria under the project LOGIS of the Japan-France AYAME Program,
by the PEPS 2018 project MAGIC, 
and by the project Epistemic Interactive Concurrency (EPIC) from the STIC 
AmSud Program.
M\'{a}rio S. Alvim was supported by CNPq, CAPES, and FAPEMIG.
Yusuke Kawamoto was supported by JSPS KAKENHI Grant Number JP17K12667.
}

\authorcontributions{
All authors contributed to the technical results, and are listed in alphabetical order.
}

\conflictsofinterest{The authors declare no conflict of interest.
} 

\abbreviations{The following abbreviations are used in this manuscript:\\

\noindent 
\begin{tabular}{@{}ll}
QIF & Quantitative information flow
\end{tabular}
}

\appendixtitles{yes} 
\appendixsections{multiple} 
\appendix
\section{Proofs of Technical Results}
\label{sec:proofs}

In this section we provide the proofs of the  technical results which are not in the main body of the paper.

\subsection{Preliminaries for proofs}

We start by providing some necessary background for
the subsequent technical proofs.

Definition~\ref{def:equivalence-channels} states that
two compatible channels (i.e., with the same input space)
are equivalent if yield the same value of vulnerability 
for every prior and every vulnerability function.
The result below, from the literature, provides 
necessary and sufficient conditions for two channels being 
equivalent.
The result employs the extension of channels with an all-zero 
column as follows.
For any channel $C$ of type $\calx \times \caly \rightarrow \reals$, 
its \emph{zero-column extension} $C^{0}$ is the channel of type 
$\calx \times \left(\caly \cup \{y_{0}\}\right) \rightarrow \reals$,
with $y_{0} \notin \caly$, s.t. $C^{0}(x,y) = C(x,y)$ for all 
$x\in\calx$, $y\in\caly$, and $C^{0}(x,y_{0}) = 0$ for all $x \in \calx$.

\begin{restatable}[Characterization of channel equivalence~\cite{Alvim:12:CSF,McIver:14:POST}]{Lemma}{resequivalentchannelsconditions}
\label{lemma:equivalent-channels-conditions}
Two channels $C_{1}$ and $C_{2}$ are equivalent iff
every column of $C_{1}$ is a convex combination of 
columns of $C_{2}^{0}$, and every column of $C_{2}$ is a 
convex combination of columns of $C_{1}^{0}$.
\end{restatable}

Note that the result above implies that for being equivalent,
any two channels must be compatible.

\subsection{Proofs of Section~\ref{sec:operators}}

\restypehiddenchoice*

\begin{proof}
Since hidden choice is defined as a summation
of matrices, the type of $\HChoice{i}{\mu} C_{i}$ is 
the same as the type of every $C_{i}$ in the family.

To see that $\HChoice{i}{\mu} C_{i}$ is a channel
(i.e., all of its entries are non-negative, and all
of its rows sum up to 1),
first note that, since each $C_{i}$ in the family is a 
channel matrix, $C_{i}(x,y)$ lies in the interval 
$[0,1]$ for all $x \in \calx$, $y\in \caly$.
Since $\mu$ is a set of convex coefficients, from 
the definition of hidden choice
it follows that
also $\sum_{i \in \cali} \mu(i)C_{i}(x,y)$ must lie 
in the interval $[0,1]$ for every $x,y$.

Second, note that for all $x \in \calx$:
\begin{align*}
\sum_{y\in \caly} \left(\HChoice{i}{\mu} C_{i}\right)(x,y)
=&\, \sum_{y\in \caly} \sum_{i \in \cali} \mu(i)C_{i}(x,y) & \text{(def. of hidden choice)} \\
=&\, \sum_{i \in \cali} \mu(i) \sum_{y\in \caly} C_{i}(x,y) \\
=&\, \sum_{i \in \cali} \mu(i) \cdot 1 & \text{($C_{i}{:}\calx{\times}\caly{\rightarrow}\reals$ are channels)} \\
=&\, 1 & \text{($\mu$ is a prob. dist.)}
\end{align*}
\end{proof}

\restypevisiblechoice*

\begin{proof}
Visible choice applied to a family $\{C_{i}\}$ of channels 
scales each matrix $C_{i}$ by a factor $\mu(i)$, which
preserves the type $\calx \times \caly_{i} \rightarrow \reals$ 
of each matrix, and then concatenates all the matrices so produced,
yielding a result of type
$\calx \times \left( \bigsqcup_{i \in \cali} \caly_{i} \right) \rightarrow \reals$.

To see that $\VChoice{i}{\mu} C_{i}$ is a channel
(i.e., that all of its entries are non-negative, and that
all rows sum-up to $1$), note that each element of the 
visible choice on the family $\{C_{i}\}$ can be denoted by 
$\left(\VChoice{i}{\mu} C_{i}\right)(x,(y,j))$,
where $x \in \calx$, $j \in \cali$, and $y \in \caly_{j}$.
Then, note that for all $x \in \calx$, $j \in \cali$, and $y \in \caly_{j}$:
\begin{align}
\label{eq:propvchoicetype0}
\left(\VChoice{i}{\mu} C_{i}\right)(x,(y,j)) \nonumber 
=&\, \left(\bigconc_{i \in \cali} \mu(i)C_{i}\right)(x,(y,j)) & \text{(def. of visible choice)} \nonumber \\
=&\, \left(\mu(j)C_{j}\right)(x,y) & \text{(def. of concatenation)} \nonumber \\
=&\, \mu(j)C_{j}(x,y) & \text{(def. of scalar mult.)}
\end{align}
which is a non-negative value, since, both $\mu(j)$ and 
$C_{j}(x,y)$ are non-negative.

Finally, note that for all $x \in \calx$:
\begin{align*}
\sum_{\substack{j \in \cali \\ y \in \caly_{j}}} \left(\VChoice{i}{\mu} C_{i}\right)(x,(y,j))
=&\, \sum_{\substack{j \in \cali \\ y \in \caly_{j}}} \mu(j)C_{j}(x,y) &
\text{(by Eq.~\eqref{eq:propvchoicetype0})} \\ 
=&\, \sum_{j \in \cali} \mu(j) \sum_{y \in \caly_{j}} C_{j}(x,y) &
\text{} \\ 
=&\, \sum_{j \in \cali} \mu(j) \cdot 1 & \text{($C_{j}{:}\calx{\times}\caly_{j}{\rightarrow}\reals$ are channels)} \\
=&\, 1 & \text{($\mu$ is a prob. dist.)}
\end{align*}
\end{proof}

\residempotency*

\begin{proof}

\begin{enumerate}
\renewcommand{\labelenumi}{\alph{enumi})}

\item Idempotency of hidden choice:
\begin{align*}
 \HChoice{i}{\mu}C_{i} 
=&\, \sum_{i} \mu(i)C_{i} & \text{(def. of hidden choice)} \\
=&\, \sum_{i} \mu(i)C & \text{(since every $C_{i}=C$)} \\
=&\, C \sum_{i} \mu(i) & \text{} \\
=&\, C & \text{(since $\mu$ is a prob. dist.)}
\end{align*}

\item Idempotency of visible choice:
\begin{align*}
 \VChoice{i}{\mu}C_{i} 
=&\, \bigconc_{i} \mu(i)C_{i} & \text{(def. of visible choice)} \\
=&\, \bigconc_{i} \mu(i)C & \text{(since every $C_{i}=C$)} \\
\equiv&\, C & \text{(by Lemma~\ref{lemma:equivalent-channels-conditions})}
\end{align*}

In the above derivation, we can apply 
Lemma~\ref{lemma:equivalent-channels-conditions}
because every column of the channel on each side of the equivalence 
can be written as a convex combination of the zero-column extension
of the channel on the other side.
\end{enumerate}
\end{proof}

\resreorganizationoperators*

\begin{proof}

\begin{enumerate}
\renewcommand{\labelenumi}{\alph{enumi})}

\item 
\begin{align*}
\HChoice{i}{\mu} \HChoice{j}{\eta} C_{i j} 
=&\, \HChoice{i}{\mu} \left( \sum_{j} \eta(j) C_{i j} \right) & \text{(def. of hidden choice)} \\ 
=&\, \sum_{i} \mu(i) \left( \sum_{j} \eta(j) C_{i j} \right) & \text{(def. of hidden choice)} \\ 
=&\, \sum_{i,j} \eta(i)\eta(j) C_{i j} & \text{(reorganizing the sums)} \\ 
=&\, \HChoiceDouble{i}{\mu}{j}{\eta} C_{i j} & \text{(def. of hidden choice)}
\end{align*}

\item 
\begin{align*}
\VChoice{i}{\mu} \VChoice{j}{\eta} C_{i j} 
=&\, \VChoice{i}{\mu} \left( \bigconc_{j} C_{i j} \right) & \text{(def. of visible choice)} \\ 
=&\, \bigconc_{i} \mu(i) \left( \bigconc_{j} \eta(j) C_{i j} \right) & \text{(def. of visible choice)} \\
\equiv& \bigconc_{i j} \mu(i) \eta(j) C_{i j} & \text{(by Lemma~\ref{lemma:equivalent-channels-conditions})} \\ 
=&\,\VChoiceDouble{i}{\mu}{j}{\eta} C_{i j} & \text{(def. of visible choice)}
\end{align*}
In the above derivation, we can apply 
Lemma~\ref{lemma:equivalent-channels-conditions}
because every column of the channel on each side of the equivalence 
can be written as a convex combination of the zero-column extension
of the channel on the other side.

\item 
\begin{align*}
\HChoice{i}{\mu} \VChoice{j}{\eta} C_{i j} 
=&\, \HChoice{i}{\mu} \left( \bigconc_{j} C_{i j} \right) & \text{(def. of visible choice)} \\ 
=&\, \sum_{i} \mu(i) \left( \bigconc_{j} \eta(j) C_{i j} \right) & \text{(def. of hidden choice)} \\ 
\equiv&\,\bigconc_{j} \eta(j) \left( \sum_{i} \eta(i) C_{i j} \right) & \text{(by Lemma~\ref{lemma:equivalent-channels-conditions})} \\ 
=&\, \VChoice{j}{\eta} \HChoice{i}{\mu} C_{i j} & \text{(def. of operators)}
\end{align*}
In the above derivation, we can apply 
Lemma~\ref{lemma:equivalent-channels-conditions}
because every column of the channel on each side of the equivalence 
can be written as a convex combination of the zero-column extension
of the channel on the other side.
\end{enumerate}

\end{proof}



%

\subsection{Proofs of Section~\ref{sec:password-example}}

\uniformpriordelta*

\begin{proof}
	We show that $(\delta^*,\alpha)$ where $\delta^*$ is uniform and $\alpha$ is
	arbitrary, is a saddle point of $\Pay(\delta,\alpha)$.
	The proof will rely on two crucial \emph{symmetries} of the password checker, in combination
	with the use of a uniform prior.
	Note that, under uniform $\pi$, the Bayes-vulnerability of a channel $C$
	is proportional to the sum of the column maxima of $C$, namely
	$\postvf{\pi}{C} = \frac{1}{|\calx|}\sum_y\max_x C(x,y)$.
	Given a permutation $\sigma$ of $\calx$, define $C^\sigma$ as $C$ with its
	rows permuted, namely $C^\sigma(x,y) = C(\sigma(x),y)$.
	Note that $C^\sigma$ can be written as $M_\sigma C$
	where $M_\sigma$ is a permutation matrix.
	Permuting the rows
	does not affect the column maxima, hence $\postvf{\pi}{C} = \postvf{\pi}{C^\sigma}$.

	For the attacker things are simple, since we can show that under a uniform
	$\pi$ all attacker strategies are equivalent, that is,
	\[
		\Pay(\alpha,\delta) = \Pay(\alpha',\delta)
		\quad\text{for all } \alpha,\alpha',\delta
		~.
	\]
	To show this, we use the first important symmetry of the password checker, which
	is due to the fact that the output of the
	algorithm only depends on whether $x_i \neq a_i$ is true or false for each bit.
	Hence, any modification on $a$ can be matched with a modification on $x$ that preserves the output,
	namely for all $a,a'$ there exists a permutation $\sigma$ of passwords
	such that $C_{da} = C_{da'}^\sigma$ for all $d$.
	Denote $C_{\delta a} = \HChoice{d}{\delta} C_{d a}$,
	we have that
	\[
		C_{\delta a}
		= \HChoice{d}{\delta} C_{d a}
		= \HChoice{d}{\delta} M_\sigma C_{d a'}
		= M_\sigma \HChoice{d}{\delta} C_{d a'}
		= C_{\delta a'}^\sigma
	\]
	From $\postvf{\pi}{C_{\delta a}} = \postvf{\pi}{C_{\delta a'}^\sigma}$ we have that
	$\Pay(a,\delta) = \Pay(a',\delta)$ for all
	pure strategies $a,a'$, 
	which implies $\Pay(\alpha,\delta) = \Pay(\alpha',\delta)$
	since $\Pay(\alpha,\delta)$ is linear on $\alpha$.
	So in the remaining of the proof we assume that the attacker plays the fixed pure strategy
	$a_0$, having the zero bitstring $0..0$ as the low input.

	For the defender, on the other hand, the situation is far from trivial:
	although all \emph{pure} strategies $d$ are still equivalent,
	$\Pay(\alpha,\delta)$ is \emph{only convex} on $\delta$ and as a consequence the
	payoff highly depends on $\delta$.
	Our goal is to show that
	\[
	\Pay(a_0,\delta)
	= \frac{1}{|\calx|}\sum_y\max_x \sum_d \delta(d) C_{da_0}(x,y)
	\]
	is minimized on the uniform $\delta^*$. To do so, we show something stronger, namely
	that \emph{each addend} of the $y$-sum is \emph{simultaneously} minimized on the uniform
	$\delta^*$. Fix an arbitrary $y$, and let
	\[
	f(\delta)
	= \max_x \delta \cdot \psi_x
	\]
	where $\cdot$ is the dot product and $\psi_x\in\reals^{|\cald|}$ is the vector defined by
	$\psi_x(d) = C_{da_0}(x,y)$.
	Note that, since $C_{da_0}$ is deterministic, all elements of $\psi_x$ are either $0$
	or $1$. Intuitively, $\psi_x(d)= 1$ if $x$ produces the fixed output $y$ when the
	bit checking order is $d$.

	We need to show that $f(\delta)$ is minimized on $\delta^*$.
	However, $f$ seen as a function on the whole $\reals^{|\cald|}$ has no global
	minimum. In our case, though, $\delta$ is a probability distribution
	taking values in $\distr\cald\subset \reals^{|\cald|}$. That is, only $|\cald|-1$ elements
	of $\delta$ are free, so we can reduce the dimension by one as follows: fix some $d_0\in\cald$ and let
	$\tilde\delta,\tilde\psi_x\in\reals^{|\cald|-1}$ be the same as 
	$\delta,\psi_x$ with the element corresponding to $d_0$ removed.
	We have that $\delta(d_0) = 1-\sum_{d\neq d_0}\delta(d) = 1 - \tilde\delta\cdot\mathbf{1}$,
	where $\mathbf{1}$ is the ``ones'' vector, hence we can rewrite $f(\delta)$ as a function
	of $\tilde\delta$:
	\begin{align*}
		f(\tilde\delta)
		&= \max_x \tilde\delta \cdot \tilde\psi_x + \delta(d_0)\psi_x(d_0)
			\\
		&= \max_x \tilde\delta \cdot \tilde\psi_x
		+ (1 - \tilde\delta\cdot\mathbf{1})\psi_x(d_0) 
			\\
		&= \max_x \tilde\delta \cdot (\tilde\psi_x - \psi_x(d_0)\mathbf{1}) + \psi_x(d_0)
	\end{align*}
	So it is sufficient to show that $f(\tilde\delta)$ is minimized on $\tilde\delta^*$.

	In the following, we use the fundamental concept of
	\emph{subgradients} from convex analysis, which generalize gradients for
	non-differentiable functions. A vector $v$ is a subgradient of
	a possibly non-differentiable convex function
	$g:S\to\reals$
	at $x_0\in S$ iff
	\[
		g(x) - g(x_0) \ge v \cdot (x - x_0)
		\qquad \text{for all } x\in	S
		~.
	\]
	It is well-known that $g$ has
	a global minimum on $x_0$ iff the $\mathbf{0}$ vector belongs to the set of
	subgradients of $g$ at $x_0$ (this generalizes the fact that differentiable convex
	functions have zero gradient on their global minimum). Recall also that $g(x) = x\cdot v + c$
	has a single subgradient $v$, while for $g(x)=\max_i g_i(x)$, any subgradient
	of $g_i$ for \emph{any branch $i$} giving the max, is also a subgradient of $g$.
	Finally, the set of subgradients is convex, so any convex combination of them
	is also a subgradient.

	Hence, the subgradients of $f(\tilde\delta)$ are $\tilde\psi_x -
	\psi_x(d_0)\mathbf{1}$, for any $x$ giving the maximum, as well as their
	convex combinations. Our goal is to show that on $\tilde\delta^*$ these
	include the zero vector.

	We finally arrive to the second important symmetry of the password checker.
	Let $\rho$ be a permutation of the set $\{1,\ldots,n\}$ of password bits.
	Such a permutation could be seen both as a permutation on $\calx$
	(i.e. $\rho(x) = x_{\rho(1)}.. x_{\rho(n)}$; note that $\rho(x)$ has the same
	number of $0$s and $1$s as $x$), as well as a permutation on
	$\cald$ (a bit checking order $d$ is \emph{itself} a permutation of bits,
	so we can set $\rho(d) = \rho\circ d$).
	Since all low bits of $a_0$ are the same,
	applying $\rho$ to both $x$ and $d$ does not change the outcome of the
	algorithm, that is $C_{da_0} = C_{\rho(d)a_0}^\rho$.

	Intuitively, if $d$ is selected uniformly, then the attacker cannot
	distinguish $x$ from $\rho(x)$ since they
	produce the same output with the same probability.
	More concretely, 
	let $x\sim x'$ denote that $x' = \rho(x)$ for some bit permutation $\rho$.
	Due to the aforementioned symmetry we have that 
	$\psi_x$ and $\psi_{x'}$ have the same number of $1$s which means that
	$\delta^*\cdot\psi_x = \delta^*\cdot\psi_{x'}$.
	Now, let $x^* \in \mathrm{argmax}_x \delta^*\cdot\psi_x$.
	We have that \emph{all} $x\sim x^*$ also give the $\max$, hence all vectors
	\begin{equation}\label{eq-password-1}
	\tilde\psi_x -
		\psi_x(d_0)\mathbf{1}
		\qquad x\sim x^*
	\end{equation}
	are subgradients of $f$ on $\tilde\delta^*$.
	Finally, for any $d,d'$ there is a bit permutation $\rho$ such that
	$d' = \rho(d)$. Since $\psi_x(d) = \psi_{\rho(x)}(\rho(d))$ we have
	that
	$
		(\sum_{x\sim x^*} \psi_x)(d) = k
	$
	(for some integer $k$) independently from $d$.
	Hence, letting $c = |\{x\ |\ x\sim x^*\}|$ and averaging all subgradients of
	\eqref{eq-password-1} we get that
	\[
		\frac{1}{c}
		\sum_{x\sim x^*} \big(\tilde\psi_x -
			\psi_x(d_0)\mathbf{1}\big)
		= \frac{1}{c}(k\mathbf{1} - k\mathbf{1})
		= \mathbf{0}
	\]
	is also a subgradient, which concludes the proof.
\end{proof}
\section{Properties of binary versions of channel operators}
\label{sec:operators-properties-binary}
In this section we derive some relevant properties of 
the binary versions of the hidden and visible choice 
operators.
We start with results regarding each operator
individually.

\begin{restatable}[Properties of the binary hidden choice]{prop}{respropertiesbinaryhiddenchoice}
\label{prop:properties-binary-hidden-choice}
For any channels $C_{1}$ and $C_{2}$ of the same type, 
and any values $0 \leq p,q \leq 1$, 
the binary hidden choice operator satisfies the following 
properties:
\begin{enumerate}
\renewcommand{\labelenumi}{\alph{enumi})}

\item \emph{Idempotency}:
$C_{1} \hchoice{p} C_{1} = C_{1}$
\item \emph{Commutativity}: 
$C_{1} \hchoice{p} C_{2} = C_{2} \hchoice{\overline{p}} C_{1}$
\item \emph{Associativity}: 
$$C_{1} \hchoice{p} (C_{2} \hchoice{q} C_{3}) = ( \nicefrac{1}{q} \cdot C_{1} \hchoice{p} C_{2}) \hchoice{q} \overline{p} \cdot C_{3}$$ 
if $q \neq 0$.
\item \emph{Absorption}: 
$$(C_{1} \hchoice{p} C_{2}) \hchoice{q} (C_{1} \hchoice{r} C_{2})
= C_{1} \hchoice{(pq + \overline{q}r)} C_{2}.$$
\end{enumerate}
\end{restatable}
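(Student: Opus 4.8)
The plan is to prove all four identities by reducing each to the definition of binary hidden choice, $C_1 \hchoice{p} C_2 = p\,C_1 + \overline{p}\,C_2$, and then appealing only to the elementary algebra of real matrices (commutativity and associativity of matrix addition, distributivity of scalar multiplication over it). Since every operand is a matrix of type $\calx \times \caly \rightarrow \reals$ and all operations act entrywise, each claimed equality becomes a statement about the scalar coefficients multiplying $C_1$, $C_2$, $C_3$; it therefore suffices to expand both sides into a single linear combination of these matrices and check that corresponding coefficients coincide. Note in particular that stochasticity of the operands plays no role: the intermediate matrices appearing in (c) need not be channels, and all the identities hold as plain matrix equalities.

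Properties (a) and (b) are immediate. For idempotency, $C_1 \hchoice{p} C_1 = p\,C_1 + \overline{p}\,C_1 = (p + \overline{p})\,C_1 = C_1$, since $p + \overline{p} = 1$. For commutativity, $C_2 \hchoice{\overline{p}} C_1 = \overline{p}\,C_2 + \overline{\overline{p}}\,C_1 = \overline{p}\,C_2 + p\,C_1$, which equals $p\,C_1 + \overline{p}\,C_2 = C_1 \hchoice{p} C_2$ by commutativity of matrix addition, using $\overline{\overline{p}} = p$.

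For associativity (c), I would first expand the left-hand side: $C_1 \hchoice{p}(C_2 \hchoice{q} C_3) = p\,C_1 + \overline{p}(q\,C_2 + \overline{q}\,C_3) = p\,C_1 + \overline{p}q\,C_2 + \overline{p}\,\overline{q}\,C_3$. Then, parsing the right-hand side with scalar multiplication binding tighter than $\hchoice{}$, its inner term is $(\nicefrac{1}{q}\,C_1)\hchoice{p} C_2 = \nicefrac{p}{q}\,C_1 + \overline{p}\,C_2$, and the outer choice yields $q\big(\nicefrac{p}{q}\,C_1 + \overline{p}\,C_2\big) + \overline{q}\,(\overline{p}\,C_3) = p\,C_1 + q\overline{p}\,C_2 + \overline{q}\,\overline{p}\,C_3$. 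The factor $q$ cancels the $\nicefrac{1}{q}$ (which is why the hypothesis $q \neq 0$ is required), and the two expansions agree coefficient by coefficient. Property (d) is handled the same way: expanding $(C_1 \hchoice{p} C_2)\hchoice{q}(C_1 \hchoice{r} C_2)$ gives $(qp + \overline{q}r)\,C_1 + (q\overline{p} + \overline{q}\,\overline{r})\,C_2$, so the only nontrivial check is that the $C_2$-coefficient matches the one prescribed by the right-hand side, i.e. $q\overline{p} + \overline{q}\,\overline{r} = \overline{pq + \overline{q}r}$; expanding both sides of this scalar equation ($q(1-p)+(1-q)(1-r) = 1 - pq - \overline{q}r$) confirms it.

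The routine matrix algebra carries no real difficulty; the only points demanding care are bookkeeping ones. In (c) I must respect the intended operator precedence and track the scaling factors $\nicefrac{1}{q}$ and $\overline{p}$ so that the intermediate (non-stochastic) matrices recombine correctly, and in (d) I must verify the single scalar identity on the $C_2$-coefficient. These are the steps I would write out explicitly, leaving the symmetric coefficient matches to inspection.
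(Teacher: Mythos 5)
Your proposal is correct and follows essentially the same route as the paper's proof: expand every hidden choice into the linear combination $p\,C_1 + \overline{p}\,C_2$ and match scalar coefficients on each matrix, with the same regrouping for associativity and the same observation for absorption that the resulting $C_1$- and $C_2$-coefficients sum to $1$. The only cosmetic difference is that you verify the scalar identity $q\overline{p} + \overline{q}\,\overline{r} = \overline{pq + \overline{q}r}$ explicitly, where the paper simply remarks that the two coefficients form a valid binary distribution.
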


\begin{proof}
We will prove each property separately.

\begin{enumerate}
\renewcommand{\labelenumi}{\alph{enumi})}

\item \emph{Idempotency}: 
\begin{align*}
C_{1} \hchoice{p} C_{1}
=&\, p \cdot C_{1} + \overline{p} \cdot C_{1} & \text{(def. of hidden choice)} \\
=&\, (p + \overline{p}) \cdot C_{1} & \text{} \\
=&\, C_{1} & \text{($p+\overline{p}=1$)}
\end{align*}

\item \emph{Commutativity}:
\begin{align*}
C_{1} \hchoice{p} C_{2}
=&\, p \cdot C_{1} + \overline{p} \cdot C_{2} & \text{(def. of hidden choice)} \\
=&\, \overline{p} \cdot C_{2} + p \cdot C_{1} & \text{}\\
=&\, C_{2} \hchoice{\overline{p}} C_{1} & \text{(def. of hidden choice)}
\end{align*}

\item \emph{Associativity}:
\begin{align*}
 &\, C_{1} \hchoice{p} ( C_{2} \hchoice{q} C_{3} ) \\
=&\, p \cdot C_{1} + \overline{p} ( q \cdot C_{2} + \overline{q} \cdot C_{3} ) & \text{(def. of hidden choice)} \\
=&\,p \cdot C_{1} + \overline{p}q \cdot C_{2} + \overline{p}\overline{q} \cdot C_{3} & \text{} \\
=&\,q(p (\nicefrac{1}{q} \cdot C_{1}) + \overline{p} \cdot C_{2}) + \overline{q}(\overline{p} \cdot C_{3}) & \text{} \\
=&\,(\nicefrac{1}{q} \cdot C_{1} \hchoice{p} C_{2}) \hchoice{q} \overline{p} \cdot C_{3} & \text{(def. of hidden choice)} \\
\end{align*}

\item \emph{Absorption}: First note that:
\begin{align*}
 &\,(C_{1} \hchoice{p} C_{2}) \hchoice{q} (C_{1} \hchoice{r} C_{2}) \\
=&\,q (p C_{1} + \overline{p} C_{2}) + \overline{q} (r C_{1} + \overline{r} C_{2}) & \text{(def. of hidden choice)} \\
=&\,pq C_{1} + \overline{p}q C_{2} + \overline{q}r C_{1} + \overline{q}\overline{r} C_{2} \\
=&\,(pq + \overline{q}r) C_{1} + (\overline{p}q + \overline{q}\overline{r}) C_{2} \\
=&\,C_{1} \hchoice{(pq + \overline{q}r)} C_{2} & \text{(*)}
\end{align*}
To complete the proof, note that in step (*) above,
$pq + \overline{q}r$ and $\overline{p}q + \overline{q}\overline{r}$ form
a valid binary probability distribution (they are both 
non-negative, and they add up to 1), then apply the
definition of hidden choice.

\end{enumerate}
\end{proof}

\begin{restatable}[Properties of binary visible choice]{prop}{respropertiesbinaryvisiblechoice}
\label{prop:properties-binary-visible-choice}
For any compatible channels $C_{1}$ and $C_{2}$, 
and any values $0 \leq p,q \leq 1$, 
the visible choice operator satisfies the following 
properties:
\begin{enumerate}
\renewcommand{\labelenumi}{\alph{enumi})}

\item \emph{Idempotency}: 
$C_{1} \vchoice{p} C_{1} \equiv C_{1}$
\item \emph{Commutativity}: 
$C_{1} \vchoice{p} C_{2} \equiv C_{2} \vchoice{\overline{p}} C_{1}$
\item \emph{Associativity}: 
$C_{1} \vchoice{p} (C_{2} \vchoice{q} C_{3}) \equiv ( \nicefrac{1}{q} \cdot C_{1} \vchoice{p} C_{2}) 
\vchoice{q} \overline{p} \cdot C_{3}$
if $q \neq 0$.
\end{enumerate}
\end{restatable}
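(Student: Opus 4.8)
The plan is to establish all three equivalences by appealing to the characterization of channel equivalence in Lemma~\ref{lemma:equivalent-channels-conditions}, exactly as was done for the general visible-choice operator in Propositions~\ref{prop:idempotency} and~\ref{prop:reorganization-operators}. Throughout I would unfold the binary operator via its definition $C_1 \vchoice{p} C_2 = p\,C_1 \conc \overline{p}\, C_2$, so that the columns of $C_1 \vchoice{p} C_2$ are precisely the columns of $C_1$ scaled by $p$ (tagged with index $1$) together with the columns of $C_2$ scaled by $\overline{p}$ (tagged with index $2$). The whole argument then reduces to comparing the two families of columns appearing on each side of an equivalence and checking the two convex-combination conditions required by the lemma.

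For commutativity and associativity the comparison is almost immediate. Expanding $C_1 \vchoice{p} C_2$ and $C_2 \vchoice{\overline{p}} C_1$ shows that the two channels carry the same multiset of column vectors $\{p\,C_1(\cdot,y)\} \cup \{\overline{p}\, C_2(\cdot, y')\}$, differing only in how the columns are tagged; hence each column of one side occurs verbatim as a column of the other, which is trivially a convex combination of itself, and Lemma~\ref{lemma:equivalent-channels-conditions} yields the equivalence. For associativity I would expand both sides to the three-way concatenations $p\,C_1 \conc \overline{p} q\, C_2 \conc \overline{p}\,\overline{q}\, C_3$ and $p\,C_1 \conc q\overline{p}\, C_2 \conc \overline{q}\,\overline{p}\, C_3$; since $\overline{p} q = q \overline{p}$ and $\overline{p}\,\overline{q} = \overline{q}\,\overline{p}$, these again have identical column multisets, and the hypothesis $q \neq 0$ is exactly what makes the factor $\nicefrac{1}{q}$ on the right-hand side well defined. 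These two cases mirror the corresponding derivations for the hidden operator in Proposition~\ref{prop:properties-binary-hidden-choice}, the only difference being that visible choice alters the output alphabet, which is why the conclusion must be stated up to $\equiv$ rather than $=$.

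Idempotency is the one case that genuinely uses the full strength of the lemma, and I expect it to be the main obstacle. Here $C_1 \vchoice{p} C_1$ has twice as many columns as $C_1$: for each output $y$ it contains the two proportional columns $p\,C_1(\cdot,y)$ and $\overline{p}\, C_1(\cdot, y)$. One direction is easy, since each such column splits as $p\,C_1(\cdot,y) = p\cdot C_1(\cdot,y) + \overline{p} \cdot \mathbf{0}$, a convex combination of a column of $C_1$ and the zero column of $C_1^{0}$. The converse direction is the delicate one: a column $C_1(\cdot, y)$ must be recovered from the columns of $(C_1 \vchoice{p} C_1)^{0}$, which is possible precisely because $p\,C_1(\cdot,y)$ and $\overline{p}\,C_1(\cdot,y)$ are proportional and the zero-column extension supplies the slack needed to balance the coefficients, exactly as in the proof of Proposition~\ref{prop:idempotency}(b). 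This is also the point that explains why equivalence, and not equality, is the right relation here: $C_1 \vchoice{p} C_1$ and $C_1$ have different output sets, yet they collapse to the same abstract channel once proportional columns are merged.

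Finally, I would note that each property is the visible-choice counterpart of an already-established hidden-choice property, and that the systematic discrepancy, namely $=$ for hidden choice in Proposition~\ref{prop:properties-binary-hidden-choice} versus $\equiv$ for visible choice here, stems entirely from the fact that the visible operator concatenates, and thus tags, outputs rather than summing the channel matrices pointwise.
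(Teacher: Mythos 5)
Your proposal is correct and follows essentially the same route as the paper: unfold the binary operator into scaled concatenations and settle each of the three properties by appealing to Lemma~\ref{lemma:equivalent-channels-conditions}, exactly as the paper's Appendix proof does for commutativity and associativity (and, for idempotency, via a one-line citation of that lemma). The only divergence is that you elaborate the idempotency case more than the paper does; be aware that your specific claim there --- that $C_1(\cdot,y)$ is recovered as a convex combination of $p\,C_1(\cdot,y)$, $\overline{p}\,C_1(\cdot,y)$ and the zero column --- does not literally hold for $0<p<1$, since such combinations can only reach scalings up to $\max(p,\overline{p})<1$; this looseness is inherited from the statement of Lemma~\ref{lemma:equivalent-channels-conditions} itself (the intended characterization merges proportional columns by summation) and affects the paper's own proof in exactly the same way, so it is not a gap relative to the paper.
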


\begin{proof}

We will prove each property separately.

\begin{enumerate}
\renewcommand{\labelenumi}{\alph{enumi})}

\item \emph{Idempotency}: $C_{1} \vchoice{p} C_{1} \equiv C_{1}$, by 
immediate application of Lemma~\ref{lemma:equivalent-channels-conditions},
since every column of the channel on each side of the equivalence 
can be written as a convex combination of the zero-column extension
of the channel on the other side.

\item \emph{Commutativity}:
\begin{align*}
C_{1} \vchoice{p} C_{2} 
=&\, p \cdot C_{1} \conc \overline{p} \cdot C_{2} & \text{(def. of visible choice)} \\
\equiv&\ \overline{p} \cdot C_{2} \conc p \cdot C_{1} & \text{(by Lemma~\ref{lemma:equivalent-channels-conditions})} \\
=&\ C_{2} \vchoice{\overline{p}} C_{1} & \text{(def. of visible choice).} \\
\end{align*}
In the above derivation, we can apply 
Lemma~\ref{lemma:equivalent-channels-conditions}
because every column of the channel on each side of the equivalence 
can be written as a convex combination of the zero-column extension
of the channel on the other side.

\item \emph{Associativity}:
\begin{align*}
C_{1} \vchoice{p} ( C_{2} \vchoice{q} C_{3} ) 
=&\, p \cdot C_{1} \conc \overline{p} ( q \cdot C_{2} \conc \overline{q} \cdot C_{3} ) & \text{(def. of visible choice)} \\
\equiv&\, p \cdot C_{1} \conc \left( \overline{p}q \cdot C_{2} \conc \overline{p}\overline{q} \cdot C_{3} \right) & \text{(by Lemma~\ref{lemma:equivalent-channels-conditions})} \\
=&\, q(p (\nicefrac{1}{q} \cdot C_{1}) \conc \overline{p} \cdot C_{2}) \conc \overline{q}(\overline{p} \cdot C_{3}) & \text{} \\
=&\,(\nicefrac{1}{q} \cdot C_{1} \vchoice{p} C_{2}) \vchoice{q} \overline{p} \cdot C_{3} & \text{(def. of visible choice)}
\end{align*}
In the above derivation, we can apply 
Lemma~\ref{lemma:equivalent-channels-conditions}
because every column of the channel on each side of the equivalence 
can be written as a convex combination of the zero-column extension
of the channel on the other side.

\end{enumerate}
\end{proof}

Now we turn our attention to the interaction between
hidden and visible choice operators.

A first result is that hidden choice does not distribute 
over visible choice.
To see why, note that $C_{1} \hchoice{p} (C_{2} \vchoice{q} C_{3})$
and $(C_{1} \hchoice{p} C_{2}) \vchoice{q} (C_{1} \hchoice{p} C_{3})$
cannot be both defined: if the former is defined, then $C_{1}$ must have
the same type as $C_{2} \vchoice{q} C_{3}$, whereas if the
latter is defined, $C_{1}$ must have the same type as $C_{2}$, but
$C_{2} \vchoice{q} C_{3}$ and $C_{2}$ do not have the same type
(they have different output sets).

However, as the next result shows, visible choice distributes over 
hidden choice.

\begin{restatable}[Distributivity of $\vchoice{p}$ over $\hchoice{q}$]{prop}{resdistributivity}
\label{prop:distributivity}
Let $C_{1}$, $C_{2}$ and $C_{3}$ be compatible channels,
and $C_{2}$ and $C_{3}$ have the same type.
Then, for any values $0 \leq p,q \leq 1$:
\begin{align*}
C_{1} \vchoice{p} (C_{2} \hchoice{q} C_{3}) \equiv (C_{1} \vchoice{p} C_{2}) \hchoice{q} (C_{1} \vchoice{p} C_{3}).
\end{align*}
\end{restatable}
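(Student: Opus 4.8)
The plan is to establish the identity by a direct entrywise computation that distinguishes the two groups of columns created by the visible-choice (concatenation) operator, and then to observe that the two sides are in fact genuinely equal as matrices, from which the claimed equivalence $\equiv$ follows immediately.

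First I would fix the types. Here $C_1$ has type $\calx \times \caly_1 \to \reals$ and, since $C_2 \hchoice{q} C_3$ must be defined, $C_2$ and $C_3$ share a common type $\calx \times \caly_{23} \to \reals$. Consequently both sides are channels of type $\calx \times (\caly_1 \sqcup \caly_{23}) \to \reals$, with columns tagged either by $1$ (those inherited from $C_1$) or by $2$ (those inherited from $C_2$, resp.\ $C_3$). Checking that the two sides have the same type is the first checkpoint, because it guarantees that the hidden choice on the right-hand side is well defined (it combines two channels of identical type) and that the column-tag structure on the two sides coincides.

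Next I would expand both sides using the definitions of the binary operators. On the left, $C_1 \vchoice{p}(C_2 \hchoice{q} C_3) = p\,C_1 \conc \overline{p}\,(q\,C_2 + \overline{q}\,C_3)$, whose tag-$1$ columns carry $p\,C_1(x,y)$ and whose tag-$2$ columns carry $\overline{p}\,q\,C_2(x,y) + \overline{p}\,\overline{q}\,C_3(x,y)$. On the right, $(C_1 \vchoice{p} C_2)\hchoice{q}(C_1 \vchoice{p} C_3) = q\,(p\,C_1 \conc \overline{p}\,C_2) + \overline{q}\,(p\,C_1 \conc \overline{p}\,C_3)$, which is an entrywise sum of two channels of the same type. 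The crucial step is to read off this sum tag by tag: on the tag-$2$ columns it yields $q\,\overline{p}\,C_2(x,y) + \overline{q}\,\overline{p}\,C_3(x,y)$, matching the left-hand side, while on the tag-$1$ columns the two copies of $p\,C_1$ combine as $q\,p\,C_1(x,y) + \overline{q}\,p\,C_1(x,y) = p\,C_1(x,y)$, again matching the left-hand side.

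The main subtlety, and the step I expect to require the most care, is precisely this bookkeeping of tags: one must verify that the two occurrences of $C_1$ inside the two visible choices on the right-hand side are identified under the hidden choice, so that they merge (using $q + \overline{q} = 1$) into a single $p\,C_1$ block rather than being kept as distinct tagged columns. Once this is checked, the two matrices agree entry by entry, so the identity holds with equality, and in particular with the weaker relation $\equiv$. Alternatively, for uniformity with the other proofs in this appendix, the same conclusion can be reached by invoking Lemma~\ref{lemma:equivalent-channels-conditions}, since every column of each side is (trivially) a convex combination of the zero-column-extended columns of the other.
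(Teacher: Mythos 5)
Your proof is correct, and it takes a more direct route than the paper's. You expand both sides and verify, column group by column group, that the matrices agree entry by entry (the tag-$1$ block merging via $q+\overline{q}=1$ into a single $p\,C_1$, the tag-$2$ block giving $\overline{p}q\,C_2+\overline{p}\overline{q}\,C_3$ on both sides), which yields genuine matrix \emph{equality} and hence a fortiori the equivalence $\equiv$. The paper instead argues algebraically: it first rewrites $C_1$ as $C_1\hchoice{q}C_1$ by idempotency, distributes the scalars, and then invokes the channel-equivalence characterization (Lemma~\ref{lemma:equivalent-channels-conditions}) at the one step where a sum of concatenations is exchanged with a concatenation of sums, before reassembling the right-hand side; this style matches the other proofs in the appendix but only certifies $\equiv$ at that step (even though, as your computation shows, the exchanged matrices actually coincide because the two concatenations share the same tagged output set). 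Your approach buys a stronger conclusion and avoids the equivalence lemma entirely; its only obligations are the type bookkeeping you already carry out, namely that $C_1\vchoice{p}C_2$ and $C_1\vchoice{p}C_3$ have the identical type $\calx\times(\caly_1\sqcup\caly_{23})\rightarrow\reals$ so that the outer hidden choice is well defined and the tagged columns line up. Your closing remark that one could alternatively fall back on Lemma~\ref{lemma:equivalent-channels-conditions} brings you essentially back to the paper's argument.
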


\begin{proof}
\begin{align*}
 &\, C_{1} \vchoice{p} (C_{2} \hchoice{q} C_{3}) \\
=&\, (C_{1} \hchoice{q} C_{1}) \vchoice{p} (C_{2} \hchoice{q} C_{3}) & \text{(idempotency of visible choice)} \\
=&\, p (q \cdot C_{1} + \overline{q} \cdot C_{1}) \conc \overline{p}(q \cdot C_{2} + \overline{q} \cdot C_{3}) & \text{(def. of operators)} \\
=&\, (pq \cdot C_{1} + p\overline{q} \cdot C_{1}) \conc (\overline{p}q \cdot C_{2} + \overline{p}\overline{q} \cdot C_{3}) & \text{} \\
\equiv&\,  (pq \cdot C_{1} \conc \overline{p}q \cdot C_{2}) + (p\overline{q} \cdot C_{1} \conc \overline{p}\overline{q} \cdot C_{3}) & \text{(by Lemma~\ref{lemma:equivalent-channels-conditions})} \\
=&\, q(p \cdot C_{1} \conc \overline{p} \cdot C_{2}) + \overline{q}(p \cdot C_{1} \conc \overline{p} \cdot C_{3}) & \text{} \\
=&\, q(C_{1} \vchoice{p} C_{2}) + \overline{q}(C_{1} \vchoice{p} C_{3})  & \text{(def. of visible choice)} \\
=&\, (C_{1} \vchoice{p} C_{2}) \hchoice{q} (C_{1} \vchoice{p} C_{3})  & \text{(def. of hidden choice)}
\end{align*}
In the above derivation, we can apply 
Lemma~\ref{lemma:equivalent-channels-conditions}
because every column of the channel on each side of the equivalence 
can be written as a convex combination of the zero-column extension
of the channel on the other side.
\end{proof}

\externalbibliography{yes}
\bibliography{short,new}

\begin{thebibliography}{-------}
\providecommand{\natexlab}[1]{#1}

\bibitem[Sun \em{et~al.}(2002)Sun, Simon, Wang, Russell, Padmanabhan, and
  Qiu]{sun:02:SandP}
Sun, Q.; Simon, D.R.; Wang, Y.M.; Russell, W.; Padmanabhan, V.N.; Qiu, L.
\newblock Statistical identification of encrypted web browsing traffic.
\newblock  Proc. of S{\&}P. IEEE,  2002, pp. 19--30.

\bibitem[Dwork \em{et~al.}(2006)Dwork, Mcsherry, Nissim, and
  Smith]{Dwork:06:TCC}
Dwork, C.; Mcsherry, F.; Nissim, K.; Smith, A.
\newblock Calibrating noise to sensitivity in private data analysis.
\newblock  Proc. of TCC. Springer,  2006, Vol. 3876, {\em LNCS}, pp. 265--284.

\bibitem[Chaum(1988)]{Chaum:88:JC}
Chaum, D.
\newblock The Dining Cryptographers Problem: Unconditional Sender and Recipient
  Untraceability.
\newblock {\em Journal of Cryptology} {\bf 1988}, {\em 1},~65--75.

\bibitem[Boreale and Pampaloni(2015)]{Boreale:15:LMCS}
Boreale, M.; Pampaloni, F.
\newblock Quantitative information flow under generic leakage functions and
  adaptive adversaries.
\newblock {\em Logical Methods in Computer Science} {\bf 2015}, {\em 11}.

\bibitem[Mardziel \em{et~al.}(2014)Mardziel, Alvim, Hicks, and
  Clarkson]{Mardziel:14:SP}
Mardziel, P.; Alvim, M.S.; Hicks, M.W.; Clarkson, M.R.
\newblock Quantifying Information Flow for Dynamic Secrets.
\newblock  Proc. of S{\&}P,  2014, pp. 540--555.

\bibitem[Alvim \em{et~al.}(2017)Alvim, Chatzikokolakis, Kawamoto, and
  Palamidessi]{Alvim:17:GameSec}
Alvim, M.S.; Chatzikokolakis, K.; Kawamoto, Y.; Palamidessi, C.
\newblock Information Leakage Games.
\newblock  Proc. of GameSec. Springer,  2017, Vol. 10575, {\em LNCS}, pp.
  437--457.

\bibitem[Rizzo and Duong(2012)]{Rizzo:12:Ekoparty}
Rizzo, J.; Duong, T.
\newblock The {CRIME} attack,  2012.

\bibitem[Alvim \em{et~al.}(2016)Alvim, Chatzikokolakis, McIver, Morgan,
  Palamidessi, and Smith]{Alvim:16:CSF}
Alvim, M.S.; Chatzikokolakis, K.; McIver, A.; Morgan, C.; Palamidessi, C.;
  Smith, G.
\newblock Axioms for Information Leakage.
\newblock  Proc. of CSF,  2016, pp. 77--92.

\bibitem[Smith(2009)]{Smith:09:FOSSACS}
Smith, G.
\newblock On the Foundations of Quantitative Information Flow.
\newblock  Proc. of FOSSACS. Springer,  2009, Vol. 5504, {\em LNCS}, pp.
  288--302.

\bibitem[Chatzikokolakis \em{et~al.}(2008)Chatzikokolakis, Palamidessi, and
  Panangaden]{Chatzikokolakis:08:JCS}
Chatzikokolakis, K.; Palamidessi, C.; Panangaden, P.
\newblock On the {Bayes} risk in information-hiding protocols.
\newblock {\em J. of Comp. Security} {\bf 2008}, {\em 16},~531--571.

\bibitem[Shannon(1948)]{Shannon:48:Bell}
Shannon, C.E.
\newblock A Mathematical Theory of Communication.
\newblock {\em Bell System Technical Journal} {\bf 1948}, {\em 27},~379--423,
  625--56.

\bibitem[Massey(1994)]{Massey:94:IT}
Massey.
\newblock Guessing and Entropy.
\newblock  Proceedings of the IEEE Int. Symposium on Information Theory. IEEE,
  1994, p. 204.

\bibitem[Alvim \em{et~al.}(2012)Alvim, Chatzikokolakis, Palamidessi, and
  Smith]{Alvim:12:CSF}
Alvim, M.S.; Chatzikokolakis, K.; Palamidessi, C.; Smith, G.
\newblock Measuring Information Leakage Using Generalized Gain Functions.
\newblock  Proc. of CSF,  2012, pp. 265--279.

\bibitem[Alvim \em{et~al.}(2018)Alvim, Chatzikokolakis, Kawamoto, and
  Palamidessi]{Alvim:18:POST}
Alvim, M.S.; Chatzikokolakis, K.; Kawamoto, Y.; Palamidessi, C.
\newblock Leakage and protocol composition in a game-theoretic perspective.
\newblock  Proceedings of POST. Springer,  2018, LNCS.

\bibitem[Osborne and Rubinstein(1994)]{Osborne:94:BOOK}
Osborne, M.J.; Rubinstein, A.
\newblock {\em A Course in Game Theory}; The MIT Press,  1994.

\bibitem[Braun \em{et~al.}(2009)Braun, Chatzikokolakis, and
  Palamidessi]{Braun:09:MFPS}
Braun, C.; Chatzikokolakis, K.; Palamidessi, C.
\newblock Quantitative Notions of Leakage for One-try Attacks.
\newblock  Proc. of MFPS. Elsevier,  2009, Vol. 249, {\em ENTCS}, pp. 75--91.

\bibitem[McIver \em{et~al.}(2014)McIver, Morgan, Smith, Espinoza, and
  Meinicke]{McIver:14:POST}
McIver, A.; Morgan, C.; Smith, G.; Espinoza, B.; Meinicke, L.
\newblock Abstract Channels and Their Robust Information-Leakage Ordering.
\newblock  Proc. of POST. Springer,  2014, Vol. 8414, {\em LNCS}, pp. 83--102.

\bibitem[Basar(1983)]{Basar:83:TIT}
Basar, T.
\newblock The Gaussian test channel with an intelligent jammer.
\newblock {\em {IEEE} Trans. Information Theory} {\bf 1983}, {\em
  29},~152--157.

\bibitem[Grossklags \em{et~al.}(2008)Grossklags, Christin, and
  Chuang]{Grossklags:08:WWW}
Grossklags, J.; Christin, N.; Chuang, J.
\newblock Secure or Insure?: A Game-theoretic Analysis of Information Security
  Games.
\newblock  Proc. of {WWW},  2008, pp. 209--218.

\bibitem[Alpcan and Buchegger(2011)]{Alpcan:11:TMC}
Alpcan, T.; Buchegger, S.
\newblock Security Games for Vehicular Networks.
\newblock {\em {IEEE} Trans. Mob. Comput.} {\bf 2011}, {\em 10},~280--290.

\bibitem[Katz(2008)]{Katz:08:TCC}
Katz, J.
\newblock Bridging Game Theory and Cryptography: Recent Results and Future
  Directions.
\newblock  Proc. of {TCC},  2008, pp. 251--272.

\bibitem[Acquisti \em{et~al.}(2003)Acquisti, Dingledine, and
  Syverson]{Acquisti:03:FC}
Acquisti, A.; Dingledine, R.; Syverson, P.F.
\newblock On the Economics of Anonymity.
\newblock  Proc. of {FC},  2003, pp. 84--102.

\bibitem[Freudiger \em{et~al.}(2009)Freudiger, Manshaei, Hubaux, and
  Parkes]{Freudiger:09:CCS}
Freudiger, J.; Manshaei, M.H.; Hubaux, J.P.; Parkes, D.C.
\newblock On Non-cooperative Location Privacy: A Game-theoretic Analysis.
\newblock  Proc. of {CCS},  2009, pp. 324--337.

\bibitem[Zhu \em{et~al.}(2009)Zhu, Fung, Boutaba, and Basar]{Zhu:09:GAMENETS}
Zhu, Q.; Fung, C.J.; Boutaba, R.; Basar, T.
\newblock A game-theoretical approach to incentive design in collaborative
  intrusion detection networks.
\newblock  Proc. of {GAMENETS}. {IEEE},  2009, pp. 384--392.

\bibitem[Manshaei \em{et~al.}(2013)Manshaei, Zhu, Alpcan, Bac\c{s}ar, and
  Hubaux]{Manshaei:13:ACMCS}
Manshaei, M.H.; Zhu, Q.; Alpcan, T.; Bac\c{s}ar, T.; Hubaux, J.P.
\newblock Game Theory Meets Network Security and Privacy.
\newblock {\em ACM Comput. Surv.} {\bf 2013}, {\em 45},~25:1--25:39.

\bibitem[Korzhyk \em{et~al.}(2011)Korzhyk, Yin, Kiekintveld, Conitzer, and
  Tambe]{Korzhyk:11:JAIR}
Korzhyk, D.; Yin, Z.; Kiekintveld, C.; Conitzer, V.; Tambe, M.
\newblock Stackelberg vs. Nash in Security Games: An Extended Investigation of
  Interchangeability, Equivalence, and Uniqueness.
\newblock {\em J. Artif. Intell. Res.} {\bf 2011}, {\em 41},~297--327.

\bibitem[Khouzani and Malacaria(2016)]{Khouzani:16:CSF}
Khouzani, M.H.R.; Malacaria, P.
\newblock Relative Perfect Secrecy: Universally Optimal Strategies and Channel
  Design.
\newblock  Proc. of CSF,  2016, pp. 61--76.

\bibitem[Alon \em{et~al.}(2013)Alon, Emek, Feldman, and
  Tennenholtz]{Alon:13:SIAMDM}
Alon, N.; Emek, Y.; Feldman, M.; Tennenholtz, M.
\newblock Adversarial Leakage in Games.
\newblock {\em {SIAM} J. Discrete Math.} {\bf 2013}, {\em 27},~363--385.

\bibitem[Xu \em{et~al.}(2015)Xu, Jiang, Sinha, Rabinovich, Dughmi, and
  Tambe]{Xu:15:IJCAI}
Xu, H.; Jiang, A.X.; Sinha, A.; Rabinovich, Z.; Dughmi, S.; Tambe, M.
\newblock Security Games with Information Leakage: Modeling and Computation.
\newblock  Proc. of {IJCAI},  2015, pp. 674--680.

\bibitem[Khouzani \em{et~al.}(2015)Khouzani, Mardziel, Cid, and
  Srivatsa]{Khouzani:15:CSF}
Khouzani, M.H.R.; Mardziel, P.; Cid, C.; Srivatsa, M.
\newblock Picking vs. Guessing Secrets: {A} Game-Theoretic Analysis.
\newblock  Proc. of {CSF},  2015, pp. 243--257.

\bibitem[Yang \em{et~al.}(2012)Yang, Sassone, and Hamadou]{Yang:12:POST}
Yang, M.; Sassone, V.; Hamadou, S.
\newblock A Game-Theoretic Analysis of Cooperation in Anonymity Networks.
\newblock  Proc. of {POST},  2012, pp. 269--289.

\bibitem[Shokri \em{et~al.}(2017)Shokri, Theodorakopoulos, and
  Troncoso]{Shokri:17:TOPS}
Shokri, R.; Theodorakopoulos, G.; Troncoso, C.
\newblock Privacy Games Along Location Traces: {A} Game-Theoretic Framework for
  Optimizing Location Privacy.
\newblock {\em {ACM} Trans. Priv. Secur.} {\bf 2017}, {\em 19},~11:1--11:31.

\bibitem[Kawamoto \em{et~al.}(2017)Kawamoto, Chatzikokolakis, and
  Palamidessi]{Kawamoto:17:LMCS}
Kawamoto, Y.; Chatzikokolakis, K.; Palamidessi, C.
\newblock On the Compositionality of Quantitative Information Flow.
\newblock {\em {Logical Methods in Computer Science}} {\bf 2017}, {\em
  13},~1--31.

\bibitem[Kawamoto \em{et~al.}(2016)Kawamoto, Biondi, and Legay]{KawamotoBL16}
Kawamoto, Y.; Biondi, F.; Legay, A.
\newblock Hybrid Statistical Estimation of Mutual Information for Quantifying
  Information Flow.
\newblock  Proc. of {FM} 2016,  2016, pp. 406--425.

\end{thebibliography}

\end{document}